\DeclareMathAlphabet{\mathcalligra}{T1}{calligra}{m}{n}
\newtheorem{theorem}{Theorem}
\begin{document}
\title{An implicit unified gas-kinetic wave-particle method for radiative transport process}
\author[ad1]{Chang Liu}
\ead{liuchang@iapcm.ac.cn}
\author[ad1]{Weiming Li\corref{cor1}}
\ead{li\_weiming@iapcm.ac.cn}
\author[ad5]{Yanli Wang}
\ead{wang\_yanli@csrc.ac.cn}
\author[ad1,ad2]{Peng Song}
\ead{song\_peng@iapcm.ac.cn}
\author[ad3,ad4]{Kun Xu}
\ead{makxu@ust.hk}
\address[ad1]{Institute of Applied Physics and Computational Mathematics, Beijing, China}
\address[ad2]{HEDPS, Center for Applied Physics and Technology, College of Engineering, Peking University, Beijing, China}
\address[ad3]{Department of Mathematics, Hong Kong University of Science and Technology, Hong Kong}
\address[ad4]{Department of Mechanical and Aerospace Engineering, Hong Kong University of Science and Technology, Hong Kong, China}
\address[ad5]{Beijing Computational Science Research Center, Beijing, China, 100193}
\cortext[cor1]{Corresponding author}

\begin{abstract}
The unified gas-kinetic wave-particle method (UGKWP) has been developed for the multiscale gas, plasma, and multiphase flow transport processes for the past years. 
In this work, we propose an implicit unified gas-kinetic wave-particle (IUGKWP) method to remove the Courant–Friedrichs–Lewy (CFL) time step constraint. 
Based on the local integral solution of the radiative transfer equation (RTE),
the particle transport processes are categorized into 
the long-$\lambda$ streaming process and the short-$\lambda$ streaming process 
comparing to a local physical characteristic time $t_p$.
In the construction of the IUGKWP method,
the long-$\lambda$ streaming process is tracked by the implicit Monte Carlo (IMC) method;
the short-$\lambda$ streaming process is evolved by solving the implicit moments equations;
and the photon distribution is closed by a local integral solution of RTE.
In the IUGKWP method, the multiscale flux of radiation energy and the multiscale closure of photon distribution
are constructed based on the local integral solution.
The IUGKWP method preserves the second-order asymptotic expansion of RTE in the optically thick regime
and adapts its computational complexity to the flow regime.
The numerical dissipation is well controlled, and the teleportation error is significantly reduced in the optically thick regime.
The computational complexity of the IUGKWP method decreases exponentially as the Knudsen number approaches zero, 
and the computational efficiency is remarkably improved in the optically thick regime.
The IUGKWP is formulated on a generalized unstructured mesh, and multidimensional 2D and 3D algorithms are developed. 
Numerical tests are presented to validate the capability of IUGKWP in capturing the multiscale photon transport process.
The algorithm and code will apply in the engineering applications of inertial confinement fusion (ICF).
\end{abstract}
\begin{keyword}
radiative transfer equation, asymptotic-preserving method, regime-adaptive method, implicit method, unified gas-kinetic wave-particle method
\end{keyword}
\maketitle

\section{Introduction}
The radiative transport process is one of the fundamental energy transfer processes in the high-energy-density physics,
such as in astrophysics and the inertial confinement fusion (ICF) \cite{lan2022dream,chen2022determination}.
The study of radiation physics dates back to the early nineteenth century.
In the early 1900s, a mathematical relationship was formulated by Planck 
to explain the spectral-energy distribution of radiation emitted by a black body, 
known as the Planck's radiation law.
Ever since then, radiation and photon transport have been studied extensively.
The radiative transport equation (RTE) is the mathematical equation system 
that describes the photon transport and interaction with the medium.
The RTE models the photon transport on the mesoscopic scale of the mean free path, 
and describe the evolution of specific intensity.
The multiscale physics of photon transport has been understood mathematically by the asymptotic theories,
such as the Hilbert expansion and the Chapman-Enskog expansion \cite{chapman1990mathematical,larsen1987asymptotic},
that bridge the mesoscopic kinetic equation and the macroscopic diffusion equation.
Characterized by the Knudsen number, 
i.e., the ratio between the photon mean free path to the characteristic length,
the flow regimes can be divided into the optically thin regime, 
the transitional regime, and the optically thick regime.
In the optically thick regime, the high-order RTE degenerate into a low-order diffusive equation,
and the diffusive coefficient is proportional to the reciprocal of the scattering coefficient.
Though the physical and mathematical theories of radiative transport have been well-established,
accurate and effective numerical methods and robust simulation programs are still highly demanded 
for engineering applications in high-energy-density physics.

Over the past decades, efforts have been made to construct 
concise moment models and effective numerical methods. 
The moment models, such as the spherical harmonics ($P_N$) model, 
expands the specific intensity in a specific functional space to reduce model order,
and at the same time preserves the essential physical properties such as the rotational invariance \cite{larsen1996asymptotic,cai2012efficient,fu2022asymptotic}.
The machine-learning models have also been developed in recent years, 
by implanting the neural network in the closure modeling \cite{huang2022machine,li2021learning}.
For the numerical methods, the kinetic equation solvers can be categorized into 
the deterministic discrete ordinate ($S_N$) method \cite{sun2015asymptotic,mieussens2000discrete}
and the stochastic Monte Carlo (MC) method \cite{fleck1971implicit,gentile2001implicit,gentile2016iterative}.
The $S_N$ methods use quadrature to discretize phase space, and the MC methods use stochastic particles.
The asymptotic persevering (AP) property is important in the construction of multiscale numerical schemes,
which states that the discretized numerical scheme preserves the asymptotics of RTE,
i.e., the collisionless Boltzmann equation in the vacuum regime 
and the diffusion equation in the optically thick regime,
without severe restrictions on the numerical resolution \cite{jin2010asymptotic}.
The asymptotic preserving schemes have been developed under the framework of $S_N$ \cite{jin1999ap} and MC \cite{shi2020asymptotic}, with a well-controlled numerical dissipation.
To achieve high resolution,
the high-order schemes have been developed under the discrete Galerkin (DG) framework \cite{xiong2022high}.
To overcome the high computational cost of solving the high dimensional RTE, 
acceleration techniques have been developed.
The diffusive synthetic acceleration (DSA) proposed by Larsen et al. \cite{morel1982synthetic} is a monumental achievement that couples the evolution of the high-order microscopic kinetic equation and the low-order macroscopic diffusion equation.
In recent years, the general synthetic iterative scheme \cite{su2020fast},
the high-order low-order (HOLO) coupling scheme \cite{chacon2017multiscale},
the variance-reduced methods \cite{sadr2023variance},
macro-micro decomposition methods \cite{gamba2019micro},
fast kinetic method \cite{dimarco2018efficient}
have been well developed to improve accuracy and boost efficiency.

The unified gas-kinetic scheme (UGKS) proposed by Xu et al. achieves high accuracy and efficiency in the simulation of multiscale transport processes of gas, radiation, plasma, and multiphase flow \cite{xu2010unified,liu2016unified,mieussens2013asymptotic,liu2019unified,liu2017unified,liu2020unifiedlinear}.
The UGKS provides a unified-preserving framework in the construction of kinetic schemes \cite{guo2023unified},
under which the discrete unified gas kinetic scheme (DUGKS) \cite{guo2013discrete,guo2021progress},
the unified gas-kinetic particle (UGKP) method and 
the unified gas-kinetic wave-particle (UGKWP) method is constructed
and applied in the multiscale transport processes \cite{liu2020unified,zhu2019unified,li2020unified,xu2021unified,yang2021unified,liu2021unified}.
A unified-preserving property is proposed to measure 
the capability of a numerical scheme to preserve the asymptotic limits 
on a space-time resolution larger than $O(\text{Kn}^{1/2})$ \cite{guo2023unified}.

In this work, an implicit UGKWP method is developed that removes 
the stiff-source constraint and CFL constraint on the time step.
Therefore, the time step can be chosen purely according to the local time resolution.
The local non-equilibrium flow physics, including the entropy,  
is preserved under various numerical resolutions.
The IUGKWP has the property of regime-adaptive, 
which states that the scheme adapts its degree of freedom (DOF) to the local flow regime.
In the optically thin regime, the DOF of IUGKWP is consistent with the implicit Monte Carlo (IMC) method. 
In the optically thick regime, the DOF of IUGKWP exponentially degenerates to a diffusion scheme.
We develop multidimensional codes for 2D and 3D radiative transport simulations.
The algorithms and codes are validated by a series of numerical tests.

The rest of this paper is organized as follows.
In Section \ref{section_rte}, we briefly introduce the kinetic model of photon transport, 
namely the radiative transfer equation.
We will also review the asymptotic theory and derive the local integral solution to RTE.
In Section \ref{section_ugkwp}, the implicit unified gas kinetic wave-particle is presented, 
including the three-step algorithm, namely 
(i) tracking long-$\lambda$ transport process by IMC; 
(ii) update short-$\lambda$ transport process by implicit moments equations;
(iii) photon distribution closure.
In Section \ref{section_analysis}, we will analyze the numerical property of the scheme, 
i.e., the asymptotic preserving (AP) property and regime adaptive property.
The numerical tests are presented in Section \ref{section_tests}, 
and Section \ref{section_conclusion} is the conclusion.

\section{Radiative transport model and asymptotic theory}\label{section_rte}
The physics of photon transport process is described by the radiative transfer equation 
on the mesoscopic mean free path scale.
The radiative transfer equation system consists of the Boltzmann equation of photon transport and 
the evolution equation of material temperature.
The Boltzmann equation describes the physical process of photon streaming, emission, absorption, and scattering.
In this paper, we consider the emission and absorption process, and the radiative transfer equation is written as
\begin{equation}\label{eq_RTE}
    \left\{
        \begin{aligned}
        &\frac{1}{c}\frac{\partial I}{\partial t}+\vec{\Omega}\cdot\nabla I=\sigma\left(B-I\right),\\
        &\frac{\partial C_v T}{\partial t}\equiv\frac{\partial u}{\partial t}=
        \int_{\mathcal{S}^2} \int_{\mathcal{R}}\sigma\left( I -
        B \right)\mathrm{d}\nu \mathrm{d}\vec{\Omega}.
        \end{aligned}
    \right.
\end{equation}
The spatial variable is denoted by $\vec{x}$,
the time variable is $t$,
the frequency variable is $\nu$,
and the angular variable is $\vec{\Omega}$.
The physical constants are the speed of light $c$,
the Boltzmann constant $k$,
and the Planck constant $h$.
$I(\vec{x},t,\vec{\Omega},\nu)$ is the spectral intensity,
$T(\vec{x},t)$ is the material temperature,
$C_v$ is the specific heat opacity,
$\sigma(\vec{x},\nu,t)$ is the opacity,
and $u(\vec{x},t)$ is the material energy density.
The emission radiance follows the Planck distribution
\begin{equation}\label{eq_planck}
    B(\nu,T)=\frac{2h\nu^3}{c^2}\frac{1}{\exp\left(\frac{h\nu}{kT}\right)-1},
\end{equation}
the zeroth order moment of which gives the radiant flux $\phi(\vec{x},t)$,
\begin{equation}\label{eq_radiantflux}
    \phi(\vec{x},t)=\int_{\mathcal{S}^2}\int_\mathcal{R} B(\nu,T) \mathrm{d} \nu \mathrm{d} \vec{\Omega}
        =\frac{8k^2\pi^5}{15h^3c^3}cT^4.
\end{equation}
The constant $a=\frac{8k^2\pi^5}{15h^3c^3}$ is referred to as the radiation constant.
The zeroth order moment of the spectral intensity $I$ is the radiation flux $\rho(\vec{x},t)$,
\begin{equation}\label{eq_radiantfluxrho}
    \rho(\vec{x},t)=\int_{\mathcal{S}^2}\int_\mathcal{R} I(\vec{x},t,\vec{\Omega},\nu) \mathrm{d} \nu \mathrm{d} \vec{\Omega}.
\end{equation}

The macroscopic energy equations can be derived by integrating Eq.\eqref{eq_RTE} in angular and frequency space,
\begin{equation}\label{eq_RTEmac}
    \left\{
      \begin{aligned}
        &\frac{\partial \rho}{\partial t}+\nabla\cdot \vec{F}=
        \frac{c}{\varepsilon^2}\left(<\sigma,B>-<\sigma,I>\right),\\
        &\frac{ \partial C_v T}{\partial t}=
        \frac{1}{\varepsilon^2}\left(<\sigma,I>-<\sigma,B>\right).
      \end{aligned}
    \right.
\end{equation}
Here $<\sigma,I>$ and $<\sigma,B>$ is the moment of spectral intensity, defined as
\begin{equation}\label{zeromoment}
\begin{aligned}
    &<\sigma,I>=\int_{\mathcal{S}^2} \int_{\mathcal{R}} \sigma(\vec{x},\nu,T) I(\vec{x},\vec{\Omega},\nu,t) \mathrm{d}\nu \mathrm{d}\vec{\omega},\\
    &<\sigma,B>=\int_{\mathcal{S}^2} \int_{\mathcal{R}} \sigma(\vec{x},\nu,T) B(\vec{x},\vec{\Omega},\nu,t) \mathrm{d}\nu \mathrm{d}\vec{\omega}.
\end{aligned}
\end{equation}
The first order moment of spectral intensity $\vec{F}(\vec{x},t)$ describes the spatial flux of $\rho$,
\begin{equation}\label{firstmoment}
    \vec{F}(\vec{x},t)=\int_{\mathcal{S}^2}\frac{c\vec{\Omega}}{\varepsilon} I(\vec{x},\vec{\Omega},t) \mathrm{d}\vec{\Omega}.
\end{equation}
In a certain flow regime, the energy equation \eqref{eq_RTEmac} can be closed 
with a specific closure modeling of radiant intensity.

In the optically thick regime, the closure of radiant intensity can be derived by asymptotic theory, 
and the radiant transfer equation degenerates to the diffusion equation \cite{chapman1990mathematical,larsen1987asymptotic}.
The physical quantities are re-scaled to order $O(1)$ as following 
\begin{equation}
    \begin{aligned}
        &\hat{t}=t/t_\infty, \quad \hat{x}=x/x_\infty, \quad 
        \hat{c}=c/v_\infty, \quad \hat{\sigma}=\sigma/\sigma_\infty, \\
        &\hat{I}=I/I_\infty, \quad \hat{B}=B/I_\infty, \quad
        \hat{C_v}=C_v/C_{v\infty}, \quad \hat{T}=T/T_\infty.
    \end{aligned}
\end{equation}
Define $\varepsilon^2=\sqrt{v_\infty t_\infty \sigma_\infty}$ as the dimensionless Knudsen number, 
and the slow diffusive process is characterized as ${x_\infty}=\varepsilon{v_\infty t_\infty}$.
The re-scaled radiant transport equation reads
\begin{equation}\label{eq_rescaleRTE}
    \left\{
    \begin{aligned}
        &\frac{1}{\hat{c}}\frac{\partial \hat{I}}{\partial \hat{t}}+
        \frac{1}{\varepsilon}\Omega\cdot\nabla \hat{I}=\frac{1}{\varepsilon^2}\hat{\sigma}
        \left(\hat{B}-\hat{I}\right),\\
        &\frac{\partial \hat{C_v} \hat{T}}{\partial \hat{t}}=\frac{1}{\varepsilon^2}
        \int_{\mathcal{R}}\int_{\mathcal{S}^2}\hat{\sigma}\left(\hat{B}-\hat{I}\right)\mathrm{d}\vec{\omega}d\nu.
    \end{aligned}
    \right.
\end{equation}
The asymptotic preserving property is one of the main topics in this paper, 
and therefore, the re-scaled radiative transfer equation \eqref{eq_rescaleRTE} is discussed in the following.
The hats will be omitted in the following for simplicity.
We expand the radiant intensity and time derivative with respect to the Knudsen number,
\begin{equation}\label{eq_expansion}
    \begin{aligned}
        &\partial_t=\varepsilon^0\partial_{t0}+\varepsilon^1\partial_{t1}+O(\varepsilon^2),\\
        &I=\varepsilon^0I_0+\varepsilon^1I_1+O(\varepsilon^2).\\
    \end{aligned}
\end{equation}
Operator $\partial_{tk}$ is defined as the time evolution contributed by the $k$th order of flux and source terms.
We obtain the $k$th order equations by balancing the $O(\varepsilon^{k})$ order terms, 
and specifically, the $O(\varepsilon^{0})$ order gives
\begin{equation}\label{eq_expansion0}
    I_0(\vec{x},\vec{\Omega},\nu,t)=B(\vec{x},\vec{\Omega},\nu,t),
\end{equation}
showing that the leading order of radiant intensity is a local Plankian, 
which is also referred to as the local equilibrium state.
The $O(\varepsilon^{1})$ order of the radiative transfer equation \eqref{eq_RTE} gives
\begin{equation}\label{eq_expansion1}
  I(\vec{x},\vec{\Omega},\nu,t)=B(\vec{x},\vec{\Omega},\nu,t)-
  \varepsilon\frac{\vec{\Omega}}{\sigma}\cdot\nabla B(\vec{x},\vec{\Omega},\nu,t),
\end{equation}
which is referred to as the diffusion expansion.
By substituting the diffusion expansion \eqref{eq_expansion1} into the macroscopic equations \eqref{eq_RTEmac},
the corresponding diffusion equation is obtained,
\begin{equation}\label{diffusion}
  a\frac{\partial T_0^4}{\partial t}+C_v\frac{\partial T_0}{\partial t}=
  \nabla\cdot\kappa_R\nabla T_0^4,
\end{equation}
where $\kappa_R$ is the Rosseland heat conductivity coefficient,
\begin{equation}\label{eq_rosseland}
    \kappa_R=\frac{ac}{3}\frac{\int \frac{1}{\sigma}\frac{\partial B}{\partial T} \mathrm{d}\nu}
    {\int \frac{\partial B}{\partial T} \mathrm{d}\nu}.
\end{equation}

A routine multigroup treatment can be used to discrete the frequency space \cite{sun2015asymptotic}.
For each energy group, the Planckian degenerates into a uniform distribution in both frequency and angular velocity space. The main topic of this paper is to introduce an implicit UGKWP method that removes the CFL limitation,
and therefore, for simplicity, the formulations for the grey model are presented in the following sections.
For the frequency-dependent RTE with a more general scattering operator, one can refer to our series of papers, 
where a multi-group treatment \cite{Hu2023ugkp} and continuous treatment \cite{li2023ugkp} of the frequency space are presented.
The grey model radiative transfer equation reads
\begin{equation}\label{eq_RTEgrey}
  \left\{
  \begin{aligned}
  &\frac{1}{c}\frac{\partial I}{\partial t}+\frac{1}{\varepsilon}\vec{\Omega}\cdot\nabla I=
  \frac{1}{\varepsilon^2}\sigma\left(\frac{1}{4\pi}\phi-I\right),\\
  &\frac{\partial C_v T}{\partial t}=
  \frac{1}{\varepsilon^2}\sigma\left(\int_{\mathcal{S}^2}  I  \mathrm{d}\vec{\Omega}-\phi\right),
  \end{aligned}
  \right.
\end{equation}
and $\phi=acT^4$. The corresponding macroscopic energy equations are
\begin{equation}\label{eq_RTEmacgrey}
\left\{
  \begin{aligned}
    &\frac{\partial \rho}{\partial t}+\nabla\cdot \vec{F}=
    \frac{c}{\varepsilon^2}\sigma\left(\phi-\rho\right),\\
    &\frac{\partial C_v  T}{\partial t}=\frac{1}{\varepsilon^2}\sigma\left(\rho-\phi\right).
  \end{aligned}
\right.
\end{equation}
For an initial value problem,
\begin{equation}
\left\{
    \begin{aligned}
        &I(\vec{x},0,\vec{\Omega})=I_0(\vec{x},\vec{\Omega}),\\
        &\phi(\vec{x},0)=\phi_0(\vec{x}),
    \end{aligned}
    \right.
\end{equation}
the integral solution to the radiative transfer equation \eqref{eq_RTE} can be written as
\begin{equation}\label{eq_integralsolution0}
    I(\vec{x},t,\vec{\Omega})=\int_0^t\frac{c\sigma}{\varepsilon^2}
    \mathrm{e}^{-\frac{c\sigma}{\varepsilon^2}(t-s)}
    B(\vec{x}(s),s,\vec{\Omega}) \mathrm{d}s+
    \mathrm{e}^{-\frac{c\sigma}{\varepsilon^2} t}I_0(\vec{x}_0),
\end{equation}
where $B(\vec{x},t,\vec{\Omega})=\phi(\vec{x},t,\vec{\Omega})/4\pi$, 
and $\vec{x}(s)=\vec{x}_0+\frac{c\vec{\Omega}}{\varepsilon}(s-t)$ is the characteristics.
Expanding $B_0(\vec{x},\vec{\Omega})=B(\vec{x},0,\vec{\Omega})$ and $I_0(\vec{x},\vec{\Omega})$ up to second order in the space and time,
the integral solution is reformulated as
\begin{equation}
    \begin{aligned}
        I(\vec{x},\vec{\Omega},t)=&
        c_1(t)B_0(\vec{x},\vec{\Omega})+
        c_2(t)\vec{\Omega}\cdot\nabla B_0(\vec{x},\vec{\Omega})+
        c_3(t)\partial_t B_0(\vec{x},\vec{\Omega})\\
        &+c_4(t)I_0(\vec{x},\vec{\Omega})+
        c_5(t)\vec{\Omega}\cdot\nabla I_0(\vec{x},\vec{\Omega}),
    \end{aligned}
\end{equation}
where the coefficients are
\begin{equation}\label{eq_coefficients}
    \begin{aligned}
        &c_1(t)=1-\mathrm{e}^{-\frac{c\sigma}{\varepsilon^2} t}, \quad
        c_2(t)=-\frac{\varepsilon}{\sigma}(1-\mathrm{e}^{-\frac{c\sigma}{\varepsilon^2} t})+
        ct\mathrm{e}^{-\frac{c\sigma}{\varepsilon^2} t},\\
        &c_3(t)=-\frac{\varepsilon^2}{c\sigma}(1-\mathrm{e}^{-\frac{c\sigma}{\varepsilon^2} t})+t,\quad
        c_4(t)=\mathrm{e}^{-\frac{c\sigma}{\varepsilon^2} t},\quad
        c_5(t)=-\frac{c}{\varepsilon}t\mathrm{e}^{-\frac{c\sigma}{\varepsilon^2} t}.
    \end{aligned}
\end{equation}
It can be shown that the integral solution preserves the asymptotic limits of the radiative transfer equation \cite{li2020unified}.
More specifically, in the optically thick regime,
the integral solution converges to the second order asymptotic expansion \eqref{eq_expansion1},
\begin{equation}
    \lim_{\varepsilon\to 0}I(\vec{x},\vec{\Omega},t)=\frac{1}{4\pi}\phi_0-
    \frac{\varepsilon}{4\pi\sigma}\vec{\Omega}\cdot\nabla\phi_0+
    \frac{t}{4\pi}\partial_t\phi_0,
\end{equation}
which gives the diffusion closure of the radiant intensity.
The integral solution is essential in the construction of the IUGKWP methods.
Both the multiscale flux of the radiant energy and 
the closure of the photon distribution function is constructed based on the integral solution.
In the following section, we will present the detailed formulation of the implicit UGKWP method.

\section{Implicit unified gas-kinetic wave-particle method}\label{section_ugkwp}
In this section, we will present the implicit unified gas-kinetic wave-particle (IUGKWP) method 
on a general unstructured mesh.
The IUGKWP method is a continuous development of the multiscale UGKWP method for transport processes \cite{liu2020unified,zhu2019unified}, 
that combines the advantages of the implicit Monte Carlo (IMC) method and the implicit moments method, 
providing a stable and efficient numerical method for the simulation of multiscale photon transport.
In the IUGKWP method, we define a local physical characteristic time 
\begin{equation}\label{eq_tp}
    t_p=\varepsilon L_\infty/c, 
\end{equation}
where 
\begin{equation}
    L_\infty=\max\left(\frac{\rho}{\nabla \rho},\Delta x\right)
\end{equation} 
is the local characteristic length of the flow field.
The photon streaming processes can be categorized into 
long-$\lambda$ streaming process with free stream time longer than $t_p$, 
and short-$\lambda$ streaming process with free stream time shorter than $t_p$.
In the IUGKWP method, the long-$\lambda$ streaming process is solved by the IMC method, and 
the short-$\lambda$ streaming process is solved by the implicit moment method.
The closure of photon distribution is constructed based on the integral solution of the radiative transfer equation.
In the following subsections, we present 
(i) categorization of the photon streaming processes, 
(ii) IMC for long-$\lambda$ streaming process, 
(iii) implicit Moments method for short-$\lambda$ streaming process, and 
(iv) the closure of the photon distribution function. 
The algorithm of the IUGKWP method is presented at the end of this section.

\subsection{Categorization of the photon streaming processes}\label{section-categorization}
According to the radiative transport equation, 
the photon free time $\tau=\lambda/c$ follows a exponential distribution,
\begin{equation}\label{eq_dist_tau}
    F_{\tau}(\tau<s)=1-\exp(-c\sigma s), \quad s>0.
\end{equation}
The probability of long-$\lambda$ streaming processes with $\tau>t_p$ is 
\begin{equation}\label{eq_pl}
    P_l=\exp(-c\sigma t_p),
\end{equation}
and the probability of short-$\lambda$ streaming processes with $\tau\le t_p$ is 
\begin{equation}\label{eq_ps}
    P_s=1-\exp(-c\sigma t_p).
\end{equation}
For the long-$\lambda$ streaming processes, the free time follows the cumulative distribution
\begin{equation}\label{eq_dist_taul}
    F_{\tau,l}(s)=\left\{
    \begin{aligned}
        &1-\exp(-c\sigma (s-t_p)), \quad s\ge t_p,\\
        &0, \quad s<t_p,
    \end{aligned}\right.
\end{equation}
and the probability density function
\begin{equation}
    f_{\tau,l}(s)=\left\{
    \begin{aligned}
        &c\sigma \exp(-c\sigma (s-t_p)), \quad s\ge t_p,\\
        &0, \quad s<t_p.
    \end{aligned}\right.
\end{equation}
The mean free time of a long-$\lambda$ stream process is 
\begin{equation}
    \tau_l=\int_{t_p}^\infty f_{\tau,l}(s)s \mathrm{d}s=
    t_p+\frac{1}{c\sigma},
\end{equation}
For the short-$\lambda$ streaming processes, the free time follows the distribution
\begin{equation}\label{eq_dist_taus}
    F_{\tau,s}(s)=\left\{
    \begin{aligned}
        &\frac{1-\exp(-c\sigma s)}{1-\exp(-c\sigma t_p)},\quad s< t_p,\\
        &0, \quad s\ge t_p,
    \end{aligned}
    \right.
\end{equation}
and the probability density function
\begin{equation}
    f_{\tau,s}(s)=\left\{
    \begin{aligned}
        &\frac{c\sigma\exp(-c\sigma s)}{1-\exp(-c\sigma t_p)},\quad s< t_p,\\
        &0, \quad s\ge t_p.
    \end{aligned}
    \right.
\end{equation}
The mean free time of a short-$\lambda$ stream is 
\begin{equation}
    \tau_s=\int_0^{t_p} f_{\tau,s}(s)s \mathrm{d}s=
    \frac{1}{c\sigma}-\frac{t_p}{\left(e^{c\sigma t_p}-1\right)}.
\end{equation}
For a successive series of stream-collision processes, 
the number of short-$\lambda$ stream processes 
between two long-$\lambda$ stream processes follows a geometric distribution, i.e.,
\begin{equation}
    P_{n_s}(n_s)=P_s^{n_s}P_l, \quad n_s\ge 0.
\end{equation}
and therefore, the total short-$\lambda$ stream time between 
two long-$\lambda$ stream processes is $n_s\tau_s$.
In a time step $\Delta t$, for a photon particle, 
the probability of the long-$\lambda$ stream process number $n_l\ge1$ is
\begin{equation}\label{eq_pp}
    P_p=\sum_{n_s\tau_s<\Delta t}P_s^{n_s}P_l=1-P_s^{\lceil \frac{\Delta t}{\tau_s} \rceil}.
\end{equation}
We present the IMC equations for the evolution of long-$\lambda$ stream processes in the next subsection.

\subsection{Implicit Monte Carlo method for long-\texorpdfstring{$\lambda$}{} stream processes}\label{section-long}
The long-$\lambda$ stream processes are the non-equilibrium transport processes, 
which drives the photon distribution to deviate from the local equilibrium.
To capture the non-equilibrium transport process, the kinetic equation needs to be solved.
In the IUGKWP method, the long-$\lambda$ stream processes are simulated by the IMC method.
The IMC formulation is developed based on an implicit discretization of the energy exchange term, i.e.,
\begin{equation}\label{eq_imc0}
    \left\{
    \begin{aligned}
        &\frac{1}{c}\frac{\partial I}{\partial t}+\frac{1}{\varepsilon}\vec{\Omega}\cdot\nabla I=
        \frac{1}{\varepsilon^2}\sigma\left(\phi^{n+1}b^n-I\right),\\
        &\frac{1}{\beta^n}\frac{\phi^{n+1}-\phi^{n}}{\Delta t}=
        \frac{1}{\varepsilon^2}c\sigma_p\left(\frac{1}{\sigma_p}\int_{R}\int_{s^2}\sigma I \mathrm{d}\Omega\mathrm{d}\nu-\phi^{n+1}\right),
    \end{aligned}
    \right.
\end{equation}
where $\beta =\frac{1}{c C_v}\frac{\partial \phi}{\partial T}$ is the ratio of specific heat, 
$b(T,\nu)=B(T,\nu)/\phi$ is the normalized Planckian, 
and $\sigma_p=\int_{R} \sigma b \mathrm{d}\nu$ is the Planck-averaged opacity.
The radiation flux $\phi^{n+1}$ can be expressed as
\begin{equation}\label{eq_imcphi}
    \phi^{n+1}=\frac{1}{1+f}\phi^{n}+\frac{f}{1+f}\frac{1}{\sigma_p}\int_{R}\int_{s^2}\sigma I\mathrm{d}\Omega\mathrm{d}\nu,
\end{equation}
where $f=\frac{\beta c\sigma_p\Delta t}{\varepsilon^2}$ is the Fleck factor\cite{fleck1971implicit}.
Substituting Eq.\eqref{eq_imcphi} into Eq.\eqref{eq_imc0}, we derive the evolution equation for the long-$\lambda$ particles and the corresponding material temperature equation
\begin{equation}\label{eq_imc}
    \left\{
    \begin{aligned}
    &\frac{1}{c}\frac{\partial I}{\partial t}+\frac{1}{\varepsilon}\vec{\Omega}\cdot\nabla I=
    \underbrace{\frac{1}{\varepsilon^2}\frac{1}{1+f}\sigma^nB^n}_{\text{emission}}+
    \underbrace{\frac{1}{\varepsilon^2}\frac{f}{1+f}\frac{\sigma^nb^n}{\sigma_p^n}\int_{R}\int_{s^2}\sigma^nI\mathrm{d}\Omega\mathrm{d}\nu}_{\text{effective scattering}}-
    \underbrace{\frac{1}{\varepsilon^2}\sigma^nI}_{\text{absorption}},\\
    &T^{*}=T^{n}-\underbrace{\frac{1}{C_v}\frac{1}{\varepsilon^2}\frac{1}{1+f}
    \int_{t^n}^{t^{n+1}}\int_{R}\int_{s^2} \sigma^n B^{n} \mathrm{d}\Omega\mathrm{d}\nu\mathrm{d}t}_{\text{emission}}+
    \underbrace{\frac{1}{C_v}\frac{1}{\varepsilon^2}\int_{t^n}^{t^{n+1}}\int_{R}\int_{s^2} \sigma^n I \mathrm{d}\Omega\mathrm{d}\nu\mathrm{d}t}_{\text{absorption}}.
    \end{aligned}
    \right.
\end{equation}
Here $T^*$ stands for the evolved material temperature after calculating 
the long-$\lambda$ transport processes and the corresponding evolved radiation energy is 
$\rho^*=\int_{\mathcal{R}}\int_{\mathcal{S}^2} I^* \mathrm{d}\omega \mathrm{d}\nu$.
For the grey radiative transfer equation \eqref{eq_RTEgrey}, 
terms can be simplified as $\sigma_p=\sigma$, $B(\vec{x},t,\vec{\Omega})=acT^4/4\pi$, 
and $b(\vec{x},t,\vec{\Omega})=1$.
To evolve the radiant flow field from $t^n$ to $t^{n+1}$, 
the IUGKWP method first evolves all the long-$\lambda$ stream processes by IMC equations \eqref{eq_imc}.
Different from the traditional IMC method, the IUGKWP method only tracks the long-$\lambda$ non-equilibrium transport processes using the Monte Carlo method.
For the short-$\lambda$ transport processes, the IUGKWP method solves the corresponding implicit moments equations. 
It is derived in the particle categorization subsection \ref{section-categorization} the probability of a particle experiencing at least one long-$\lambda$ stream process is 
\begin{equation}\label{eq_largeproportion}
    P_p=1-P_s^{\lceil \frac{\Delta t}{\tau_s} \rceil},
\end{equation}
and those particles are called the long-$\lambda$ particles.
In the sampling process of IUGKWP method, for cell $C_i$ with radiant flux $\rho_i|C_i|$, 
the number of long-$\lambda$ particles to be sampled is 
\begin{equation}\label{eq_np}
    N_p=\rho_i|C_i|(1-P_s^{\lceil \frac{\Delta t}{\tau_s} \rceil})/w_{\text{ref}},
\end{equation}
where $w_{\text{ref}}$ is the reference MC particle energy.
For each long-$\lambda$ particle, we track all long-$\lambda$ transport processes 
$\text{LT}_k,$ for $k=1,2,...N_L$, from $t^n$ to $t^{n+1}$.
The long-$\lambda$ stream time is sampled from the distribution function \eqref{eq_dist_taul}, 
The time of short-$\lambda$ transport processes between two long-$\lambda$ transport processes 
$\text{LT}_k$ and $\text{LT}_{k+1}$, i.e., the waiting time, is calculated by
\begin{equation}\label{eq_shorttime}
    \Delta t_{s,k}=\left\{
    \begin{aligned}
        &\frac{\tau_s}{P_l}\left(P_s^{n_{0}}-n_{0}P_s^{n_{0}}P_l+1\right), \quad k=0,\\
         &\frac{\tau_s}{P_l}, \quad k\ge 1,
    \end{aligned}
    \right.
\end{equation}
where $n_{0}=\lceil\frac{\Delta t}{\tau_s}\rceil$. 

The MC computational cost of IUGKWP can be estimated 
by calculating the particle number and particle collision frequency.
In the optically thin regime, the MC particle number of IUGKWP and traditional IMC are similar, i.e.,
\begin{equation}
    \lim_{\sigma\to 0}N_p=\rho_i|C_i|/\omega_{\text{ref}},
\end{equation}
and for each particle, the collision frequency $\nu$ is very low,
\begin{equation}
    \lim_{\sigma\to 0}\nu=\lim_{\sigma\to 0}\left(t_p+\frac{1}{c\sigma}\right)^{-1}=0.
\end{equation}
Therefore, both methods achieve high efficiency.
In the optically thick regime, 
the efficiency of the traditional IMC method is low 
due to a tremendous amount of effective collision.
However, the MC particle number of IUGKWP converges to zero, i.e.,
\begin{equation}
    \lim_{\sigma\to \infty}N_p=0.
\end{equation}
Therefore, the IUGKWP method is much more effective than IMC in the optically thick regime.

Once the long-$\lambda$ transport processes are simulated, we concurrently obtain 
(i) the energy exchange between photon and material contributed by long-$\lambda$ transport processes; 
(ii) the non-equilibrium part of photon distribution at $t^{n+1}$.
In the next subsection, we present the implicit calculation of short-$\lambda$ transport processes.

\subsection{Implicit moments method for short-\texorpdfstring{$\lambda$}{} transport processes}\label{section_short}
The integral solution of the radiative transport equation \eqref{eq_RTEmacgrey} 
in time interval $t\in[t^{n}+t_p,t^{n+1}]$ is
\begin{equation}\label{eq_integralsolution1}
    \begin{aligned}
        I(\vec{x},t,\vec{\Omega})=&\int_{t^n}^{t}
        \mathrm{e}^{-\frac{c\sigma}{\varepsilon^2}(t-s)}
        B(\vec{x}(s),s,\vec{\Omega}) \frac{c\sigma}{\varepsilon^2} \mathrm{d}s
        +\mathrm{e}^{-\frac{c\sigma}{\varepsilon^2} t}I_0(\vec{x}_0,\vec{\Omega})\\
        =&\underbrace{\int_{t-t_p}^{t}
        \frac{\mathrm{e}^{-\frac{c\sigma}{\varepsilon^2}(t-s)}
        -\mathrm{e}^{-\frac{c\sigma}{\varepsilon^2}t_p}}
        {1-\mathrm{e}^{-\frac{c\sigma}{\varepsilon^2}t_p}}
        B(\vec{x}(s),s,\vec{\Omega}) \frac{c\sigma}{\varepsilon^2} P_s \mathrm{d}s}
        _{\text{short-$\lambda$ near-equilibrium part $I^w$}}
        +\underbrace{\int_{t-t_p}^{t}
        B(\vec{x}(s),s,\vec{\Omega}) \frac{c\sigma}{\varepsilon^2} P_l \mathrm{d}s}
        _{\text{long-$\lambda$ non-equilibrium part $I^p$}}\\
        &+\underbrace{\int_{t^n}^{t-t_p}
        \mathrm{e}^{-\frac{c\sigma}{\varepsilon^2}(t-s)}
        B(\vec{x}(s),s,\vec{\Omega}) \frac{c\sigma}{\varepsilon^2}  \mathrm{d}s
        +\mathrm{e}^{-\frac{c\sigma}{\varepsilon^2} t}I_0(\vec{x}_0,\vec{\Omega})}
        _{\text{long-$\lambda$ non-equilibrium part $I^p$}}.
    \end{aligned}
\end{equation}
Here, $\vec{x}(s)=\vec{x}(t)-\frac{c\vec{\Omega}}{\varepsilon}(t-s)$ is the characteristics,
$I_0(\vec{x},\vec{\Omega})=I(\vec{x},t^n,\vec{\Omega})$ is the initial condition at $t^n$,
$P_s$ is the short-$\lambda$ probability given in Eq. \eqref{eq_ps},
and $P_l$ is the long-$\lambda$ probability given in Eq. \eqref{eq_pl}. 
Note that in the time interval $t\in[t^n,t-t_p]$, 
the emission probability of the long-$\lambda$ particle is $P_l$, 
and the absorption rate of the long-$\lambda$ particle is 
$\mathrm{e}^{-\frac{c\sigma}{\varepsilon^2}(t-s-t_p)}$, 
the multiplication of these two probabilities gives
\begin{equation}
    \mathrm{e}^{-\frac{c\sigma}{\varepsilon^2}(t-s)}=
    \mathrm{e}^{-\frac{c\sigma}{\varepsilon^2}(t-s-t_p)}P_l.
\end{equation}
The energy flux and energy exchange between long-$\lambda$ particles and material 
are calculated by the IMC method as presented in subsection \ref{section-long}. 
In this subsection, we present the numerical scheme for the energy flux and energy exchange 
between short-$\lambda$ particles and material.
We expand the local Plankian $B^{n+1}=B(\vec{x},t^{n+1},\vec{\Omega})$ 
to second order around $(\vec{x},t^{n+1})$,
and the distribution of the emitted short-$\lambda$ particle distribution $I^w$ 
can be expressed as
\begin{equation}\label{eq_Iw}
    \begin{aligned}
        I^w(\vec{x},t,\vec{\omega})&=\int_{t^{n+1}-t_p}^{t}
        \frac{\mathrm{e}^{-\frac{c\sigma}{\varepsilon^2}(t-s)}
        -\mathrm{e}^{-\frac{c\sigma}{\varepsilon^2}t_p}}
        {1-\mathrm{e}^{-\frac{c\sigma}{\varepsilon^2}t_p}}
        B(\vec{x}(s),s,\vec{\Omega}) P_s \frac{c\sigma}{\varepsilon^2}\mathrm{d}s\\
        &=C_1(t)B^{n+1}+
        C_2(t)\vec{\Omega}\cdot\nabla B^{n+1}+
        C_3(t)\partial_t B^{n+1}+O(t^2).
    \end{aligned}
\end{equation}
and the coefficients are
\begin{equation}\label{eq_coeff1}   
    \begin{aligned}
        C_1(t)=&1-e^{-\frac{c\sigma}{\varepsilon^2}t_p}
        -\frac{c\sigma}{\varepsilon^2}t_p\mathrm{e}^{-\frac{c\sigma}{\varepsilon^2}t_p},\\
        C_2(t)=&-\frac{\varepsilon}{\sigma}
        \left(1-e^{-\frac{c\sigma}{\varepsilon^2}t_p}\right)
        +\frac{c}{\varepsilon}t_p
        e^{-\frac{c\sigma}{\varepsilon^2}t_p}
        +\frac{c^2\sigma}{2\varepsilon^3}t_p^2\mathrm{e}^{-\frac{c\sigma}{\varepsilon^2}t_p},\\
        C_3(t)=&-\frac{\varepsilon^2}{c\sigma}-\left(t^{n+1}-t\right)+
        \left(\frac{\varepsilon^2}{c\sigma}+t^{n+1}-t\right)e^{-\frac{c\sigma}{\varepsilon^2}t_p}\\
        &+\left(\frac{c\sigma}{\varepsilon^2}\left(t^{n+1}-t\right)+1\right) t_p e^{-\frac{c\sigma}{\varepsilon^2}t_p}
        +\frac{c\sigma}{2\varepsilon^2}t_p^2\mathrm{e}^{-\frac{c\sigma}{\varepsilon^2}t_p}.
    \end{aligned}
\end{equation}
The distribution of the locally emitted photon \eqref{eq_Iw} 
is fully determined by the macroscopic radiance field, 
the order of which can be reduced.
Under a finite volume framework, the low-order moments equations of the short-$\lambda$ particles 
can be derived, and the numerical flux can be derived by taking moments to Eq. \eqref{eq_Iw}.
Once the radiance field is evolved, the short-$\lambda$ near-equilibrium distribution can be closed 
according to Eq. \eqref{eq_Iw}.

We discretize the computational domain $\mathcal{D}$ into control volumes $\mathcal{D}=\cup_{i\in N_c}\mathcal{C}_i$.
The cell averaged value of the macroscopic quantities, such as the radiant density, the emitted radiant density, and the absorption coefficient, are defined as
\begin{equation}\label{eq_cellaveragemacro}
    \rho_i^n=\frac{1}{|\mathcal{C}_i|}\int_{\mathcal{C}_i}\rho_i(\vec{x},t^n) \mathrm{d} \vec{x},\quad
    \phi_i^n=\frac{1}{|\mathcal{C}_i|}\int_{\mathcal{C}_i}\phi_i(\vec{x},t^n) \mathrm{d} \vec{x},\quad
    \sigma_i^n=\frac{1}{|\mathcal{C}_i|}\int_{\mathcal{C}_i}\sigma_i(\vec{x},t^n) \mathrm{d} \vec{x}.
\end{equation}
The numerical evolution equations for macroscopic quantities are derived by 
integrating Eq.\eqref{eq_RTEmac} in space $\mathcal{C}_i$ and time $[t^n,t^{n+1}]$,
\begin{equation}\label{eq_ime}
\left\{
    \begin{aligned}
        &\rho^{n+1}_i=\rho^*_i-\Delta tc\mathcal{F}^w_i+
        \Delta t \frac{c\sigma_i^{n+1}}{\varepsilon^2}(\phi^{n+1}_i-\rho^{n+1}_i),\\
        &T^{n+1}_i=T^*_i+\Delta t\frac{c\sigma_i^{n+1}}{\varepsilon^2 C_v}
        (\rho^{n+1}_i-\phi^{n+1}_i),\\
        &\phi^{n+1}=\phi^{*}+C_v\beta^{n+1}(T^{n+1}_i-T^*_i),
    \end{aligned}
\right.
\end{equation}
where $\beta$ is the scaled heat capacity
\begin{equation}\label{eq_scaledcapacity}
  \beta^{n+1}_i=\frac{1}{C_v}\left(\frac{\partial \phi}{\partial T}\right)_i^{n+1}
  =\left(\frac{4acT^3}{C_v}\right)_i^{n+1}.
\end{equation}
The short-$\lambda$ particle flux can be implicitly calculated by 
substituting the second-order expansion of the local Planckian around $t^{n+1}$
\begin{equation}\label{eq_fluxw1}
    \begin{aligned}
        \mathcal{F}^{w}_i&=
        \frac{1}{|\mathcal{C}_i|}\frac{1}{\Delta t}\frac{c}{\varepsilon}
        \int_{t^n}^{t^{n+1}} \int_{\partial \Omega_i} \int_{\mathcal{S}^2}
        \vec{\Omega}\cdot\vec{n} I^{w}(\vec{x},\vec{\Omega},t)
        \mathrm{d}\vec{\Omega} \mathrm{d}\vec{x} \mathrm{d}t\\
        &=\sum_{L_l \in \partial \mathcal{C}_i}
        \frac{|L_l|}{|\mathcal{C}_i|}\frac{1}{t_p}\frac{c}{\varepsilon}
        \int_{t^{n+1}-t_p}^{t^{n+1}} \int_{\mathcal{S}^2}
        \vec{\Omega}\cdot\vec{n}_l I^{w}(\vec{l}_m,\vec{\Omega},t)
        \mathrm{d}\vec{\Omega} \mathrm{d}t\\
        &=\sum_{L_l \in \partial \mathcal{C}_i}
        \frac{|L_l|}{|\mathcal{C}_i|}\kappa^w_{\text{eff}}
        \nabla \phi^{n+1}_{m} \vec{n}_l,
    \end{aligned}
\end{equation}
where $\vec{n}_l$ is the outer normal vector 
of cell interface $L_l\in \partial \mathcal{C}_i$,
$|L_l|$ is the length of interface $L_l$,
and $|\mathcal{C}_i|$ is the volume of cell $\mathcal{C}_i$.
The effective heat conduction coefficient $\kappa^{w}_{\text{eff}}$ is
\begin{equation}\label{kappa-ugkp}
    \begin{aligned}
        \kappa^w_{\text{eff}}&=\frac{1}{ t_p}
        \frac{c}{\varepsilon}\int_{t^{n+1}- t_p}^{t^{n+1}}
         \frac{1}{3}c_2(t)  \mathrm{d} t \\
        &=-\frac{c}{3\sigma_l}
        \left(1-e^{-\frac{c\sigma}{\varepsilon^2}t_p}
        -\frac{c\sigma}{\varepsilon^2}t_p
        e^{-\frac{c\sigma}{\varepsilon^2}t_p}
        -\frac{c^2\sigma^2}{2\varepsilon^4}t_p^2
        \mathrm{e}^{-\frac{c\sigma}{\varepsilon^2}t_p}\right)\\
        &=-\frac{c}{3\sigma_l}\mathcal{L}_{p},
    \end{aligned}
\end{equation}
where $\mathcal{L}_{p}$ is the UGKP flux limiter
\begin{equation}
    \mathcal{L}_{p}=1-e^{-\frac{c\sigma}{\varepsilon^2}t_p}
        -\frac{c\sigma}{\varepsilon^2}t_p
        e^{-\frac{c\sigma}{\varepsilon^2}t_p}
        -\frac{c^2\sigma^2}{2\varepsilon^4}t_p^2
        \mathrm{e}^{-\frac{c\sigma}{\varepsilon^2}t_p}.
\end{equation}
In the optically thick regime, the effective diffusion coefficient of IUGKWP converges to 
the asymptotic diffusion coefficient of RTE, i.e.,
\begin{equation}
    \lim_{\sigma_l\to\infty}\mathcal{L}_{p}=1, \quad 
    \lim_{\sigma_l\to\infty}\kappa^w_{\text{eff}}=-\frac{c}{3\sigma_l}
\end{equation}
The scattering coefficient at the cell interface $L_l$ is calculated as
\begin{equation}
    \sigma_l=\frac{\sigma_i\sigma_j}{\sigma_i+\sigma_j},
\end{equation}
where $i,j$ are the id of cells sharing interface $L_l$.

The reconstruction of the gradient of $\phi$ in Eq.\eqref{eq_fluxw1}
is shown in figures \ref{fig_ninepoint},
where $c_i$ and $c_j$ are the centers of cell $\mathcal{C}_i$ and $\mathcal{C}_j$
sharing a cell interface $L_l$.
Two vertexes of $L_l$ are $\vec{v}_1$ and $\vec{v}_2$,
and the middle point of edge $L_l$ is $v_m$.
The unit tangential vector of $L_l$ is $\vec{\tau}$.
The unit normal vector pointing to $\mathcal{C}_j$ is $\vec{n}$.
The unit vectors along directions $\overrightarrow{c_i v_m}$,
$\overrightarrow{v_m c_j}$,$\overrightarrow{c_i c_j}$
are $\vec{\tau}_i$, $\vec{\tau}_j$, and $\vec{\tau}_{ij}$, respectively.
The angle from $\vec{n}$ to $\vec{\tau}_i$ is $\theta_i$;
the angle from $\vec{n}$ to $\vec{\tau}_j$ is $\theta_j$;
and the angle from $\vec{n}$ to $\vec{\tau}_i$ is $\theta_{ij}$.
Based on the directional derivative $\partial_{\vec{\tau}}\phi$,
and $\partial_{\vec{\tau}_i}\phi$, 
the left normal derivative of $\phi^l$ can be expressed as
\begin{equation}\label{eq_gradientleft}
    \nabla\phi^l_m\cdot\vec{n}=
    \frac{\phi_m-\phi_{c_i}}{|\vec{v}_m-\vec{c}_i|}\sec\theta_i+
    \frac{\phi_{v_2}-\phi_{v_1}}{|\vec{v}_2-\vec{v}_1|}\tan\theta_i.
\end{equation}
Based on the directional derivative $\partial_{\vec{\tau}}\phi$,
and $\partial_{\vec{\tau}_j}\phi$,
the right normal derivative of $\phi^r$ can be expressed as
\begin{equation}\label{eq_gradientright}
    \nabla\phi^r_m\cdot\vec{n}=
    \frac{\phi_{c_j}-\phi_{m}}{|\vec{c}_j-\vec{v}_m|}\sec\theta_j+
    \frac{\phi_{v_2}-\phi_{v_1}}{|\vec{v}_2-\vec{v}_1|}\tan\theta_j.
\end{equation}
The flux continuity condition states
$\nabla\phi^l_m\cdot\vec{n}=\nabla\phi^r_m\cdot\vec{n}$,
and the discretization of $\nabla\phi_m\cdot\vec{n}$ can be derived from Eq.\eqref{eq_gradientleft} and Eq.\eqref{eq_gradientright},
\begin{equation}\label{eq_gradient}
    \nabla\phi_m\cdot\vec{n}=
    \frac{\phi_{c_j}-\phi_{c_i}}{|\vec{c}_j-\vec{c}_i|}\sec\theta_{ij}+
    \frac{\phi_{v_2}-\phi_{v_1}}{|\vec{v}_2-\vec{v}_1|}\tan\theta_{ij}.
\end{equation}
Here, the vertex value of $\phi$ is the averaged value among its surrounding cells, i.e.,
\begin{equation}
    \phi_{v_2}=\sum_{i\in S(v_2)}w_i\phi(c_i),
\end{equation}
where $S(v_2)$ are the set of indexes of $v_2$-surrounding cells, and $w_i=\frac{\kappa_i}{|c_i-v_2|}$ is the averaging weight.
\begin{figure}
     \centering
     \subfigure[]{\includegraphics[width=0.32\textwidth]{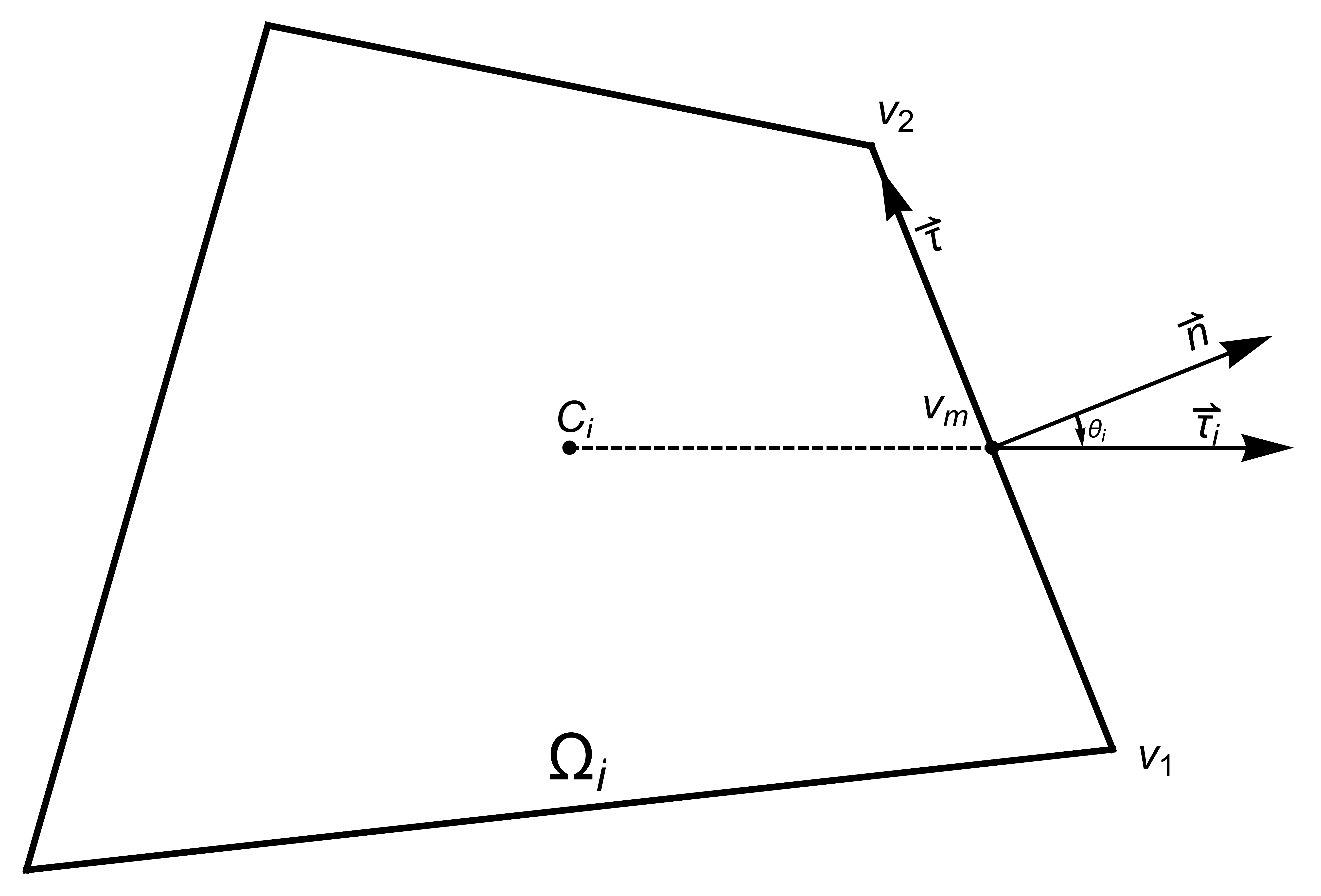}}\hspace{5mm}
     \subfigure[]{\includegraphics[width=0.16\textwidth]{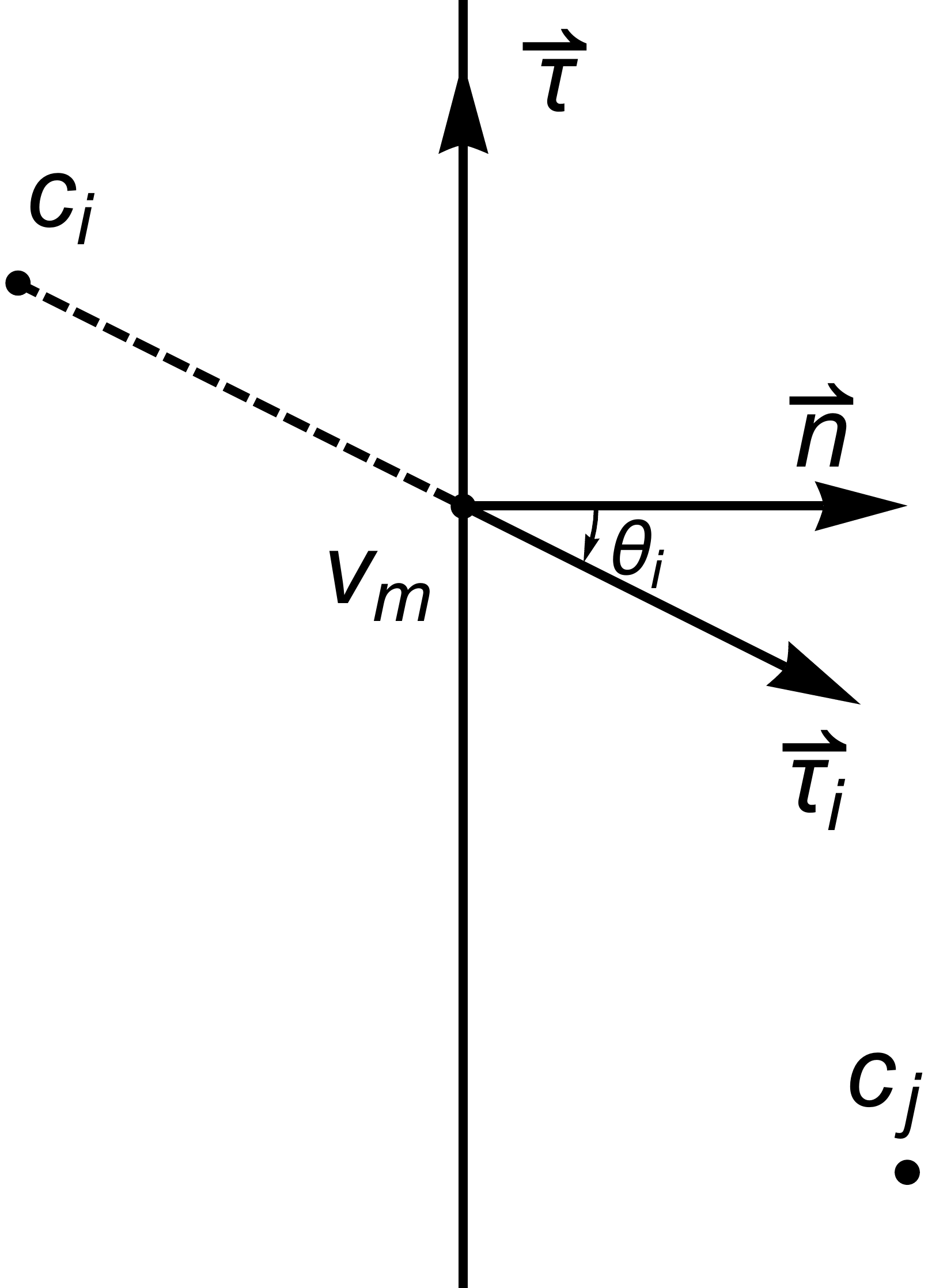}}\hspace{5mm}
     \subfigure[]{\includegraphics[width=0.16\textwidth]{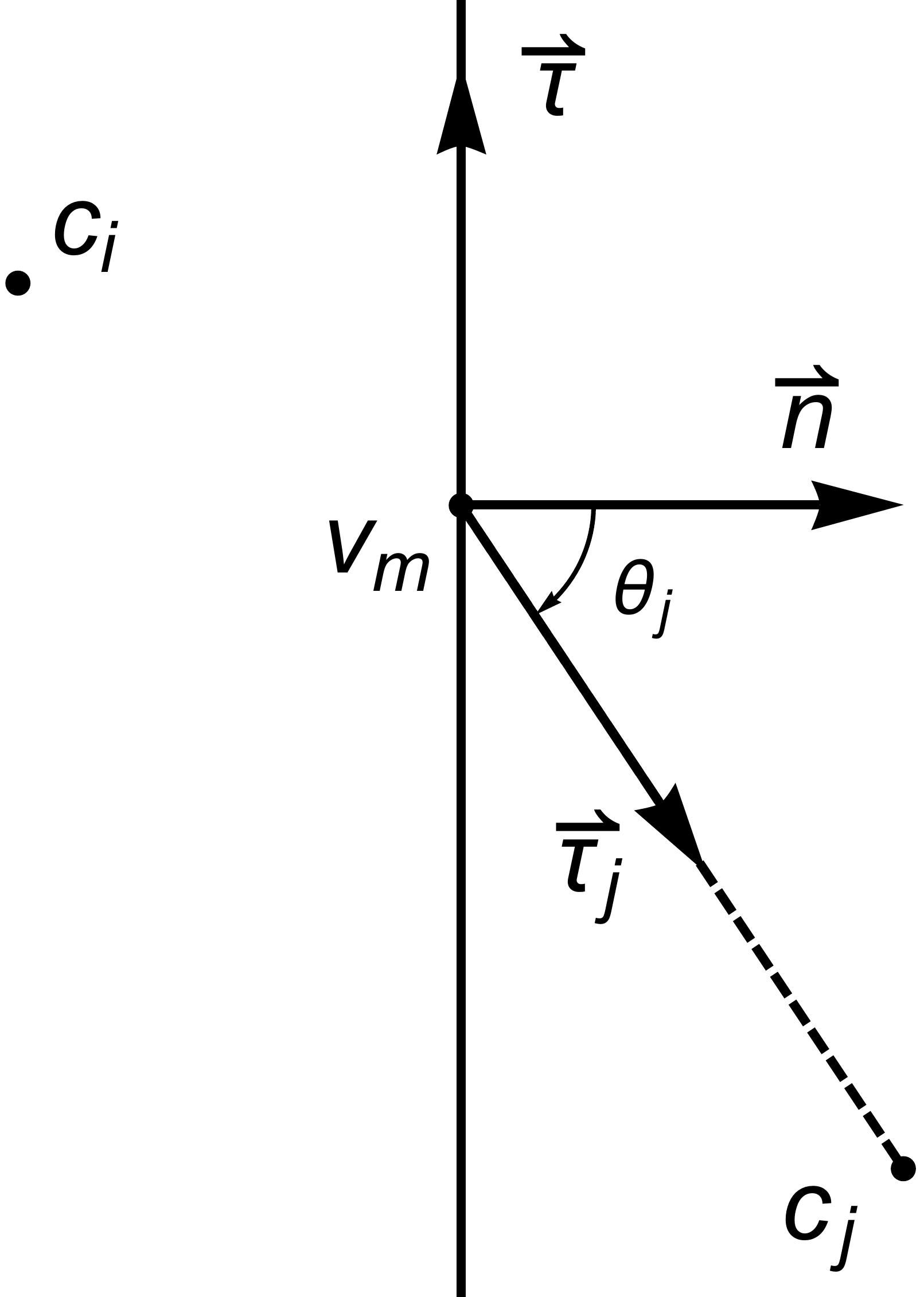}}\hspace{5mm}
     \subfigure[]{\includegraphics[width=0.16\textwidth]{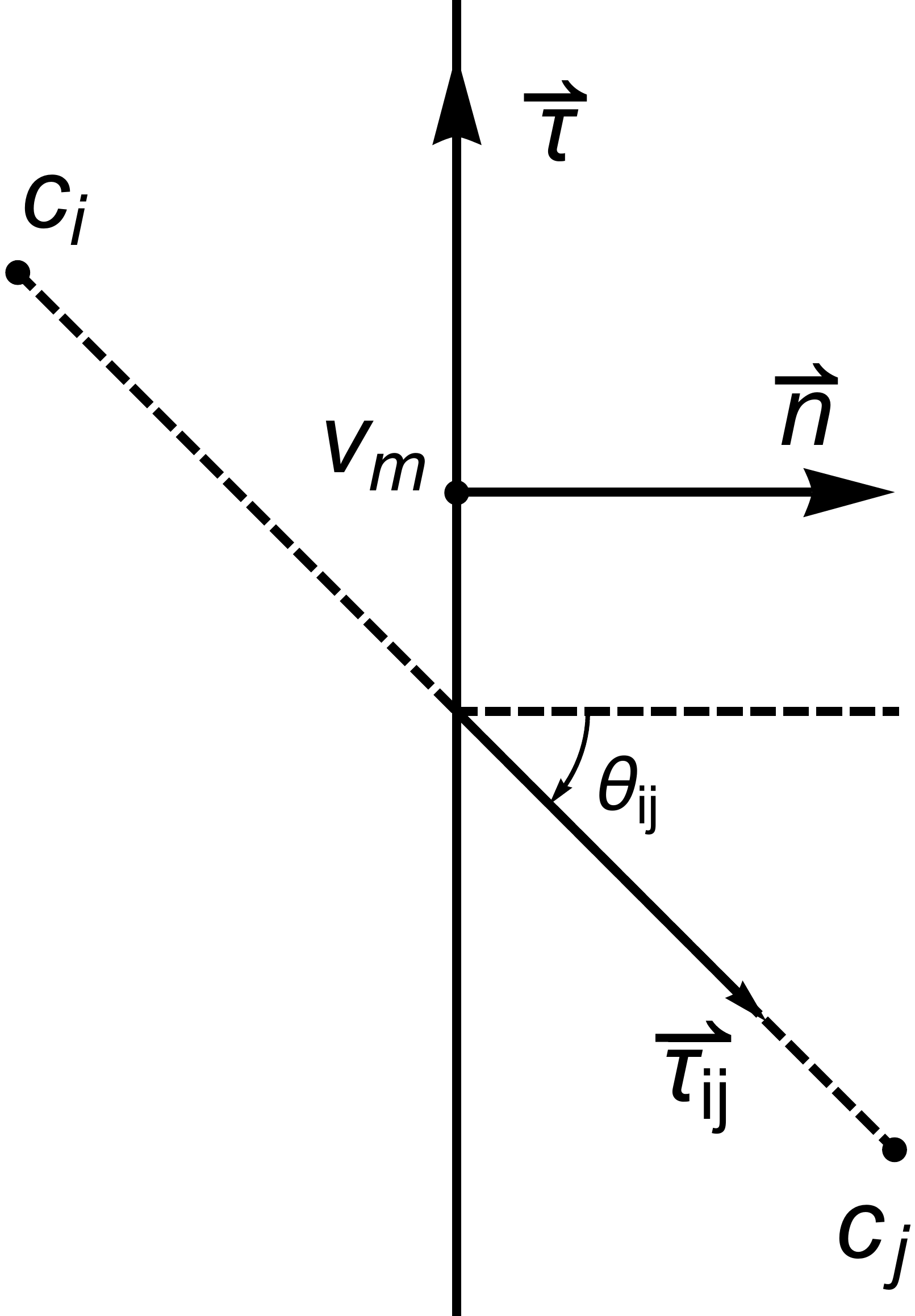}}
  \caption{(a) Sketch of a general mesh cell. The cell center is $c_i$, and the edge center is $v_m$. 
  The edge unit normal vector is $\vec{n}$, and the unit tangential vector is $\vec{\tau}$.
  (b-d) Sketch of the interface geometry of a general mesh.}
  \label{fig_ninepoint}
\end{figure}

The numerical fluxes Eq.\eqref{eq_fluxw1} and 
the discretization of the normal directional derivative of $\phi_m$ Eq.\eqref{eq_gradient}
close the macroscopic evolution equations Eq.\eqref{eq_ime}.
The macroscopic evolution equations are solved by the source iteration method.
Equations \eqref{eq_imc} and \eqref{eq_ime} present the scheme for macroscopic radiance field evolution
and the microscopic radiant intensity for long-$\lambda$ particle.
In the next subsection, we derive the closure modeling of 
the radiant intensity for short-$\lambda$ particle based on the integral solution.

\subsection{Closure modeling of photon distribution}
The integral solution of the radiative transport equation \eqref{eq_RTEmacgrey} 
at time $t^{n+1}$ is
\begin{equation}\label{eq_integralsolution2}
    \begin{aligned}
        I(\vec{x},t^{n+1},\vec{\Omega})=&\underbrace{\int_{t^{n+1}-t_p}^{t^{n+1}}
        \frac{\mathrm{e}^{-\frac{c\sigma}{\varepsilon^2}(t-s)}
        -\mathrm{e}^{-\frac{c\sigma}{\varepsilon^2}t_p}}
        {1-\mathrm{e}^{-\frac{c\sigma}{\varepsilon^2}t_p}}
        B(\vec{x}(s),s,\vec{\Omega}) \frac{c\sigma}{\varepsilon^2} P_s \mathrm{d}s}
        _{\text{short-$\lambda$ near-equilibrium part $I^w$}}
        +\underbrace{\int_{t^{n+1}-t_p}^{t^{n+1}}
        B(\vec{x}(s),s,\vec{\Omega}) \frac{c\sigma}{\varepsilon^2} P_l \mathrm{d}s}
        _{\text{long-$\lambda$ non-equilibrium part $I^p$}}\\
        &+\underbrace{\int_{t^n}^{t^{n+1}-t_p}
        \mathrm{e}^{-\frac{c\sigma}{\varepsilon^2}(t-s)}
        B(\vec{x}(s),s,\vec{\Omega}) \frac{c\sigma}{\varepsilon^2}  \mathrm{d}s
        +\mathrm{e}^{-\frac{c\sigma}{\varepsilon^2} t}I_0(\vec{x}_0,\vec{\Omega})}
        _{\text{long-$\lambda$ non-equilibrium part $I^p$}}.
    \end{aligned}
\end{equation}
A second order expansion of $\phi$ at $t^{n+1}$ gives the closure modeling of 
the photon distribution for the short-$\lambda$ particles
\begin{equation}\label{eq_Iw2}
    \begin{aligned}
        I^w(\vec{x},t^{n+1},\vec{\omega})&=\int_{t^{n+1}-t_p}^{t^{n+1}}
        \frac{\mathrm{e}^{-\frac{c\sigma}{\varepsilon^2}(t-s)}
        -\mathrm{e}^{-\frac{c\sigma}{\varepsilon^2}t_p}}
        {1-\mathrm{e}^{-\frac{c\sigma}{\varepsilon^2}t_p}}
        B(\vec{x}(s),s,\vec{\Omega}) P_s \frac{c\sigma}{\varepsilon^2}\mathrm{d}s\\
        &=C_1(t^{n+1})B^{n+1}+
        C_2(t^{n+1})\vec{\Omega}\cdot\nabla B^{n+1}+
        C_3(t^{n+1})\partial_t B^{n+1}+O(t^2).
    \end{aligned}
\end{equation}
The coefficients are
\begin{equation}\label{eq_coeff2}
    \begin{aligned}
        C_1(t^{n+1})=&1-e^{-\frac{c\sigma}{\varepsilon^2}t_p}
        -\frac{c\sigma}{\varepsilon^2}t_p\mathrm{e}^{-\frac{c\sigma}{\varepsilon^2}t_p},\\
        C_2(t^{n+1})=&-\frac{\varepsilon}{\sigma}
        \left(1-e^{-\frac{c\sigma}{\varepsilon^2}t_p}\right)
        +\frac{c}{\varepsilon}t_p
        e^{-\frac{c\sigma}{\varepsilon^2}t_p}
        +\frac{c^2\sigma}{2\varepsilon^3}t_p^2\mathrm{e}^{-\frac{c\sigma}{\varepsilon^2}t_p},\\
        C_3(t^{n+1})=&-\frac{\varepsilon^2}{c\sigma}
        \left(1-e^{-\frac{c\sigma}{\varepsilon^2}t_p}\right)
        +t_p e^{-\frac{c\sigma}{\varepsilon^2}t_p}
        +\frac{c\sigma}{2\varepsilon^2}t_p^2\mathrm{e}^{-\frac{c\sigma}{\varepsilon^2}t_p}.
    \end{aligned}
\end{equation}
In the re-sample process, only the long-$\lambda$ particles need to be sampled, 
which takes a proportion of $P_p$ as given in Eq.\eqref{eq_largeproportion}.
In the integral equation \ref{eq_integralsolution2}, 
the long-$\lambda$ particles are tracked by the IMC method as presented in section \ref{section-long},
and the short-$\lambda$ particles are re-sampled from Eq.\eqref{eq_Iw2} based on the evolved radiance field.
The macroscopic implicit moments equations \ref{eq_ime}, the microscopic IMC equation \ref{eq_imc}, 
and the closure modeling equation \ref{eq_integralsolution2} close the numerical scheme of the IUGKWP method.

\subsection{Algorithm of IUGKWP method}
The algorithm of the IUGKWP method is composed of three major steps: (i) track long-$\lambda$ transport processes by IMC equations; (ii) calculate short-$\lambda$ transport processes by solving the implicit moments equation system; (iii) close photon distribution. 
The flow chart of the IUGKWP method is presented in figure \ref{flowchart}.
\begin{algorithm}[H]
    \caption{Implicit unified gas-kinetic wave-particle method}
    \begin{algorithmic}[1]
        \STATE {Initialize radiation energy and material temperature}
        \FOR {$ t = 0 $; $ t \le t_{\text{end}} $; $ t +\Delta t $ }
        \STATE {Sample long-$\lambda$ particles and simulate long-$\lambda$ transport process by IMC equations \eqref{eq_imc}}
        \STATE {Simulate short-$\lambda$ transport process by the implicit moments equations \eqref{eq_ime}}
        \STATE {Re-sample the near-equilibrium photon particles by equation \eqref{eq_Iw2} and close photon distribution by the closure relation \eqref{eq_integralsolution2}}
        \ENDFOR
    \end{algorithmic}
\end{algorithm}
\begin{figure}[H]
  \centering
  \includegraphics[width=0.9\textwidth]{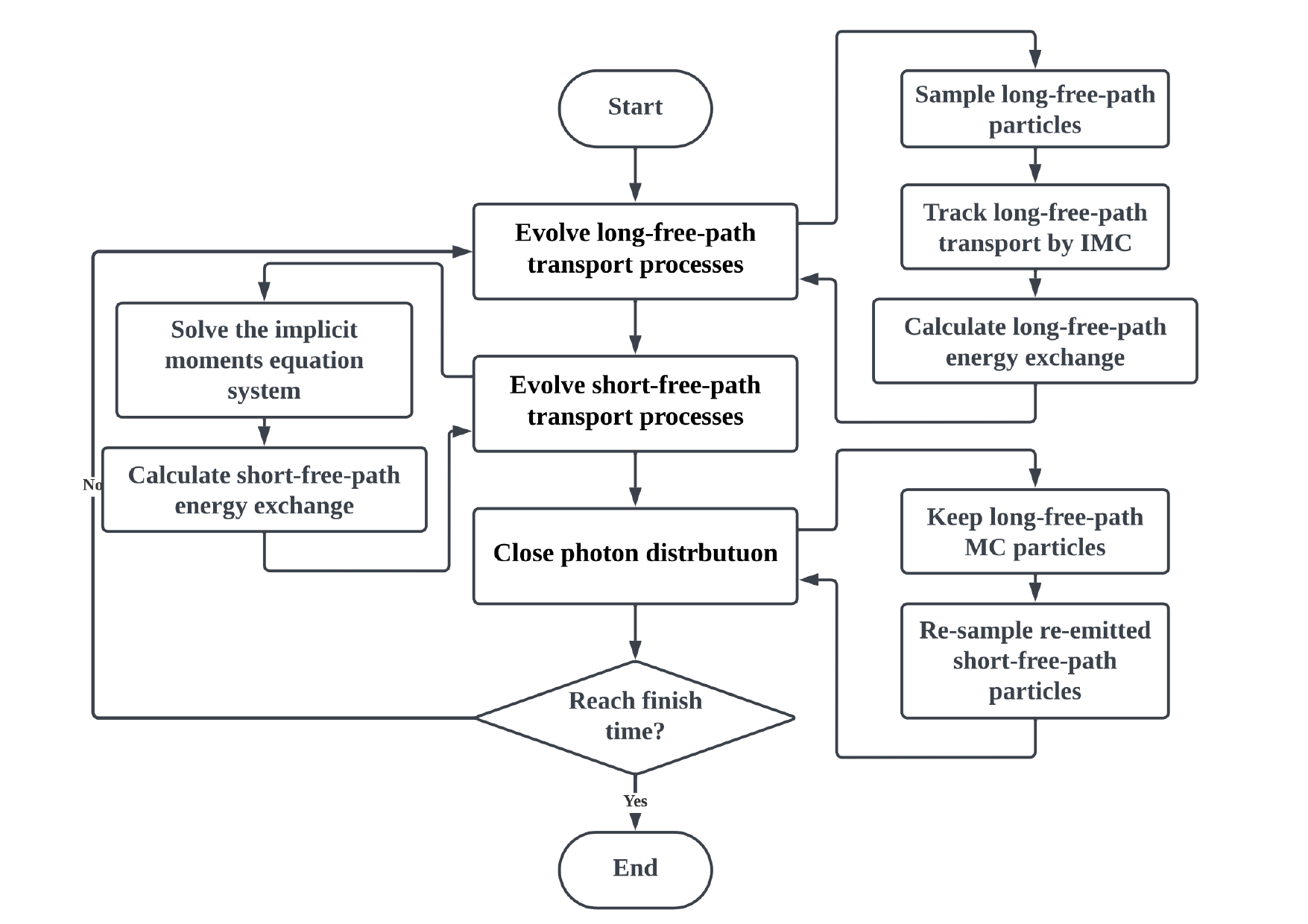}
  \caption{The flow chart of the IUGKWP method}
  \label{flowchart}
\end{figure}

\section{Numerical analysis}\label{section_analysis}
In this section, the numerical properties of the IUGKWP method are discussed, i.e.,
the asymptotic preserving property and the regime adaptive property.
The asymptotic preserving property states that the IUGKWP method converges to 
a nine-point scheme for the diffusion equation in an optically thick regime
and degenerates to a Monte Carlo method in an optically thin regime.
The regime adaptive method states that the DOF of the IUGKWP method exponentially decreases to 
the DOF of a nine-point scheme in an optically thick regime,
indicating that its computational complexity exponentially decrease as the Knudsen number approaches zero.

\begin{theorem}
The IUGKWP method is asymptotic-persevering on a numerical resolution $\Delta x \le \sigma^{-\frac12}$.
\begin{itemize}
\item[I.] In the optically thick regime, the IUGKWP method converges to a nine-point scheme to the diffusion equation \eqref{diffusion} as $\varepsilon\to0$.
\item[II.] In the optically thin regime, the IUGKWP method degenerates to a collisionless radiative transfer equation.
\end{itemize}
\end{theorem}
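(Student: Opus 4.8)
The plan is to work directly from the discrete macroscopic update \eqref{eq_ugkpmacro-ugkwp} with its split flux $\mathcal{F}_i=\mathcal{F}^{an}_i+\mathcal{F}^{mc}_i$, and to control everything through the two dimensionless groups that govern the scheme: the interaction coefficient $\nu=c\sigma/\epsilon^2$ and the product $x:=\nu\Delta t_p=\sigma\Delta x/\epsilon$, the latter following from $\Delta t_p=\epsilon\Delta x/c$. The strategy is to identify the limit of the effective conductivity $\kappa^w_{\text{eff}}$ through the flux limiter $\mathcal{L}_{wp}$ of \eqref{kappa-ugkwp}, to quantify the fraction of surviving free-streaming particles, and then to show that the surviving discrete operator is, in each regime, the claimed limiting scheme.

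For Part I, I take $\epsilon\to0$ at fixed $\sigma$ and $\Delta x$, so that $x\to\infty$. Expanding $\mathcal{L}_{wp}=x^{-1}(x-1+e^{-2x}(x+1))$ for large $x$ gives $\mathcal{L}_{wp}=1-x^{-1}+O(xe^{-2x})\to1$, hence $\kappa^w_{\text{eff}}\to-c/(3\sigma)$, which reproduces (up to the grey radiation constant) the Rosseland conductivity \eqref{eq_rosseland}. Next I bound the particle flux: because the photon free path is exponential with rate $\nu$, the probability that an emitted particle survives longer than $\Delta t_p$ is $e^{-\nu\Delta t_p}=e^{-\sigma\Delta x/\epsilon}\to0$, so $\mathcal{F}^{mc}_i$ in \eqref{eq_particleflux-ugkwp} and the entire sampled population vanish exponentially, leaving only $\mathcal{F}^{an}_i$. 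Finally I substitute the interface gradient \eqref{eq_gradient} into $\mathcal{F}^{an}_i$ and sum over the four edges of a quadrilateral cell: the cell-to-cell term couples the four edge neighbours while the vertex term, through the cell-averaged vertex values, couples the four corner neighbours, producing the nine-point stencil. Passing to the limit in \eqref{eq_ugkpmacro-ugkwp} then yields a consistent nine-point discretization of the diffusion equation \eqref{diffusion}.

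For Part II, I take the interaction coefficient $\nu\to0$, so $x\to0$. The same expression expands as $\mathcal{L}_{wp}=\tfrac23x^2+O(x^3)$, whence $\kappa^w_{\text{eff}}=-\tfrac{c}{3\sigma}\cdot\tfrac23x^2=O(\sigma)\to0$ and the analytical flux switches off. At the same time $e^{-\nu\Delta t}\to1$, so essentially every emitted particle is long-free-path and is advanced by the pure free stream $\vec{x}^{n+1}_k=\vec{x}^n_k+(c\vec{\Omega}/\epsilon)\min(t_f,\Delta t)$, while the absorption--emission coupling $\Delta t\,\nu(\phi-\rho)$ in \eqref{eq_ugkpmacro-ugkwp} vanishes. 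What survives is a Monte Carlo transport of the initial distribution $I_0$ without scattering, i.e. a discretization of the collisionless radiative transfer equation, as claimed.

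The main obstacle is Part I, and specifically the implicit source. It is not enough to pass to the limit in the two fluxes separately: the coupling term $\Delta t\,\nu(\phi-\rho)$ becomes singular as $\nu\to\infty$, and one must show that the implicit solve keeps $\rho^{n+1}-\phi^{n+1}$ small at the right rate so that $\rho^{n+1}\to\phi^{n+1}=ac(T^{n+1})^4$ and the limiting balance is the diffusion equation rather than a degenerate relation. This is where the resolution condition $\Delta x\le\sigma^{-1/2}$ enters: it is the unified-preserving requirement that lets the mesh be coarser than the mean free path $\sigma^{-1}$ while still keeping the truncation error of the nine-point reconstruction \eqref{eq_gradient} subdominant to the physical Rosseland diffusion, so that the limiting scheme is not merely stable but consistent with \eqref{diffusion}.
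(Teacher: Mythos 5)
Your proposal is correct and follows essentially the same route as the paper's proof: take limits of the effective conductivity $\kappa^w_{\text{eff}}$ (via the flux limiter $\mathcal{L}_{wp}$, equivalently the integral-solution coefficients $c_i$), note the exponential vanishing (resp.\ survival) of the free-streaming particle population, and identify the limiting discrete operator with the nine-point diffusion scheme through the gradient reconstruction \eqref{eq_gradient} in the thick regime, resp.\ with collisionless Monte Carlo transport in the thin regime. Your additional remarks---the explicit small-$x$ and large-$x$ expansions of $\mathcal{L}_{wp}$, the care needed with the stiff relaxation term $\Delta t\,\nu(\phi-\rho)$, and the role of the resolution condition $\Delta x\le\sigma^{-1/2}$---are details the paper leaves implicit (it simply cites the nine-point scheme reference for the limiting system), but they do not constitute a different approach.
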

\begin{proof}
In the optically thick regime, the proportion of long-$\lambda$ particles goes to zero, i.e.,
\begin{equation}
    \lim_{\varepsilon\to0} P_p=\lim_{\varepsilon\to0} 1-P_s^{\lceil \frac{\Delta t}{\tau_s} \rceil}=0,
\end{equation}
and the IMC method for long-$\lambda$ particles gives
$\rho^{*}_i=\rho^n_i$ and $\phi^{*}_i=\phi^n_i$.
The effective diffusion coefficient becomes
\begin{equation}
    \lim_{\varepsilon\to0} \kappa_{\text{eff}}^{w}=-\frac{ac}{3\sigma_m}+o(\varepsilon^2),
\end{equation}
based on which the macroscopic evolution equations of IUGKWP converge to
\begin{equation}\label{theorem11}
\left\{
    \begin{aligned}
      &\rho^{n+1}_i+\phi^{n+1}_i=\rho^n_i+\phi^n_i-
      \Delta t\sum_{L_l \in \partial \mathcal{C}_i}
   \frac{|L_l|}{|\mathcal{C}_i|}\frac{c}{3\sigma_l}
   \nabla \phi^{n+1}_{l}\vec{n}_l,\\
      &\rho^{n+1}_i=\phi^{n+1}_i.
    \end{aligned}
\right.
\end{equation}
Here
\begin{equation}\label{theorem12}
    \sigma_l=\frac{\sigma_i\sigma_j}{\sigma_i+\sigma_j},
\end{equation}
and
\begin{equation}\label{theorem13}
    \nabla\phi_l \cdot\vec{n}=
    \frac{\phi_{c_j}-\phi_{c_i}}{|\vec{c}_j-\vec{c}_i|}\sec\theta+
    \frac{\phi_{v_2}-\phi_{v_1}}{|\vec{v}_2-\vec{c}_1|}\tan\theta.
\end{equation}
The scheme Eq.\eqref{theorem11}-\eqref{theorem13} are consistent to 
the nine-point scheme for diffusion equation Eq.\eqref{diffusion} \cite{Yuan2008}.
Therefore, in the optically thick regime, 
the IUGKWP method is consistent with the diffusion equation system.

In the optically thin regime as $\varepsilon\to\infty$, we have
\begin{equation}\label{theorem14}
    \lim_{\varepsilon\to\infty} c_i(t)=\left\{
    \begin{aligned}
        &0, \quad i=1,2,3,\\
        &1, \quad i=4,
    \end{aligned}
    \right.
\end{equation}
based on which all particles are free streaming long-$\lambda$ particles,
and the microscopic radiative intensity evolves by
\begin{equation}\label{theorem15}
    I(\vec{x},\vec{\Omega},t^{n+1})=I_0(\vec{x}_0).
\end{equation}
Therefore, in the collisionless regime, 
the IUGKWP method is consistent with the Monte Carlo method for collisionless radiative transfer equation.
\end{proof}

\begin{theorem}
The IUGKWP is regime-adaptive.
In an optically thick regime, the computational complexity of the IUGKWP method degenerates to the computational complexity of the diffusion equation.
\end{theorem}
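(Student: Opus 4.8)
The plan is to show that the particle degrees of freedom of the UGKWP method — the number of Monte Carlo particles $N_p$ that must be stored and advected — decay exponentially to zero as $\epsilon \to 0$, so that the only surviving per-step cost is the macroscopic (diffusion) solve. I would begin from the wave-particle closure \eqref{eq_ugkpclosure-UGKWP}, in which the emitted photons are split into the short-free-path population $I^{w,a}$, carried by the analytic weight $(1-\mathrm{e}^{-\nu\Delta t})$ and represented \emph{without} sampling, and the long-free-path population $I^{w,f}$ carried by $\mathrm{e}^{-\nu\Delta t}$ and represented by sampled particles. Since by construction only photons with free path $\lambda \ge c\Delta t_p$ (equivalently free-streaming time $t_f \ge \Delta t_p$) are instantiated as particles, and since the integral solution \eqref{eq_integralsolution} dictates that $t_f$ is exponentially distributed with rate $\nu$, the fraction of emitted photons that become particles is exactly the tail probability $P(t_f \ge \Delta t_p) = \mathrm{e}^{-\nu\Delta t_p}$.

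Next I would insert the defining scalings $\nu = c\sigma/\epsilon^2$ and $\Delta t_p = \epsilon\Delta x/c$ to obtain $\nu\Delta t_p = \sigma\Delta x/\epsilon$, so that the sampling fraction becomes $\mathrm{e}^{-\sigma\Delta x/\epsilon}$. The same exponent governs the survival of the pre-existing population: during each inner sweep a tracked particle is retained only if its free time satisfies $t_f \ge \Delta t_p$, again with probability $\mathrm{e}^{-\nu\Delta t_p}$. Hence at every step the maintained particle count per cell is proportional to $\mathrm{e}^{-\sigma\Delta x/\epsilon}$, and the total particle degrees of freedom obey $N_p \propto N_c\,\mathrm{e}^{-\sigma\Delta x/\epsilon} \to 0$, decaying exponentially fast in $1/\epsilon$ once $\Delta x$ is fixed on the admissible resolution $\Delta x \le \sigma^{-1/2}$ of the preceding theorem.

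With the particle population exhausted, every non-equilibrium contribution in the streaming flux \eqref{eq_particleflux-ugkwp} and in the closure vanishes, and the scheme reduces to the macroscopic system \eqref{eq_ugkpmacro-ugkwp} with $\kappa^w_{\text{eff}} \to -ac/(3\sigma_m)$, which — by the asymptotic-preserving result of Theorem 1 — is precisely the nine-point discretization of the diffusion equation \eqref{diffusion}. The per-step work therefore collapses onto the solution of the macroscopic linear system \eqref{eq_SourceIteration}, i.e. the cost of advancing the diffusion equation, which establishes the claimed degeneration of computational complexity.

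I expect the principal obstacle to be making the particle accounting rigorous rather than heuristic: one must track both the surviving particles and the newly re-sampled long-free-path particles across the inner $\Delta t_p$ loop and confirm that neither the cumulative re-sampling nor any residual tail of the exponential distribution can keep $N_p$ bounded away from zero. A secondary subtlety is fixing an implementation-independent notion of \emph{computational complexity}; I would pin this down as the storage-and-advection cost scaling linearly in $N_p$ plus the fixed cost of the macroscopic solve, so that $N_p \to 0$ directly yields the stated result.
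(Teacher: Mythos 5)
Your proposal is correct and takes essentially the same approach as the paper: the paper likewise argues that the sampled (long-free-path) component carries an exponentially small weight, estimates $N_p \sim \frac{1}{m_{\text{ref}}}\exp\left(-\frac{c\sigma}{\epsilon^2}\Delta t\right)$, and concludes that the surviving per-step cost is just the macroscopic diffusion solve. The only difference is bookkeeping: the paper reads the decay factor $\mathrm{e}^{-\nu\Delta t}$ directly off the closure \eqref{eq_ugkpclosure-UGKWP}, whereas you use the sampling tail probability $\mathrm{e}^{-\nu\Delta t_p}=\mathrm{e}^{-\sigma\Delta x/\epsilon}$; both are exponentially small as $\epsilon\to0$, so the conclusion is unchanged.
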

\begin{proof}
In the optically thin regime as $\varepsilon\to\infty$, the Monte Carlo particle number becomes
\begin{equation}\label{theorem21}
    \lim_{\varepsilon\to\infty}N_p=
    \lim_{\varepsilon\to\infty}\rho_i|C_i|(1-P_s^{\lceil \frac{\Delta t}{\tau_s} \rceil})/w_{\text{ref}}=
    \rho_i|C_i|/w_{\text{ref}},
\end{equation}
which is consistent with the IMC method.
In the optically thick regime as $\varepsilon\to0$, we have
\begin{equation}\label{theorem22}
    \lim_{\varepsilon\to0}N_p=
    \lim_{\varepsilon\to0}\rho_i|C_i|(1-P_s^{\lceil \frac{\Delta t}{\tau_s} \rceil})/w_{\text{ref}}=
    0,
\end{equation}
The number of MC particles of IUGKWP exponentially converges to zero as $\varepsilon\to0$.
Therefore, Eq.\eqref{theorem21} and \eqref{theorem22} state that 
the computational complexity of IUGKWP adapts to the flow regime, i.e.,
the computational complexity of the IUGKWP method is similar to the IMC method in the optically thin regime 
and degenerates to the computational complexity of the diffusion equation in the optically thick regime.
\end{proof}

\section{Numerical examples}\label{section_tests}
In this section, we study seven numerical examples, including the Marshak wave, tophat, 2D hohlraum, and 3D hohlraum problems.
For all test cases, the unit of length is taken to be centimeter (cm), the unit of time is nanosecond (ns),
the unit of temperature is kilo electron-volt (KeV), and the unit of energy is gigajoule (GJ).
Under above units, the speed of light $c$ is $29.98\text{cm/ns}$, and the radiation constant $a$ is $0.01372\text{GJ/cm3/KeV}^4$.
The absorption coefficient of the tests ranges from $10^{-4}$ to $10^4$, covering flow regimes from optically thin ballistic regime to optically thick diffusive regime.
We simulate the IUGKWP, SN (S16), and IMC methods with CFL numbers up to $100$.
The IUGKWP and IMC are both Monte Carlo-based methods with similar code structures, while the SN method uses quadrature to discretize the phase space.
The IUGKWP and SN differ in both algorithm and code structure, and their comparison is for accuracy heck.
For efficiency, the IUGKWP and IMC codes are compared to demonstrate the advantages of the current asymptotic preserving Monte Carlo method over the traditional Monte Carlo method.
In the following tests, the opacity changes significantly; namely, the tests are all multiscale photon transport problems.
Simulations are performed on one 2.70 GHz CPU with 128GB memory.

\subsection{Marshak wave-2A problem}
The Marshak wave-2A problem describes the propagation of a thermal wave driven by a constant intensity incident on the left boundary of a slab.
The material opacity is $\sigma=30.0/T^{3} \text{cm}^{-1}$, and the heat capacity $C_v=0.3\text{GJ}/\text{KeV}/\text{cm}^{-3}$.
The initial temperature is in equilibrium with $T_r=T_e=10^{-6} \text{KeV}$, and the incident intensity on the left boundary is $T_r=1.0\text{KeV}$.
Three algorithms, i.e., the IUGKWP, IMC, and SN methods, are implemented in the 2D simulation with $5\times10^{-3}\text{cm}$ in the y-direction and $2\times10^{-1}\text{cm}$ in the x-direction.
The physical domain is discretized into triangular meshes with cell size $2.5\times10^{-4}\text{cm}$.
The IUGKWP method is performed with two CFL numbers, $1$ and $10$.
The reference particle energy is $5\times10^{-11}\text{GJ}$.

For accuracy, we compare the IUGKWP solutions with the reference SN solution as shown in Fig. \ref{fig_marshaka1}.
The material temperature and radiation temperature at times $t=0.2,0.4,0.6,0.8,1.0$ are plotted, respectively.
It can be observed that the IUGKWP solutions with CFL numbers 1 and 10 agree well with the SN solutions.
For efficiency, the IMC takes 61.3mins with CFL=1 and 45.2mins with CFL=10.
The IUGKWP method takes 49.3mins with CFL=1 and 32.1mins with CFL=10.
The IUGKWP is generally $20-30\%$ faster than IMC in the relatively optically thin regime.
The time-saving is due to IUGKWP's effective particle tracking and particle scattering algorithm.
For particle tracking, the IUGKWP only tracks the non-equilibrium particles, which takes about 80 percent of IMC particles for the Marshak wave-2A problem.
For scattering, the IUGKWP avoids the calculation of IMC's effective scattering, which is marginal for the Marshak wave-2A problem.

\begin{figure}
  \centering
  \subfigure[Material temperature]{\includegraphics[width=0.49\textwidth]{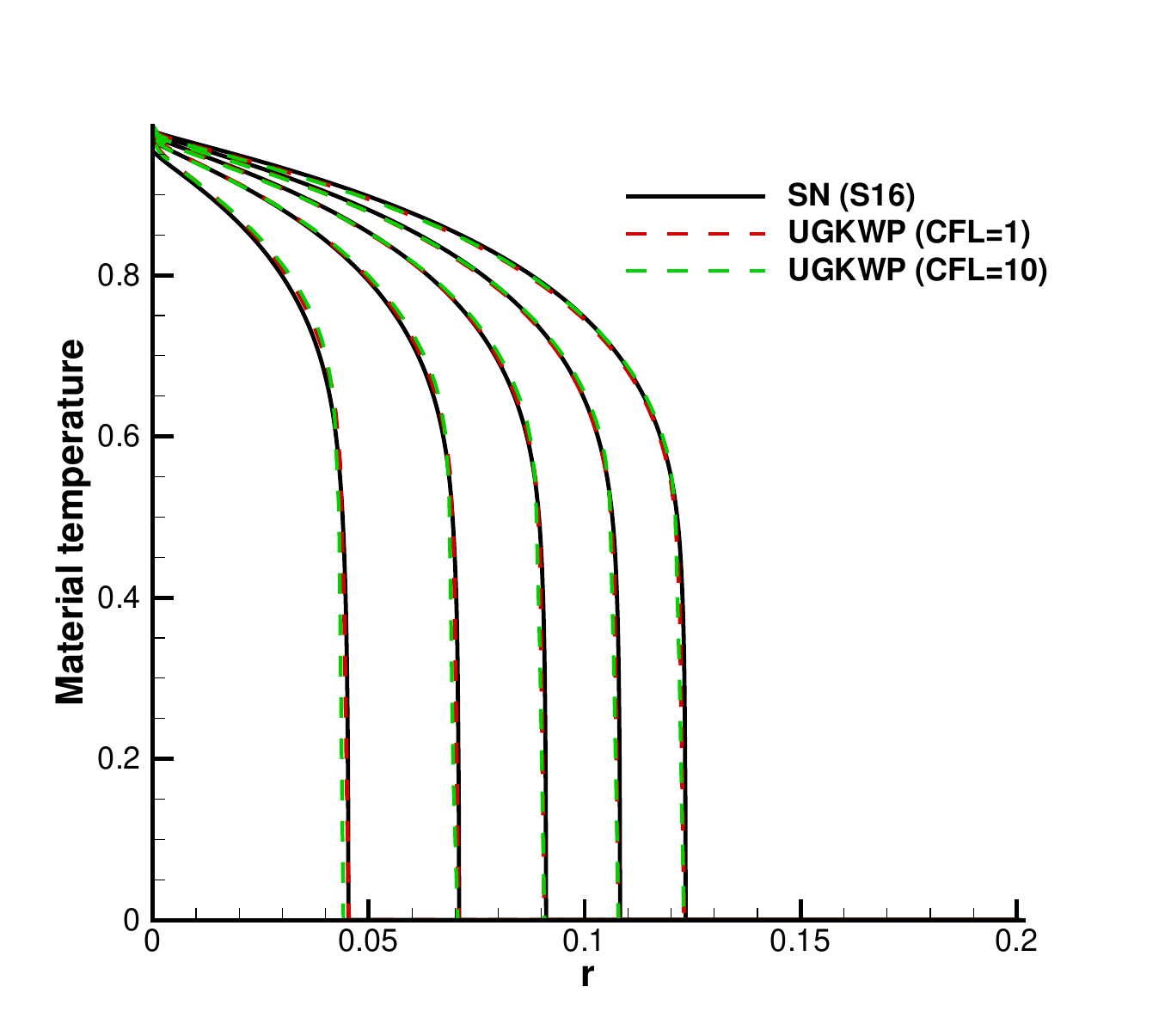}}
  \subfigure[Radiation temperature]{\includegraphics[width=0.49\textwidth]{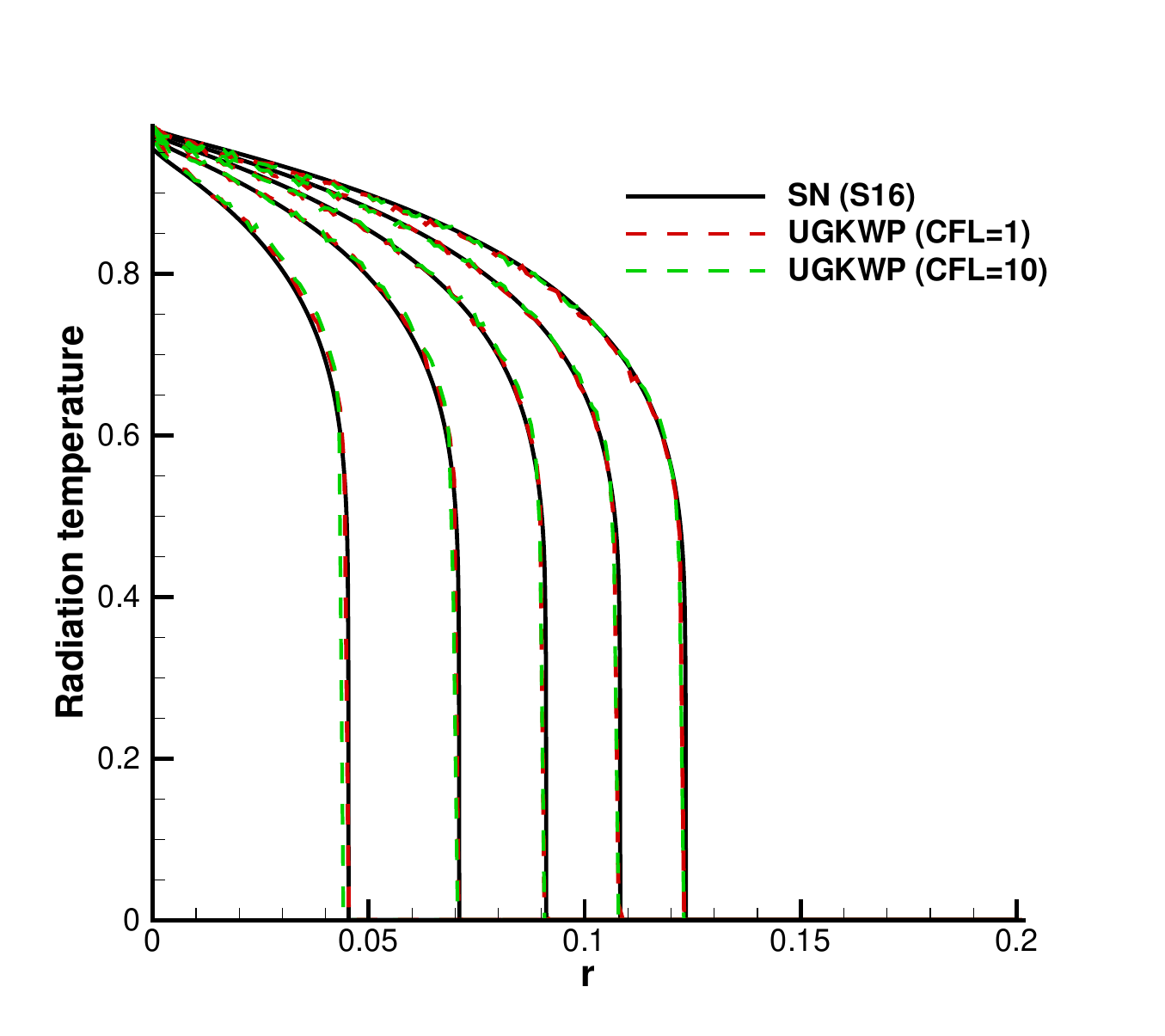}}
  \caption{Comparison of the material and radiation temperatures between IUGKWP and SN at
    $t=0.2, 0.4, 0.6, 0.8, 1.0$ for Marshak wave-2A problem.}
  \label{fig_marshaka1}
\end{figure}

\subsection{Marshak wave-2B problem}
The Marshak wave-2B shares the same geometry setup as Marshak wave-2A with a thick opacity.
The material opacity is $\sigma=300.0/T^{3}\text{cm}^{-1}$, and the heat capacity $C_v=0.3\text{GJ}/\text{KeV}/\text{cm}^{-3}$.
The initial temperature is in equilibrium with $T_r=T_e=10^{-6}\text{KeV}$, and the incident intensity on the left boundary is $T_r=1.0\text{KeV}$.
We perform IUGKWP, IMC, SN simulations on a 2D domain with $5\times10^{-3}\text{cm}$ in y-direction and $6\times10^{-1}\text{cm}$ in x-direction.
The physical domain is discretized into triangular meshes with cell size $2.5\times10^{-3}\text{cm}$.
The IUGKWP method is performed with two CFL numbers $1$ and $10$.
The reference particle energy is $5\times10^{-11}\text{GJ}$.

It can be observed in Fig. \ref{fig_marshakb1} that the IUGKWP solutions with CFL numbers 1 and 10 agree well with the SN solutions at times $t=15, 30, 45, 60, 74$.
For efficiency, to reach a simulation time of 100ns, the IMC takes 1090.8mins with CFL=1 and 796.2mins with CFL=10.
The IUGKWP method takes 253.6mins with $\text{CFL}=1$, and 48.1mins with $\text{CFL}=10$.
The IUGKWP is generally 3.3 times faster than IMC for $\text{CFL}=1$ and 15.6 times faster than IMC for $\text{CFL}=10$.
On the one hand, the IUGKWP only tracks the non-equilibrium particles, which takes about 10 percent IMC particles for Marshak wave-2B problem.
On the other hand,
 the IUGKWP avoids the massive calculation of IMC's effective scattering, which is computationally intensive for the Marshak wave-2B problem.

For the above two Marshak wave problems, the IUGKWP is accurate by comparing it to the $S_N$ solution. 
The time step of IUGKWP can be enlarged to $\text{CFL}=10$. 
In the optically thin regime, the IUGKWP is slightly fast compared to IMC, 
and in the optically thick regime, the IUGKWP is significantly faster than IMC.

\begin{figure}
  \centering
  \subfigure[Material temperature]{\includegraphics[width=0.49\textwidth]{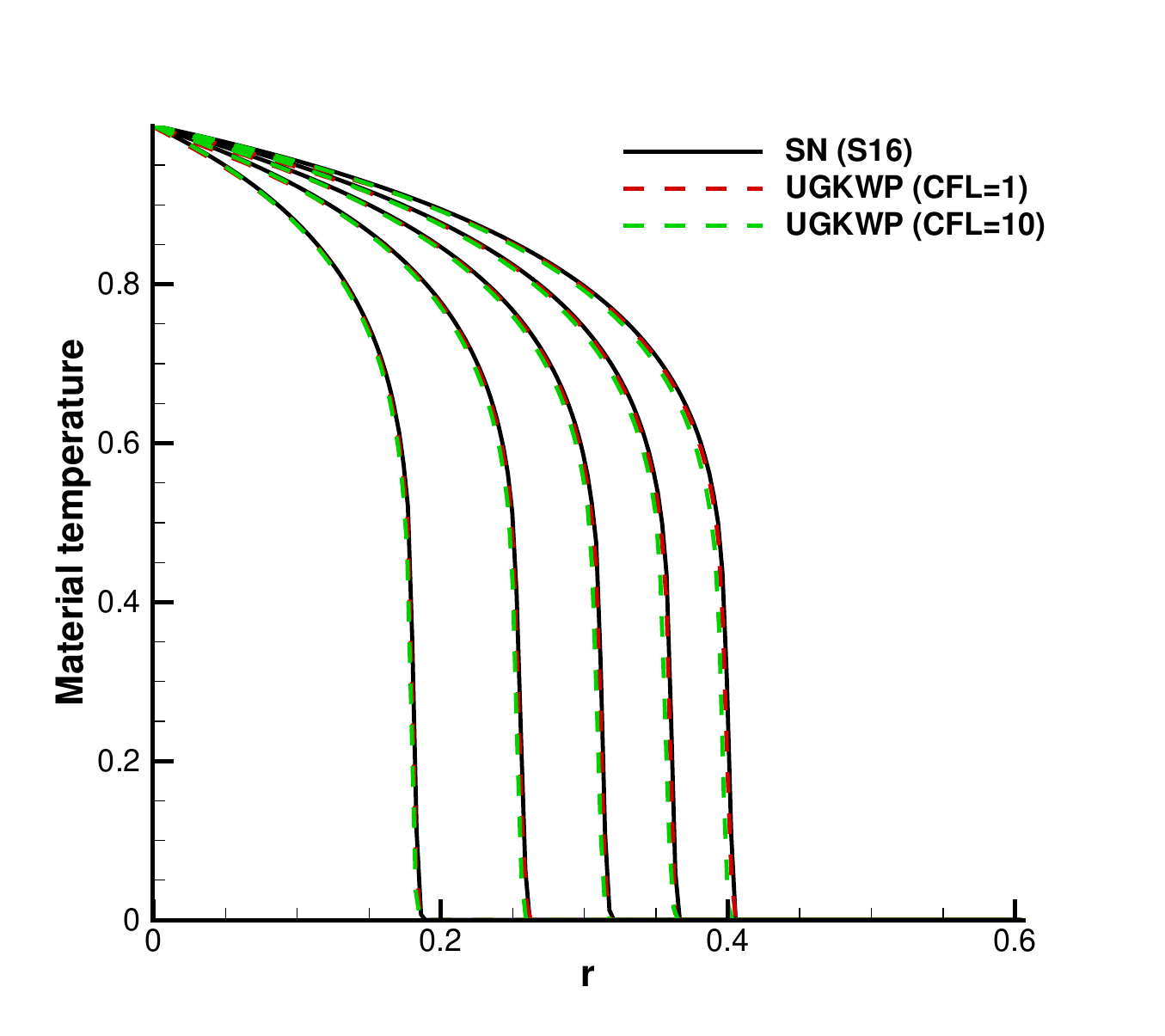}}
  \subfigure[Radiation temperature]{\includegraphics[width=0.49\textwidth]{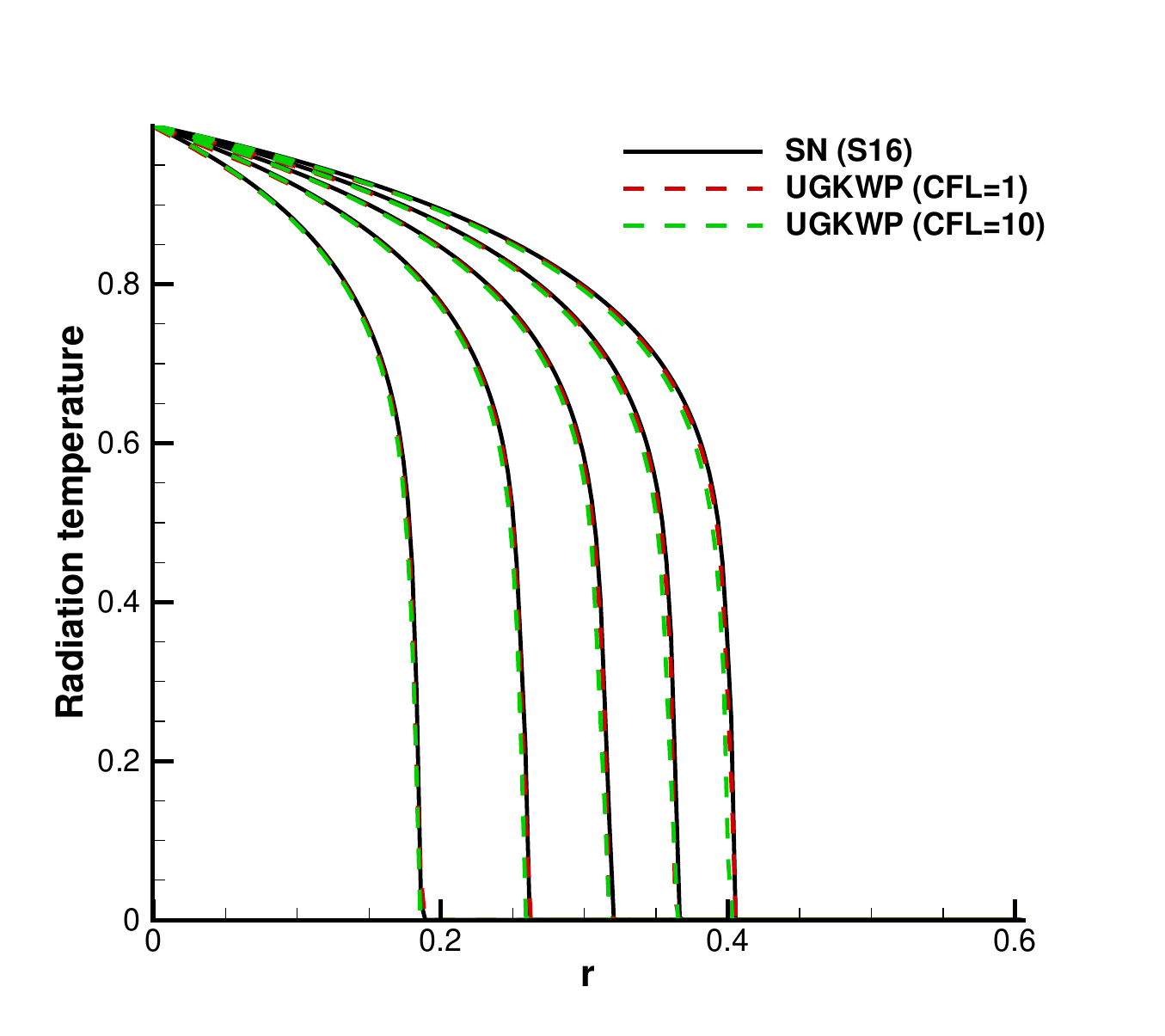}}
  \caption{Comparison of the material and radiation temperatures between IUGKWP and SN at
    $t=15, 30, 45, 60, 74$ for Marshak wave-2B problem.}
  \label{fig_marshakb1}
\end{figure}

\begin{figure}
  \centering
  \subfigure[Material temperature]{\includegraphics[width=0.49\textwidth]{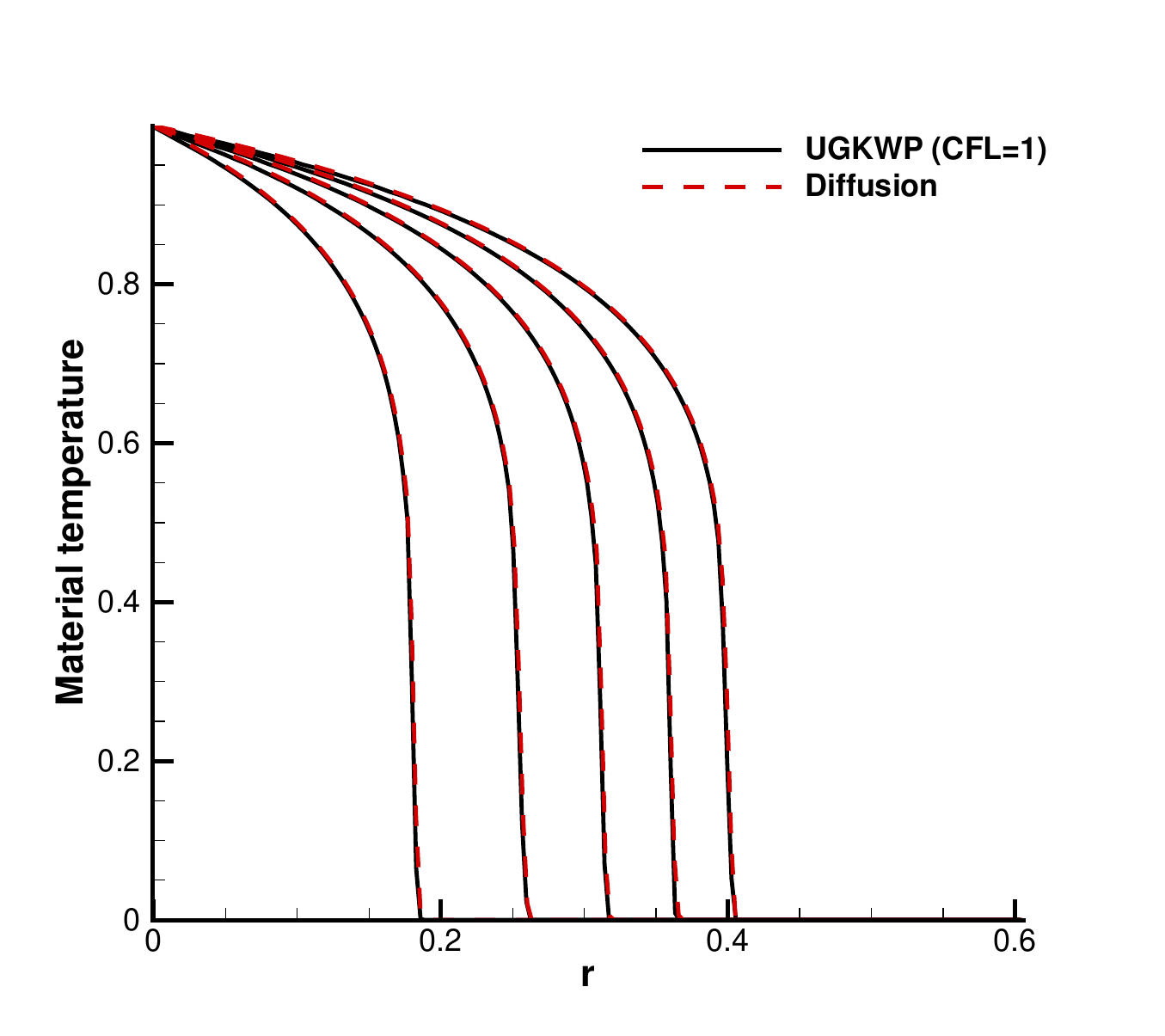}}
  \subfigure[Radiation temperature]{\includegraphics[width=0.49\textwidth]{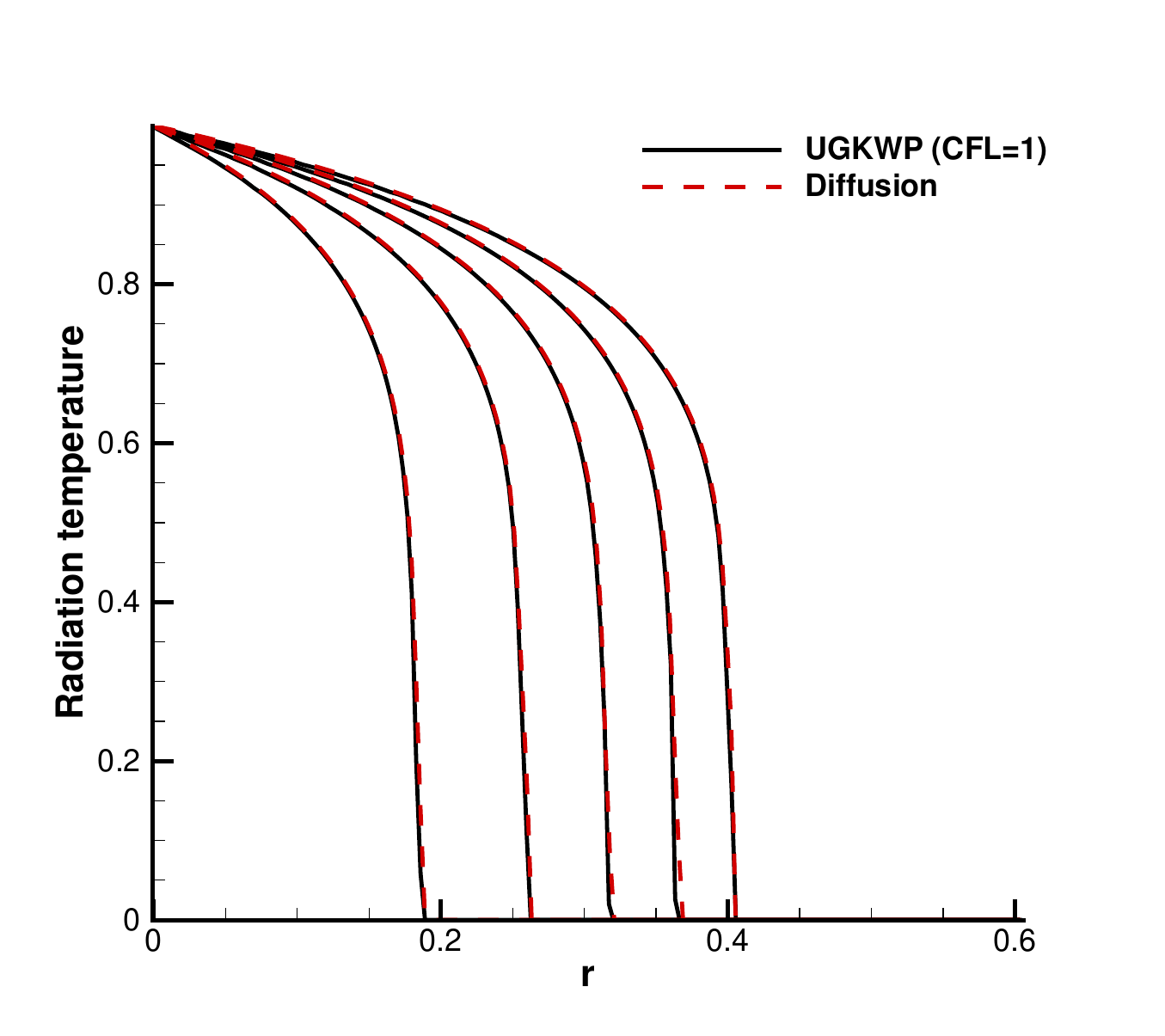}}
  \caption{Comparison of the material and radiation temperatures between IUGKWP and Diffusion solution at 
  $t=15, 30, 45, 60, 74$ for Marshak wave-2B problem.}
  \label{fig_marshakb2}
\end{figure}

\begin{figure}
  \centering
  \subfigure[Material temperature]{\includegraphics[width=0.49\textwidth]{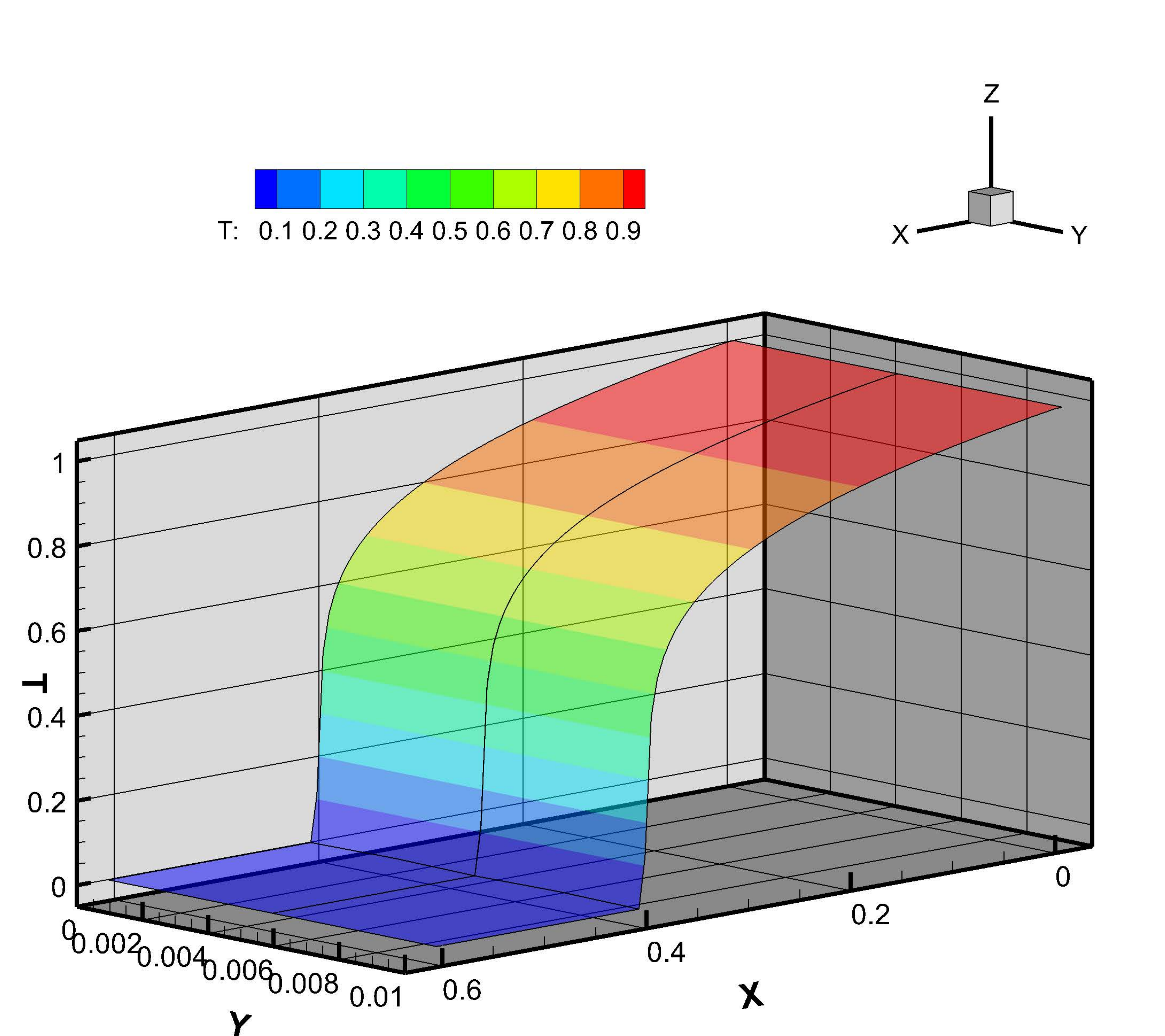}}
  \subfigure[Radiation temperature]{\includegraphics[width=0.49\textwidth]{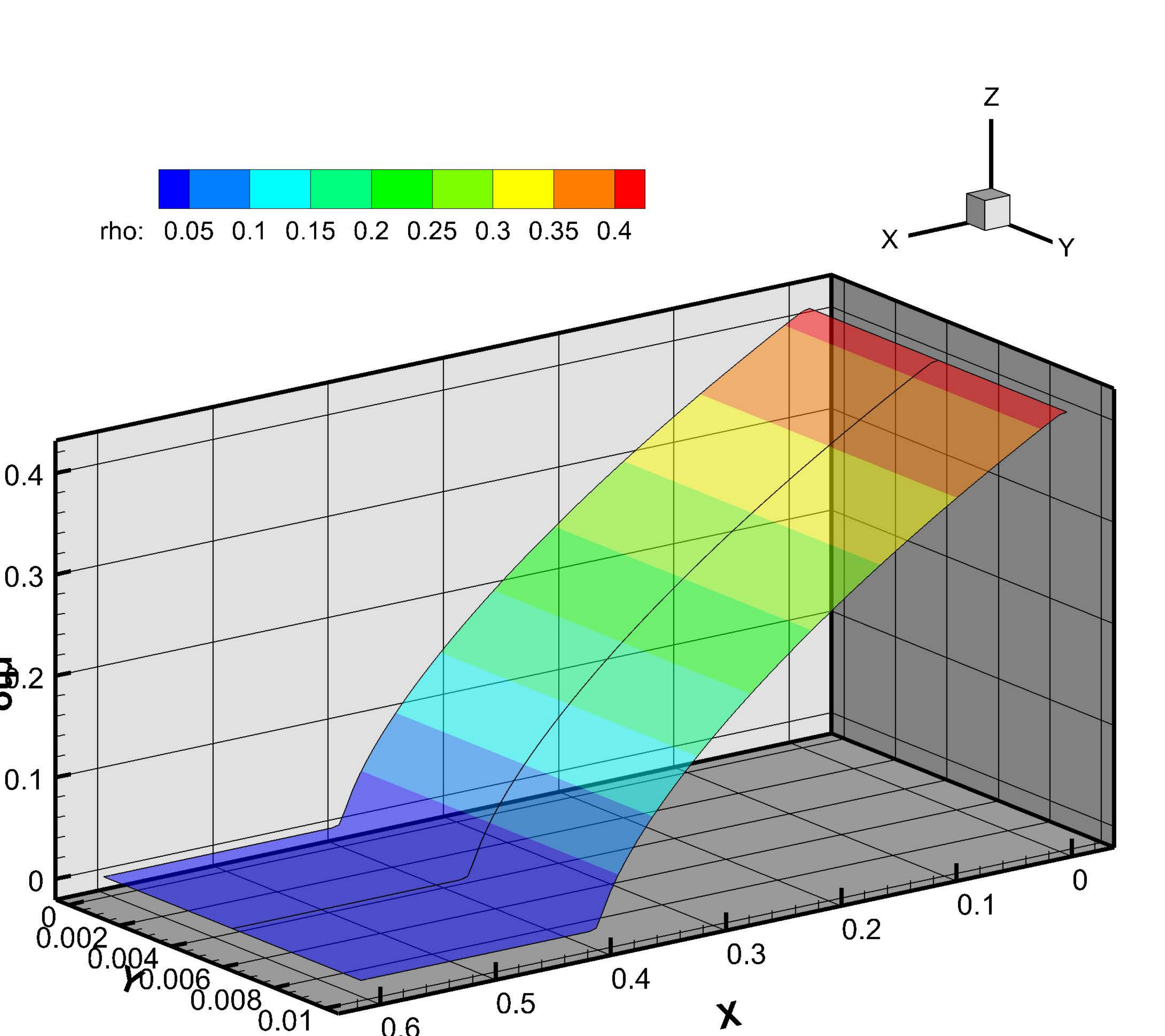}}
  \caption{Comparison of the material temperature and radiation density contour between IUGKWP $(y<0.4)$ 
  and diffusion solution $(y>0.4)$ at $t=74$.}
  \label{fig_marshakb3}
\end{figure}

\subsection{Tophat-A problem}
The tophat problem is also known as the crooked pipe problem,
which describes radiation wave propagation inside a two-dimensional domain consisting of optically thick and optically thin regions.
The computational domain is a $[0,7\text{cm}]\times[0,2\text{cm}]$ square in the x and y directions.
The optically thick regions with $\sigma=2\times10^{3}\text{cm}^{-1}$ are located in the regions
$[3.0, 4.0]\times[0, 1.0]$, $[0, 2.5]\times[0.5, 2.0]$, $[4.5, 7.0]\times[0.5, 2.0]$, and $[2.5, 4.5]\times[1.5, 2.0]$.
The optically thin material with $\sigma=2.0\times10^{-1}\text{cm}^{-1}$ occupies the rest regions.
The heat capacities of optically thin and optically thick regions are $0.001 \text{GJ/KeV/cm}^3$ and $1.0 \text{GJ/KeV/cm}^3$.
Five probes are placed at $(0.25, 0)$, $(2.75, 0)$, $(3.5, 1.25)$, $(4.25, 0)$, and $(6.75, 0)$ to monitor the change of the temperature in the thin opacity material.
The system is initially in equilibrium at the temperature of $0.05\text{KeV}$.
A $0.5\text{KeV}$ isotropic surface source is applied on the left boundary for $0<r<0.5\text{cm}$.
The computational domain is discretized into triangular mesh with mesh size $5\times10^{-2}\text{cm}$.
The time step is set as $\text{CFL}=0.5+0.003t$, which increases from $0.5$ to $30.5$ for a simulation time of $1000\text{ns}$.
The reference particle energy is $5\times10^{-10}\text{GJ}$.

We compare the time evolution of radiation energy at five probes with reference SN solution as shown in Fig. \ref{fig_tophata1}.
The radiation wave propagates fast for the first several nanoseconds at approximately the speed of light and slows down as it reaches the crooked region.
Therefore, the CFL number is set to $0.5$ initially and gradually increases to $30.5$.
The dynamically increasing time step of IUGKWP captures the time evolution of radiation energy accurately and effectively.
The tophat problem is well-designed to examine the numerical dissipation of the scheme.
If a scheme is over-dissipated, the radiation energy will teleport excessively into the optically thick material
and slows down the propagation speed of the radiation wave in the optically thin region.
We compare the radiation energy profile and distribution to the reference SN solution at $t=500\text{ns}$ as shown in Fig. \ref{fig_tophata2}-\ref{fig_tophata3},
which shows the accurate propagation speed and proper control of dissipation in IUGKWP.

For efficiency, to finish $1000\text{ns}$ simulation, the IUGKWP takes 14112mins, and the IMC method takes 23990mins under the same mesh and time step.
The time saving is due to less number of particles and a more effective collision algorithm.

\begin{figure}
  \centering
  \includegraphics[width=0.6\textwidth]{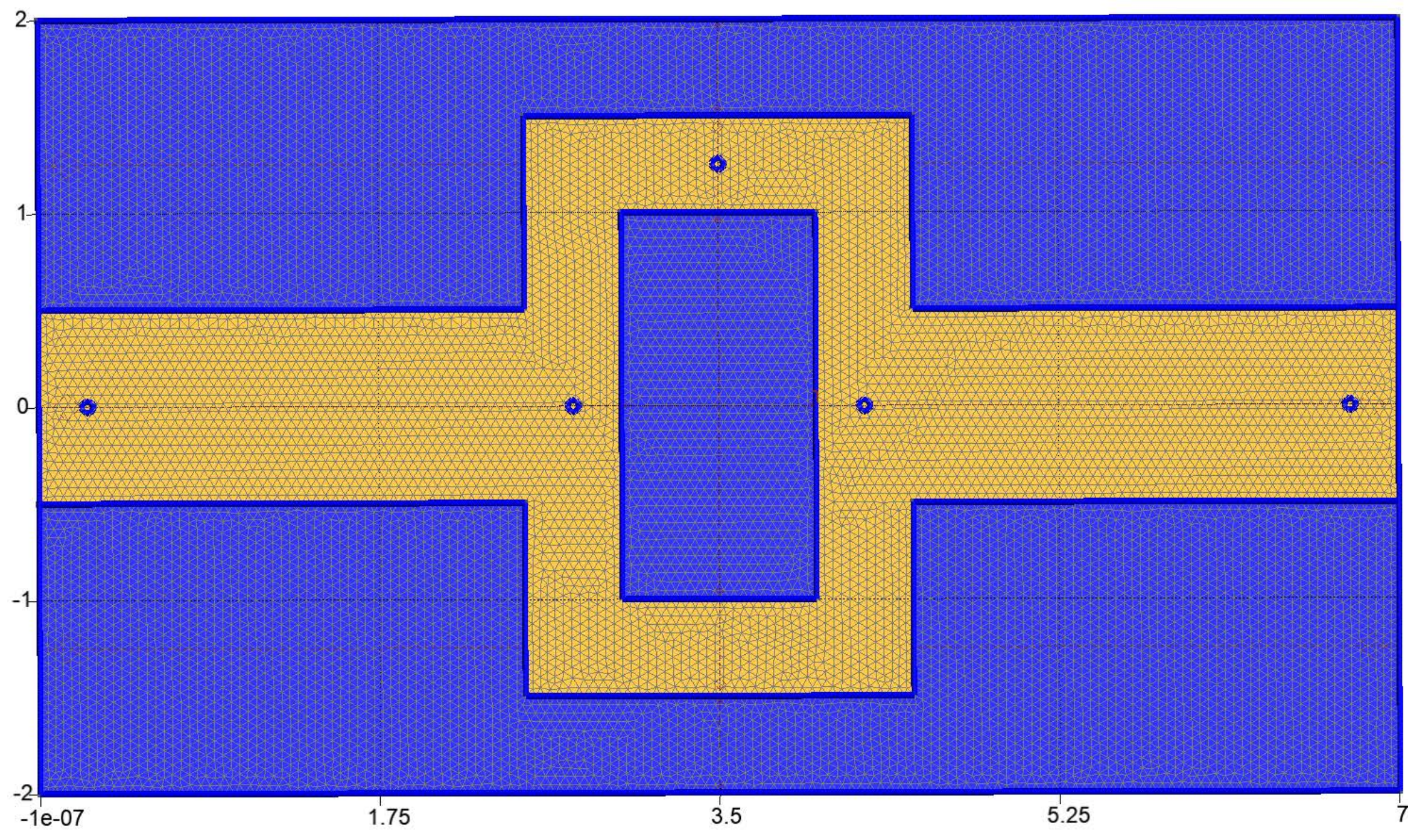}
  \caption{The geometry and mesh of the tophat problem.}
  \label{fig_tophata0}
\end{figure}

\begin{figure}
  \centering
    \begin{minipage}{0.49\textwidth}
        \centering 
        \includegraphics[width=1.0\textwidth]{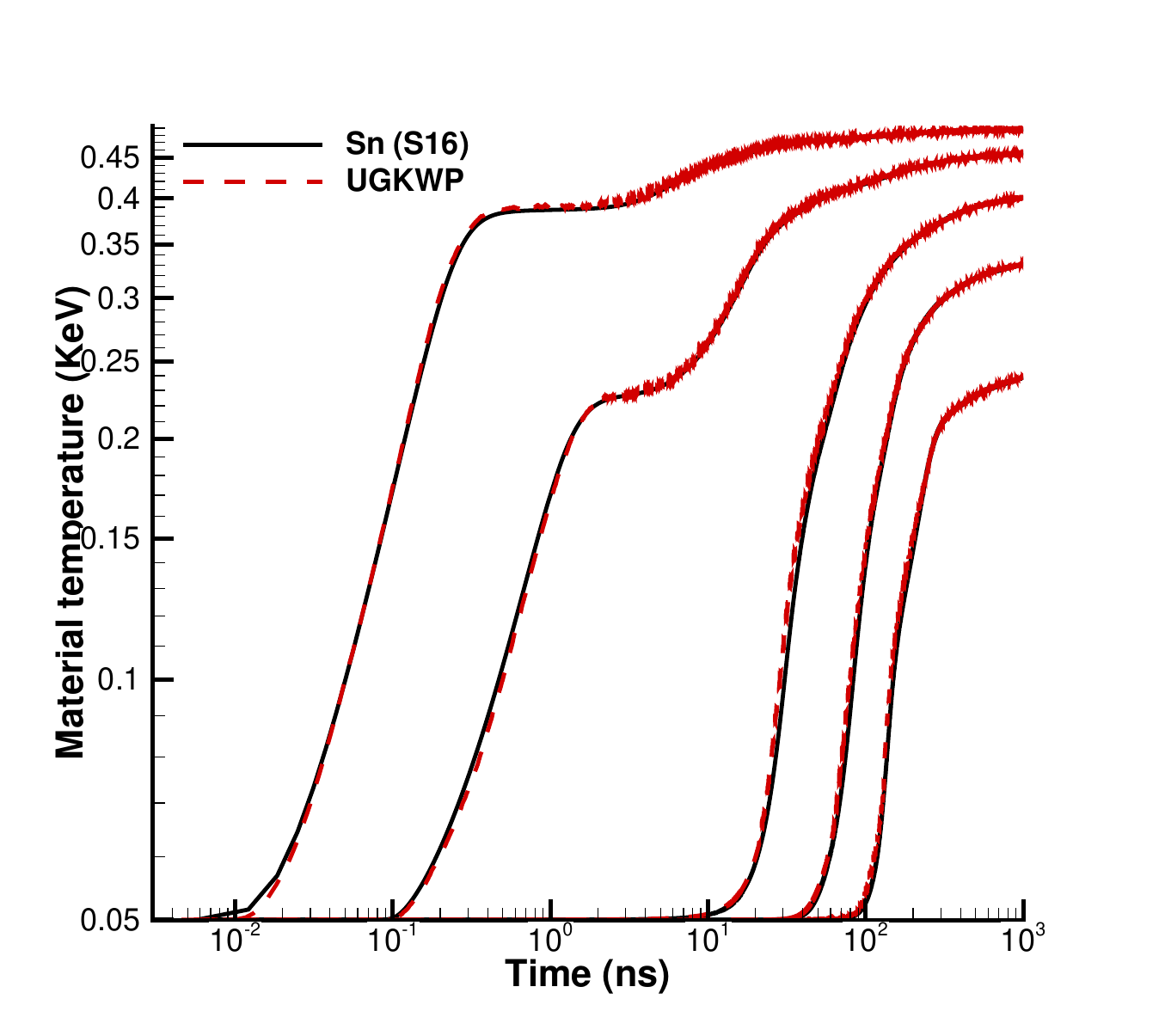}
        \caption{Time evolution of $T_e$ at five probes.}
        \label{fig_tophata1}
    \end{minipage}
    \begin{minipage}{0.49\textwidth}
        \centering
        \includegraphics[width=1.0\textwidth]{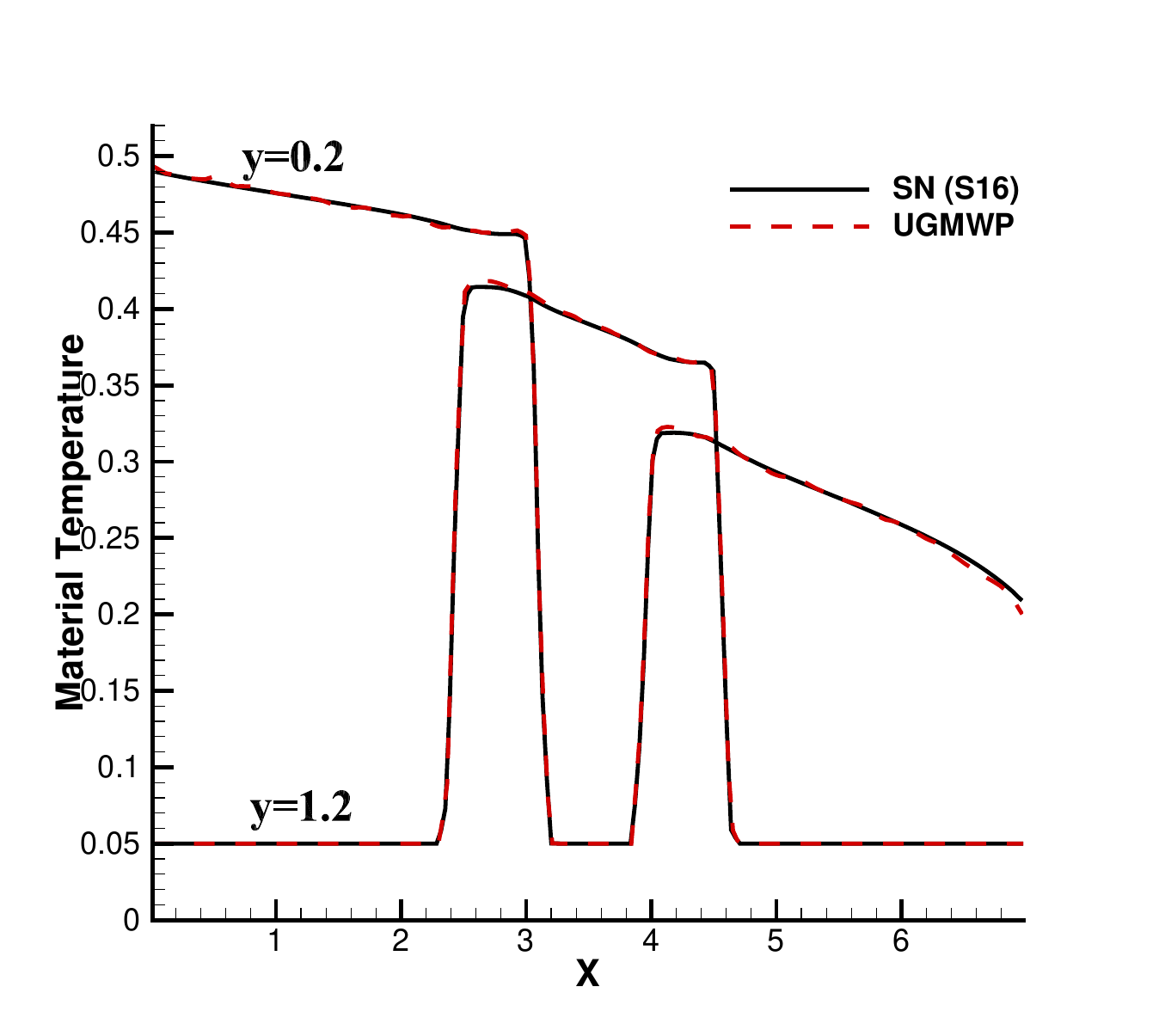}
        \caption{Comparison of $T_e$ between IUGKWP and SN along $y=0.2,1.2$ at time $t=500\text{ns}$.}
        \label{fig_tophata2}
    \end{minipage}
\end{figure}

\begin{figure}
  \centering
  \subfigure[Material temperature]{\includegraphics[width=0.49\textwidth]{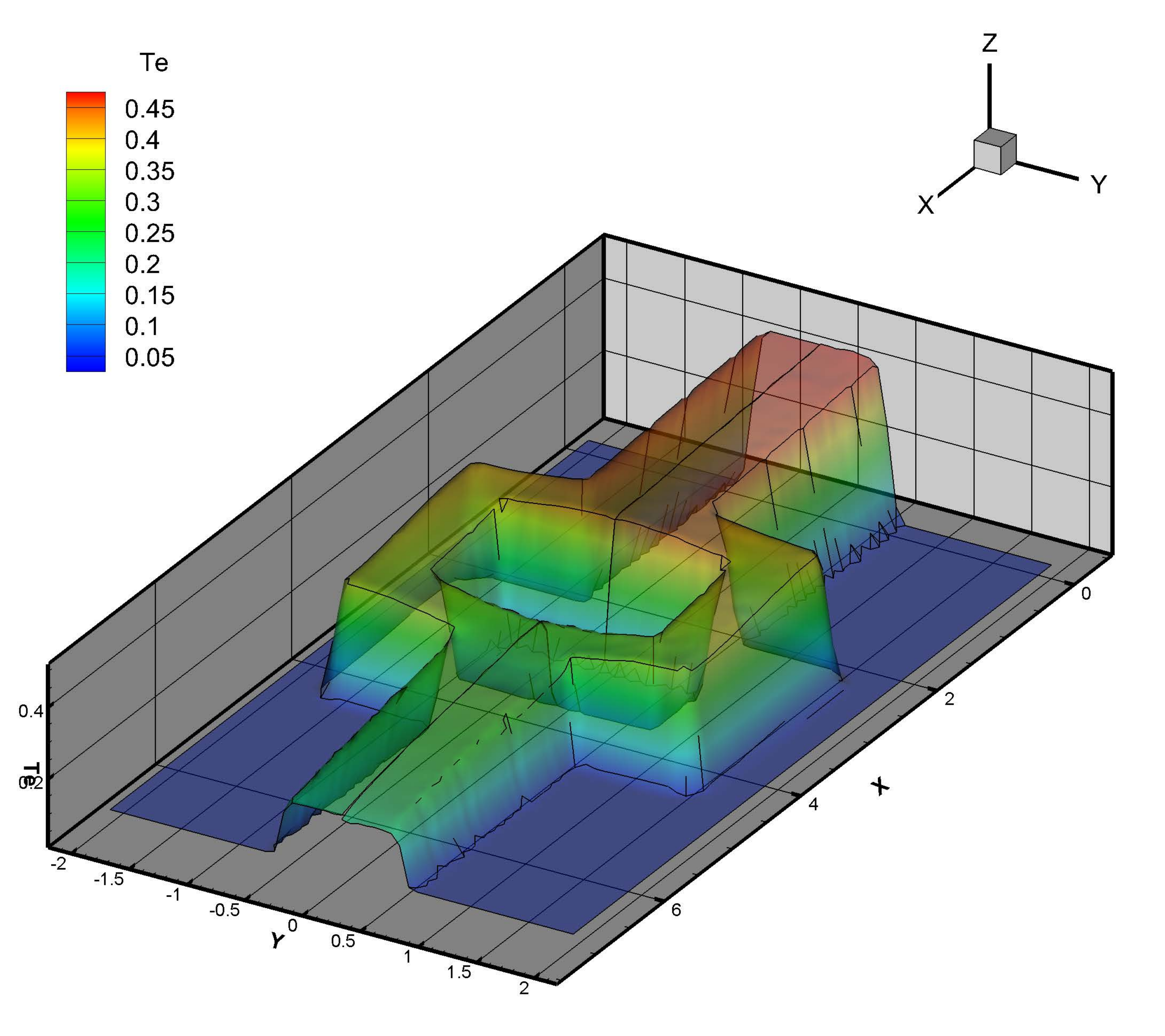}}
  \subfigure[Radiation temperature]{\includegraphics[width=0.49\textwidth]{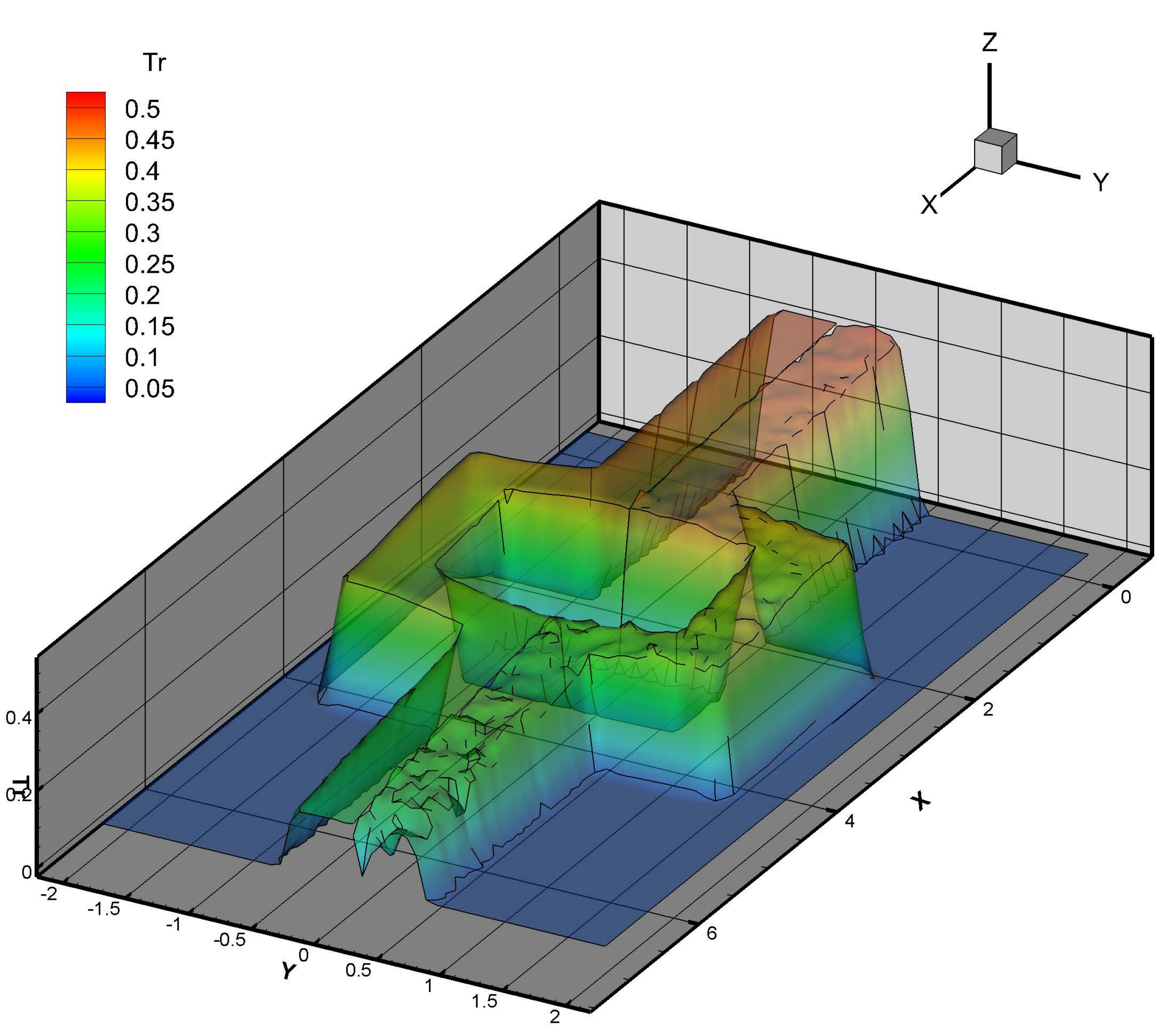}}
  \caption{Comparison of the material and radiation temperature contour between IUGKWP $(y>0)$ and SN $(y<0)$ at $t=500\text{ns}$.}
  \label{fig_tophata3}
\end{figure}

\subsection{Tophat-B problem}
The tophat-B problem shares the same geometry, initial condition, and boundary condition as the tophat-A problem.
The opacity of optically thin material is increased to $\sigma=1.0\times10^{2}\text{cm}^{-1}$ and the opacity of optically thick material is increased to $\sigma=2\times10^{4}\text{cm}^{-1}$.
The heat capacities of optically thin and optically thick regions are kept the same as tophat-A, i.e., $0.001\text{GJ/KeV/cm}^3$ and $1.0\text{GJ/KeV/cm}^3$ respectively.
We place five probes $(0.25, 0)$, $(2.75, 0)$, $(3.5, 1.25)$, $(4.25, 0)$, and $(6.75, 0)$ to monitor the change of the temperature in the thin opacity material.
The computational domain is discretized into triangular mesh with mesh size $5\times10^{-2}\text{cm}$.
The time step is set as $\text{CFL}=0.5+0.003t$, which increases from $0.5$ to $30.5$ for a simulation time of $1000\text{ns}$.
The reference particle energy is $5\times10^{-10}\text{GJ}$.

We compare the time evolution of radiation energy at five probes with reference SN solution as shown in Fig. \ref{fig_tophatb1}-\ref{fig_tophatb3},
which show good agreement in both the time evolution and spatial distribution of the radiation energy.
We simulate a time period of $1\times10^{3}\text{ns}$, which takes the IUGKWP only 1383.2mins.
For the IMC method, it is too expensive to finish the $1000\text{ns}$ simulation. It takes IMC 16943.6mins to simulate $10\text{ns}$.
It can be roughly estimated that the IUGKWP is more than 100 times faster than IMC for the tophat-B problem.

\begin{figure}
  \centering
    \begin{minipage}{0.49\textwidth}
        \centering 
        \includegraphics[width=1.0\textwidth]{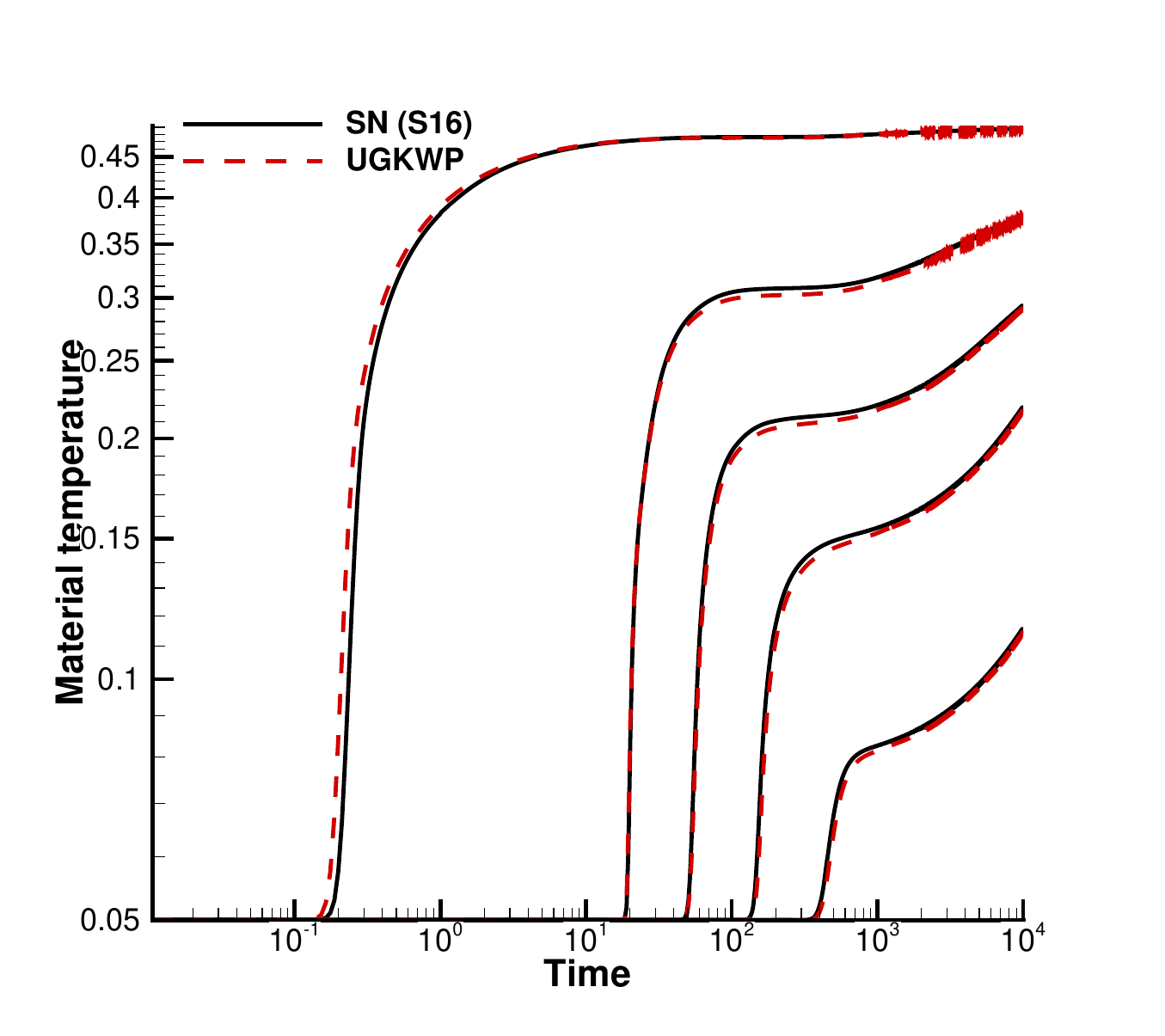}
        \caption{Time evolution of $T_e$ at five probes.}
        \label{fig_tophatb1}
    \end{minipage}
    \begin{minipage}{0.49\textwidth}
        \centering
        \includegraphics[width=1.0\textwidth]{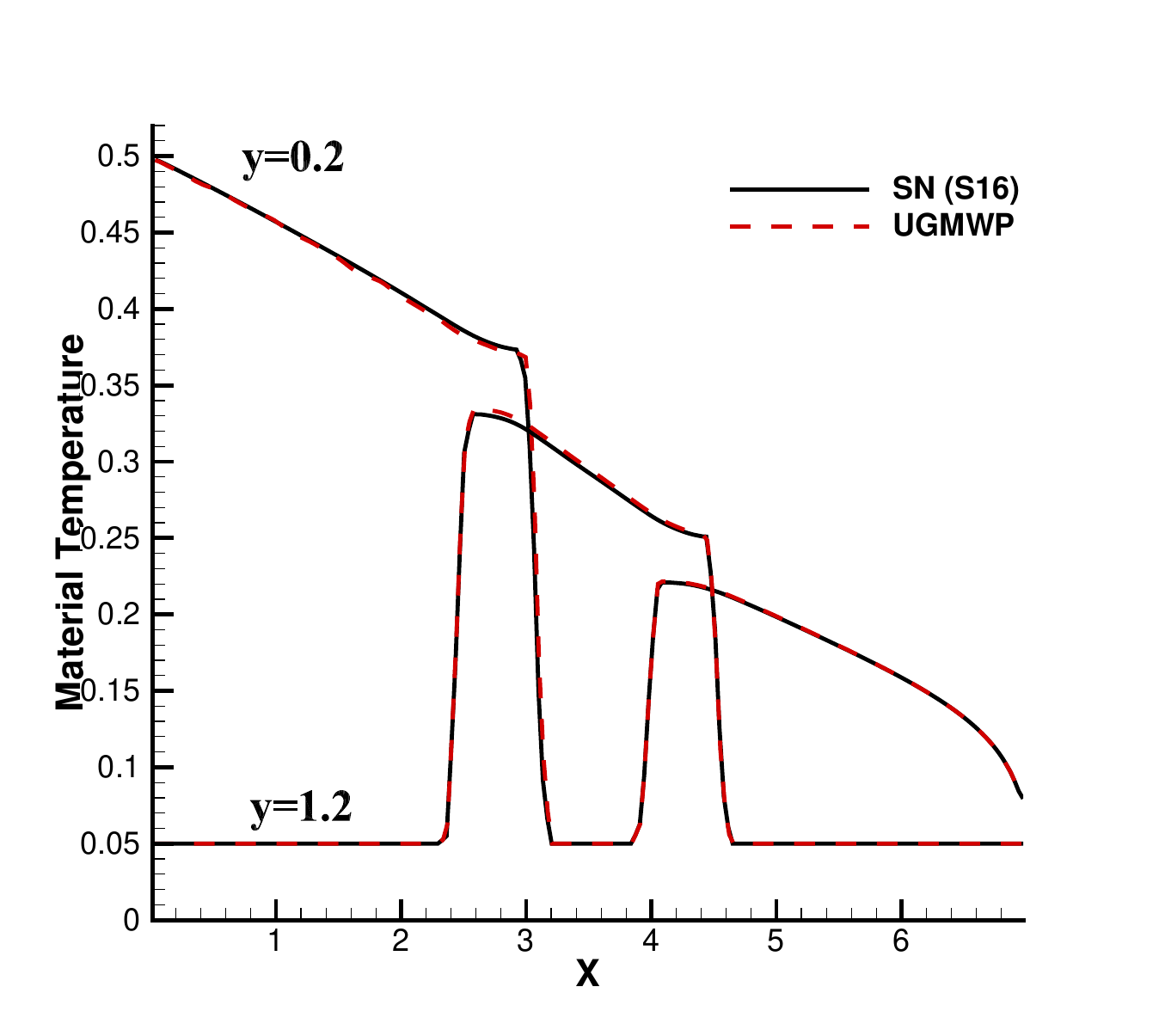}
        \caption{Comparison of $T_e$ between IUGKWP and SN along $y=0.2,1.2$.}
        \label{fig_tophatb2}
    \end{minipage}
\end{figure}

\begin{figure}
  \centering
  \subfigure[Material temperature]{\includegraphics[width=0.49\textwidth]{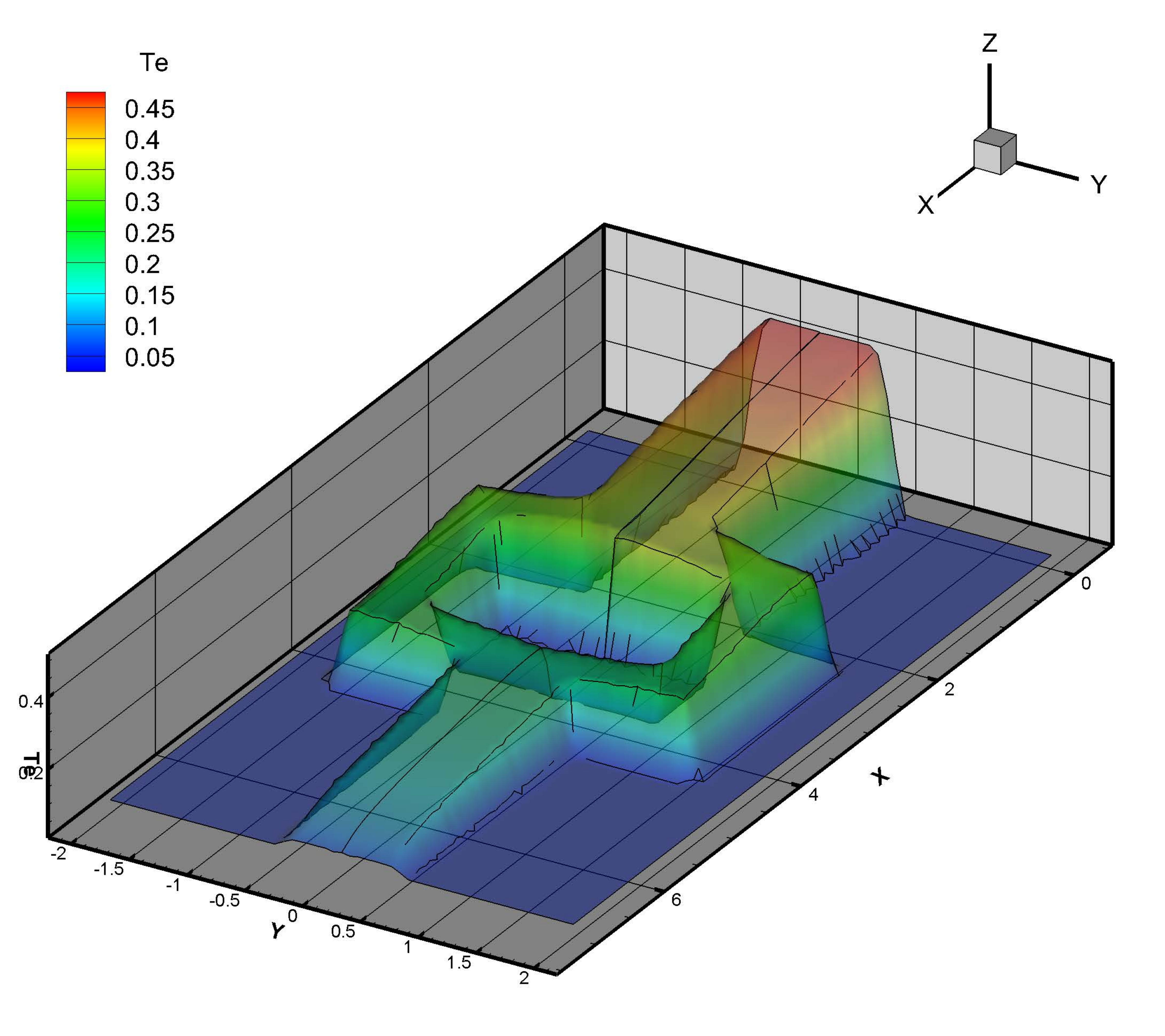}}
  \subfigure[Radiation temperature]{\includegraphics[width=0.49\textwidth]{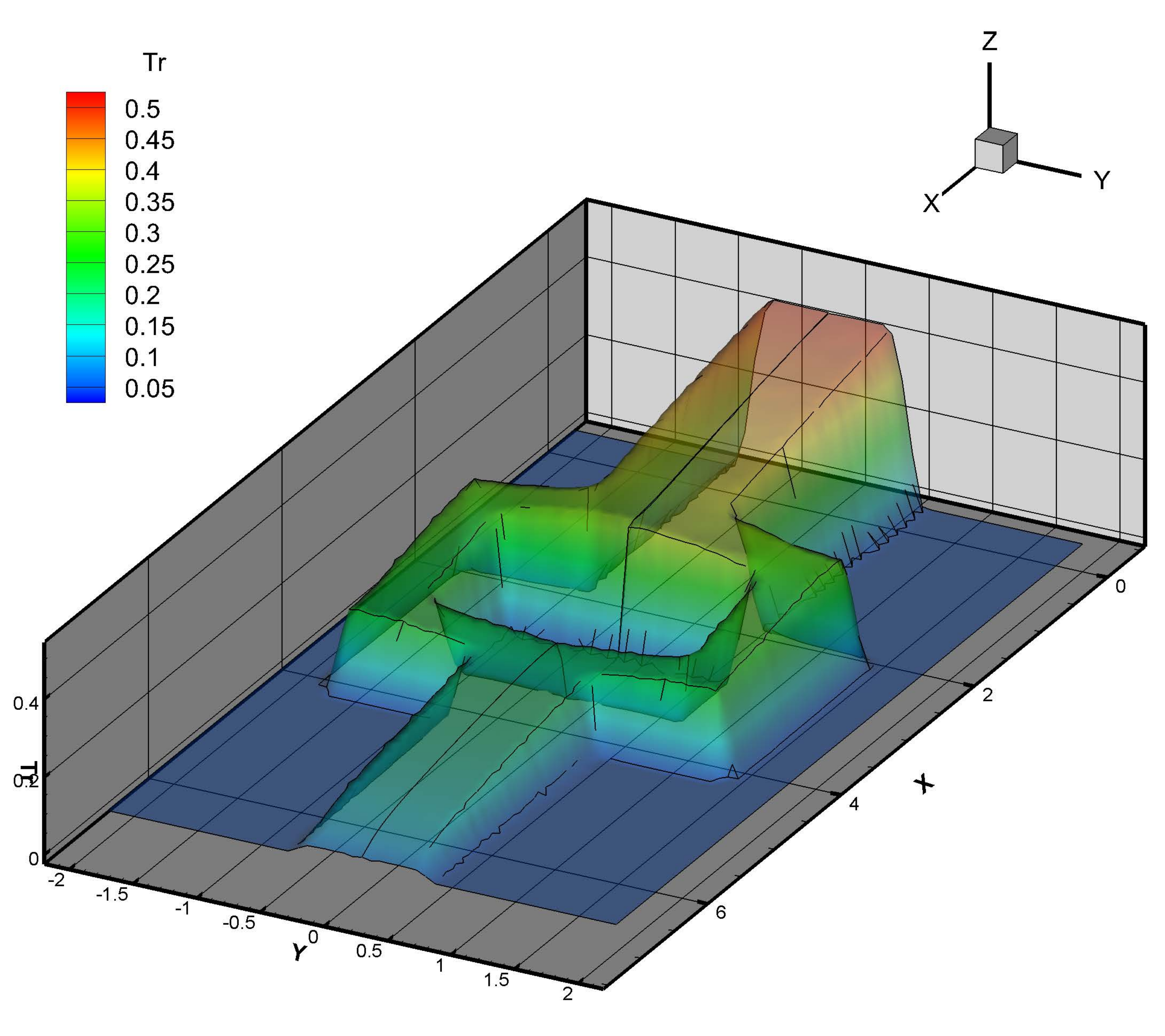}}
  \caption{Comparison of the material and radiation temperature contour between IUGKWP $(y>0)$ and SN $(y<0)$ at $t=500\text{ns}$.}
  \label{fig_tophatb3}
\end{figure}

\subsection{Square hohlraum problem}
The study of hohlraum is one of the key topics in ICF.
The geometry of the square hohlraum is shown in Fig. \ref{fig_hohlraum2dgeo}.
The hohlraum boundary and capsule are filled with optically thick material with $\sigma=2.0\times10^{3}$ and capacity $C_v=1$.
The hohlraum cavity is filled with optically thin material with $\sigma=2.0\times10^{-1}$ and capacity $C_v=1\times10^{-2}$.
The system is initially in equilibrium at the temperature of $0.05\text{KeV}$.
A $0.5\text{KeV}$ isotropic surface source is applied on the left boundary.
The computational domain is discretized into triangular mesh with diameter $\Delta x=0.01$,
and the CFL number is set to 30.
The material and radiation temperature evolution predicted by IUGKWP agrees well with SN results.
As shown in Fig. \ref{fig_hohlraum2da1},
the IUGKWP temperature contour ($y>0$) and SN temperature contour ($y<0$) agree well at $t=1\text{ns}$.
We also compare the material temperature on the capsule surface in Fig. \ref{fig_hohlraum2da2}
and the temperature profile along $x=0$ and $y=0$ in Fig. \ref{fig_hohlraum2da3}.

\begin{figure}
  \centering
  \subfigure[Material temperature]{\includegraphics[width=0.43\textwidth]{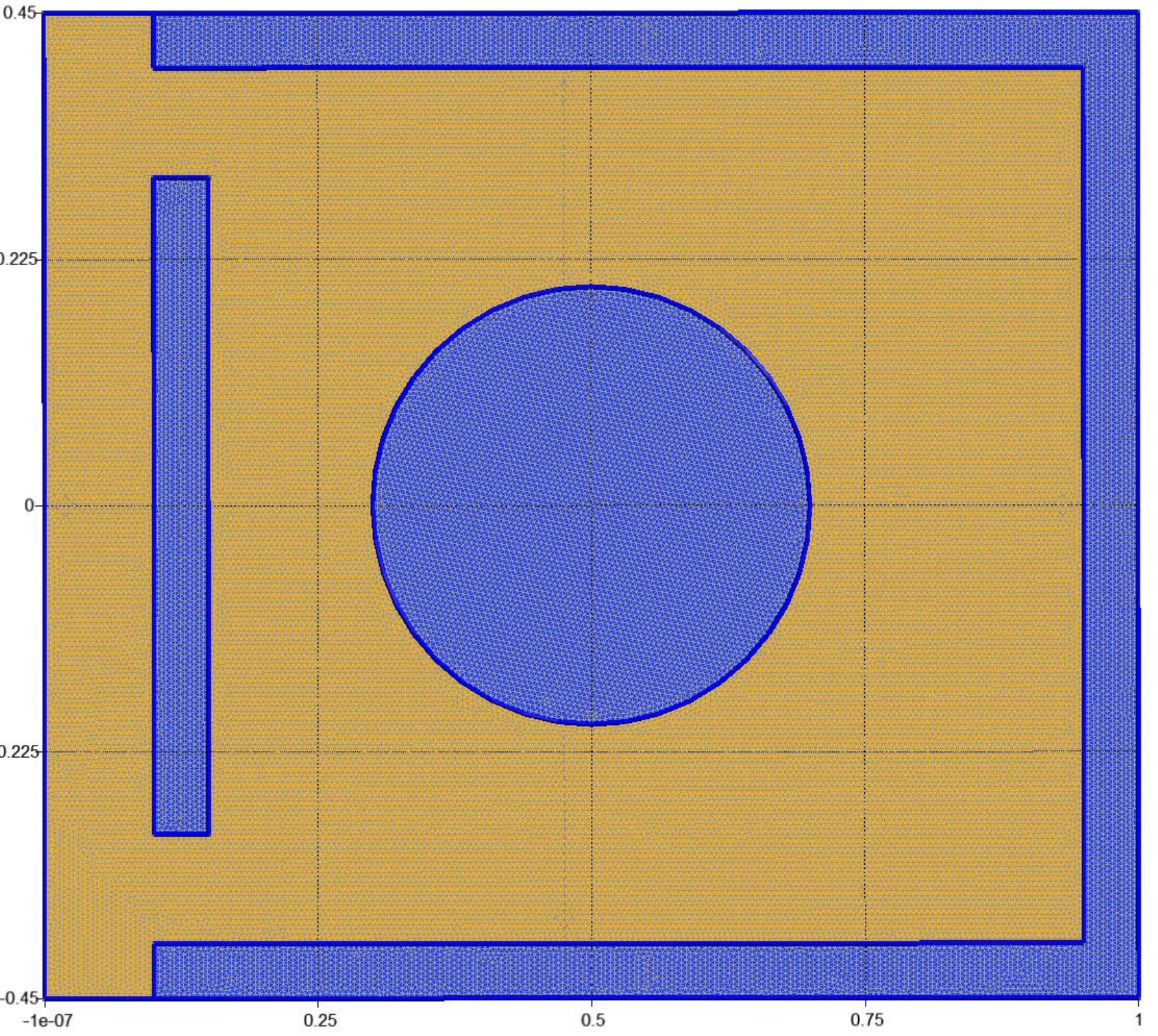}}
  \hspace{15mm}
  \subfigure[Radiation temperature]{\includegraphics[width=0.387\textwidth]{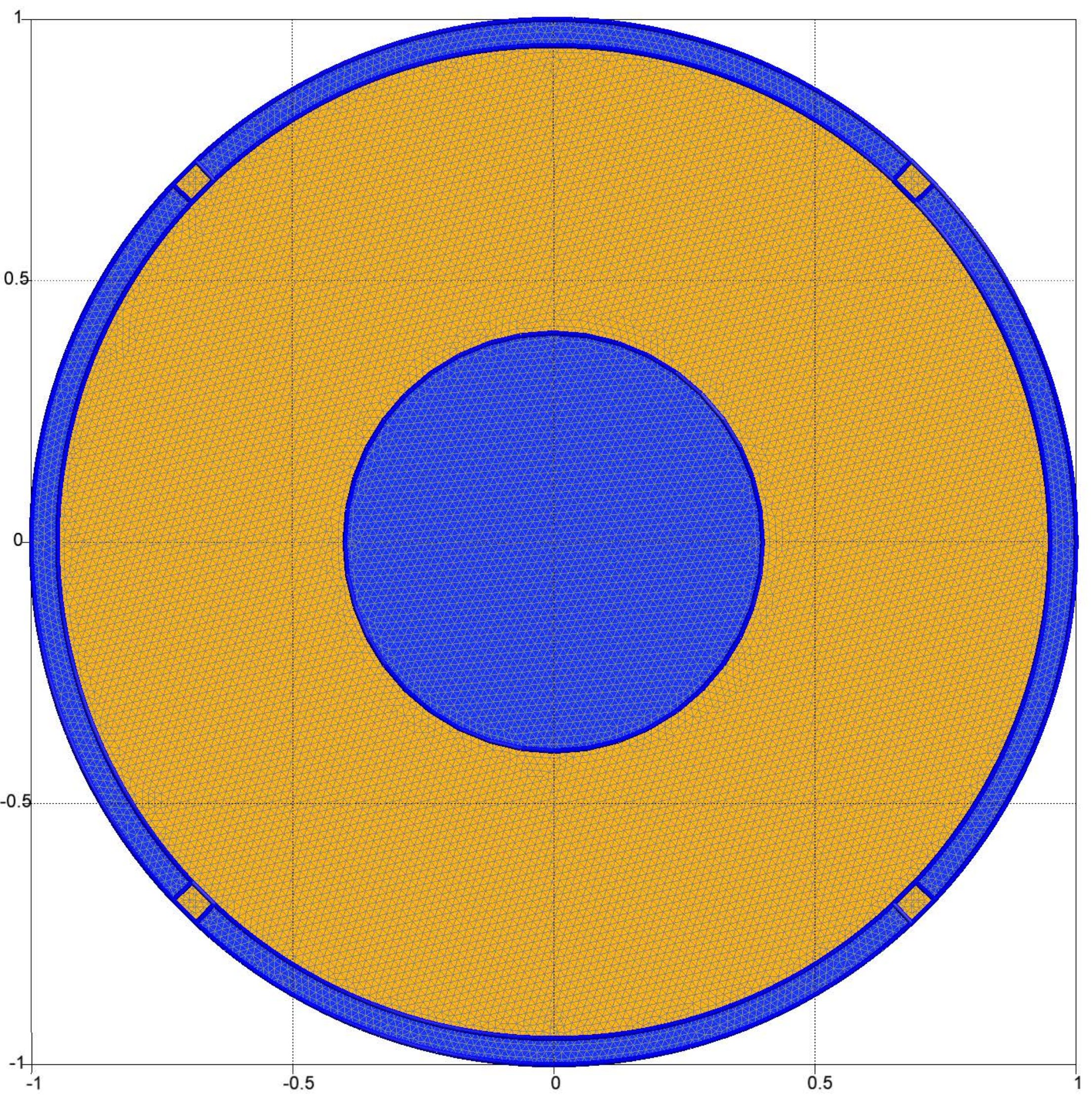}}
  \caption{The geometry and mesh of 2D square and circular hohlraum problem.}
  \label{fig_hohlraum2dgeo}
\end{figure}

\begin{figure}
  \centering
  \subfigure[Material temperature]{\includegraphics[width=0.49\textwidth]{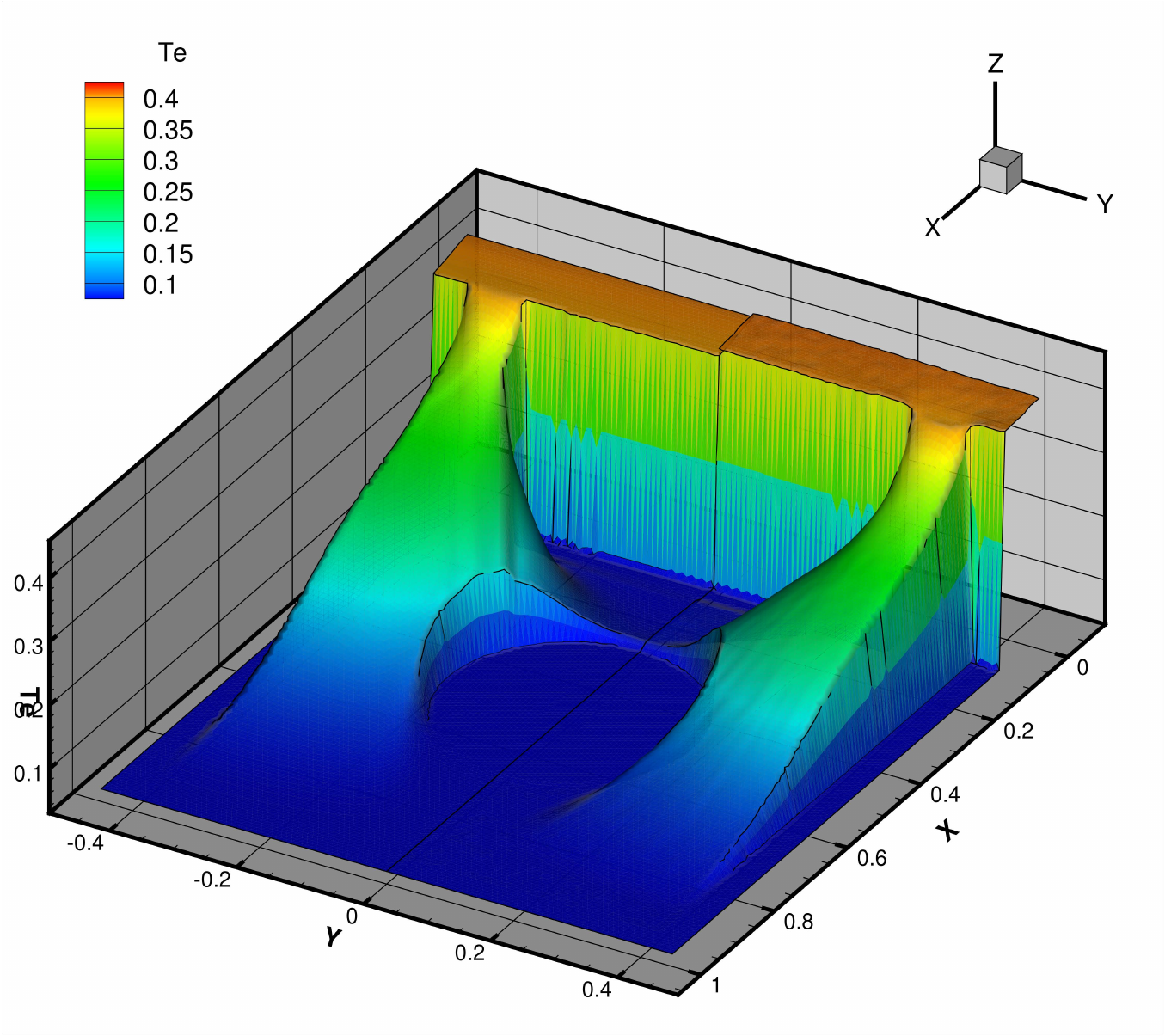}}
  \subfigure[Radiation temperature]{\includegraphics[width=0.49\textwidth]{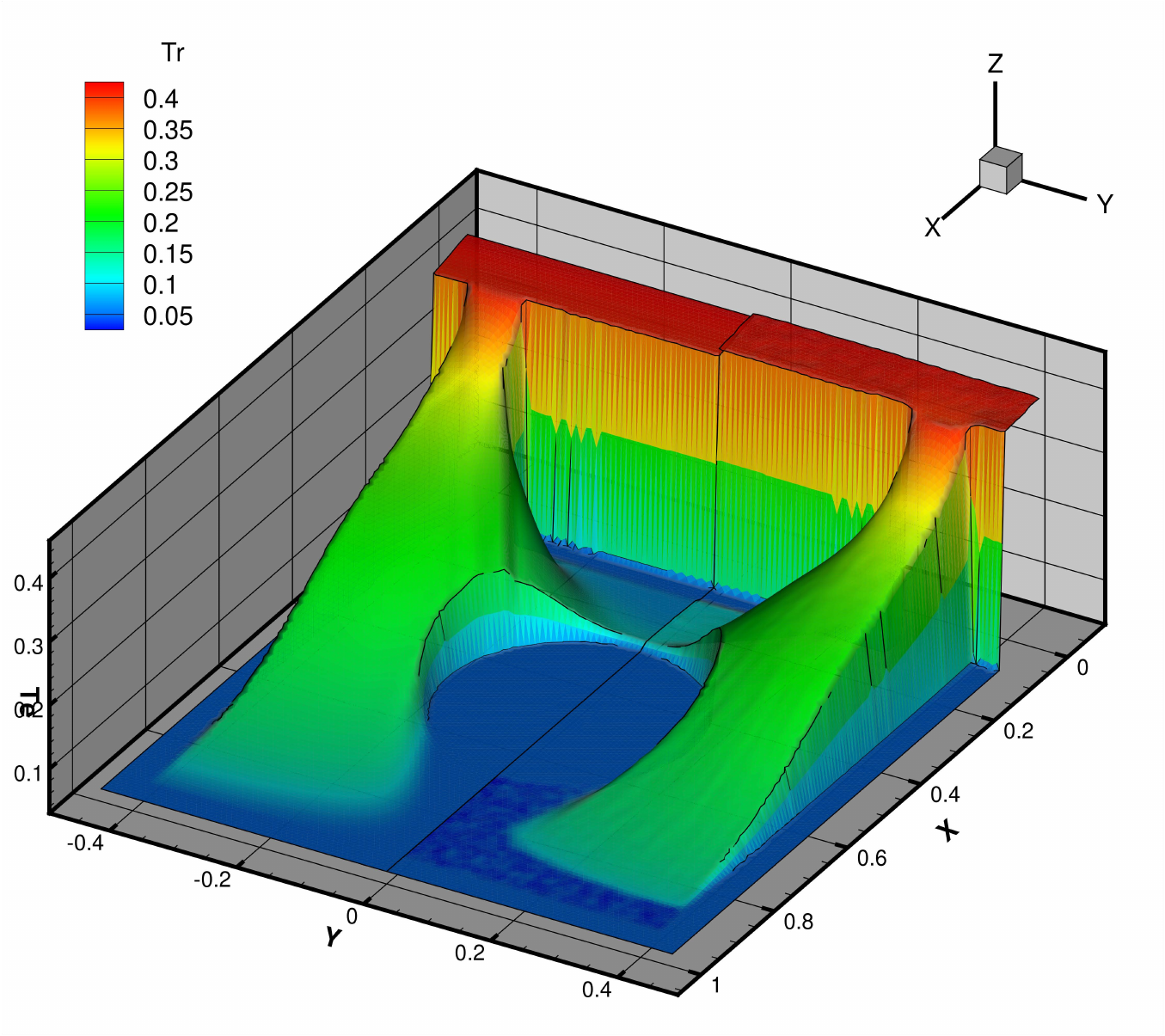}}
  \caption{Comparison of the material and radiation temperature contour between IUGKWP $(y>0)$ and SN $(y<0)$ at $t=1\text{ns}$.}
  \label{fig_hohlraum2da1}
\end{figure}

\begin{figure}
  \centering
    \begin{minipage}{0.49\textwidth}
        \centering 
        \includegraphics[width=1.0\textwidth]{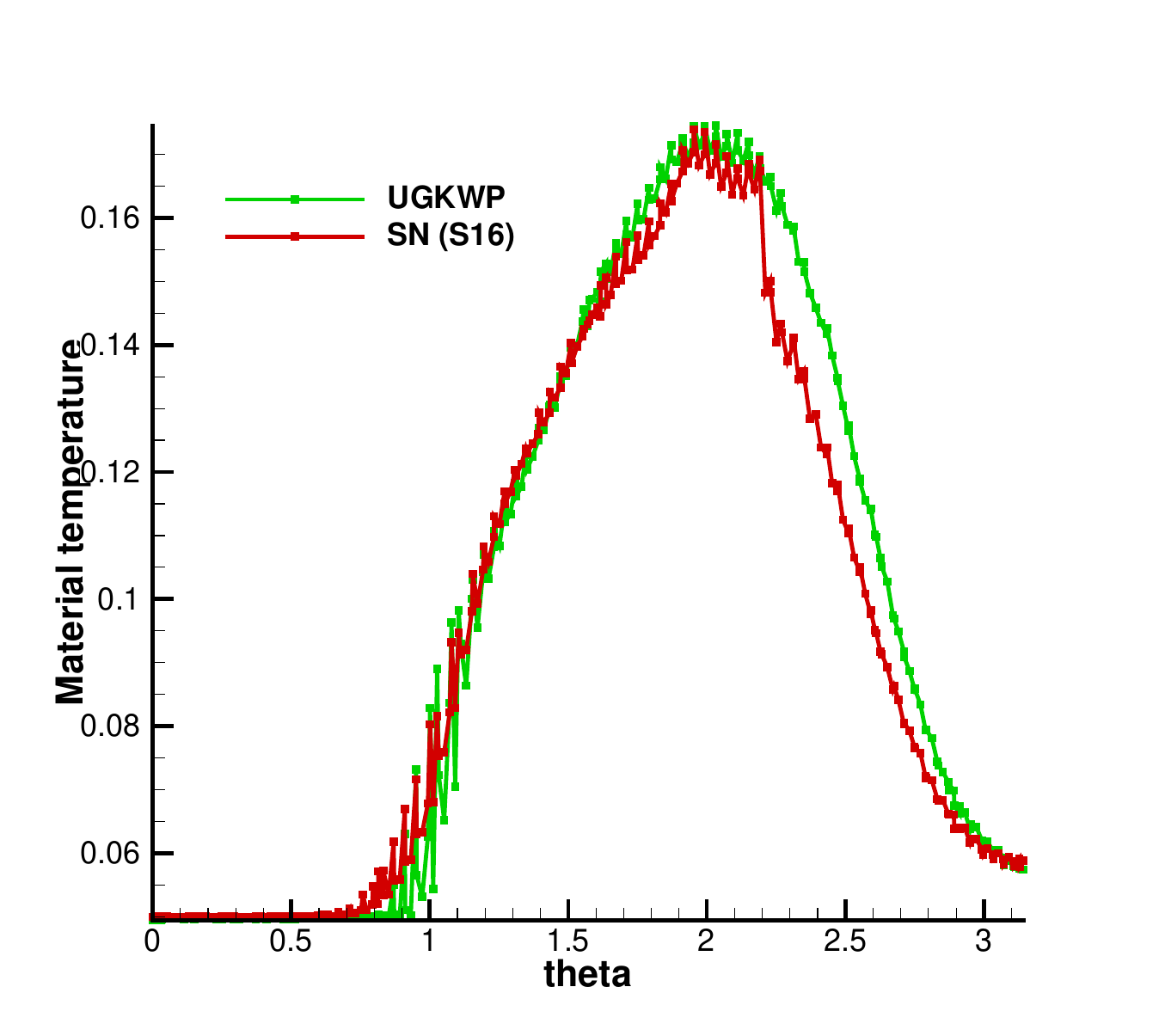}
        \caption{Material temperature on capsule surface.}
        \label{fig_hohlraum2da2}
    \end{minipage}
    \begin{minipage}{0.49\textwidth}
        \centering
        \includegraphics[width=1.0\textwidth]{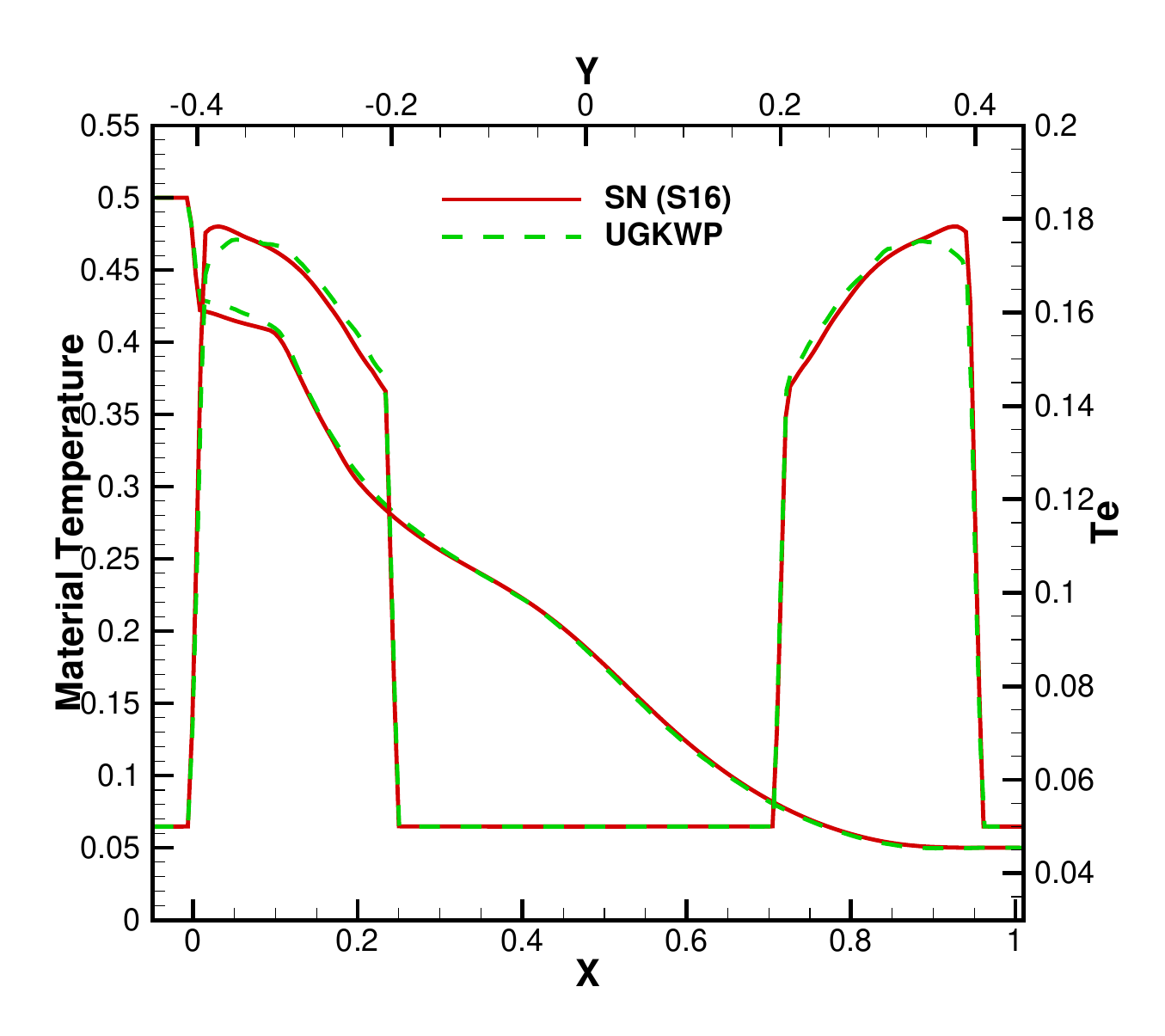}
        \caption{Comparison of $T_e$ IUGKWP and SN along $x=0$ and $y=0$.}
        \label{fig_hohlraum2da3}
    \end{minipage}
\end{figure}

\subsection{Circular hohlraum problem}
The geometry of the circular hohlraum is shown in Fig. \ref{fig_hohlraum2dgeo}.
The hohlraum boundary and capsule are filled with optically thick material with $\sigma=2.0\times10^{3}$ and capacity $C_v=1$.
The hohlraum cavity is filled with optically thin material with $\sigma=2.0\times10^{-1}$ and capacity $C_v=1\times10^{-2}$.
The system is initially in equilibrium at the temperature of $0.05\text{KeV}$.
A $0.5\text{KeV}$ isotropic surface source is applied to the injection hole.
The computational domain is discretized into triangular mesh with cell size $\Delta x=0.01$,
and the CFL number is set to 30.
The material and radiation temperature evolution predicted by IUGKWP agrees well with SN results.
As shown in Fig. \ref{fig_hohlraum2db1},
the IUGKWP temperature contour ($y>0$) and SN temperature contour ($y<0$) agree well at $t=1\text{ns}$.
We also compare the material temperature on the capsule surface in Fig. \ref{fig_hohlraum2db2}
and the temperature profile along $x=0$ and $y=0$ in Fig. \ref{fig_hohlraum2db3}.
The IUGKWP shows robustness and accuracy in the two-dimensional square and circular hohlraum tests.

\begin{figure}
  \centering
  \subfigure[Material temperature]{\includegraphics[width=0.49\textwidth]{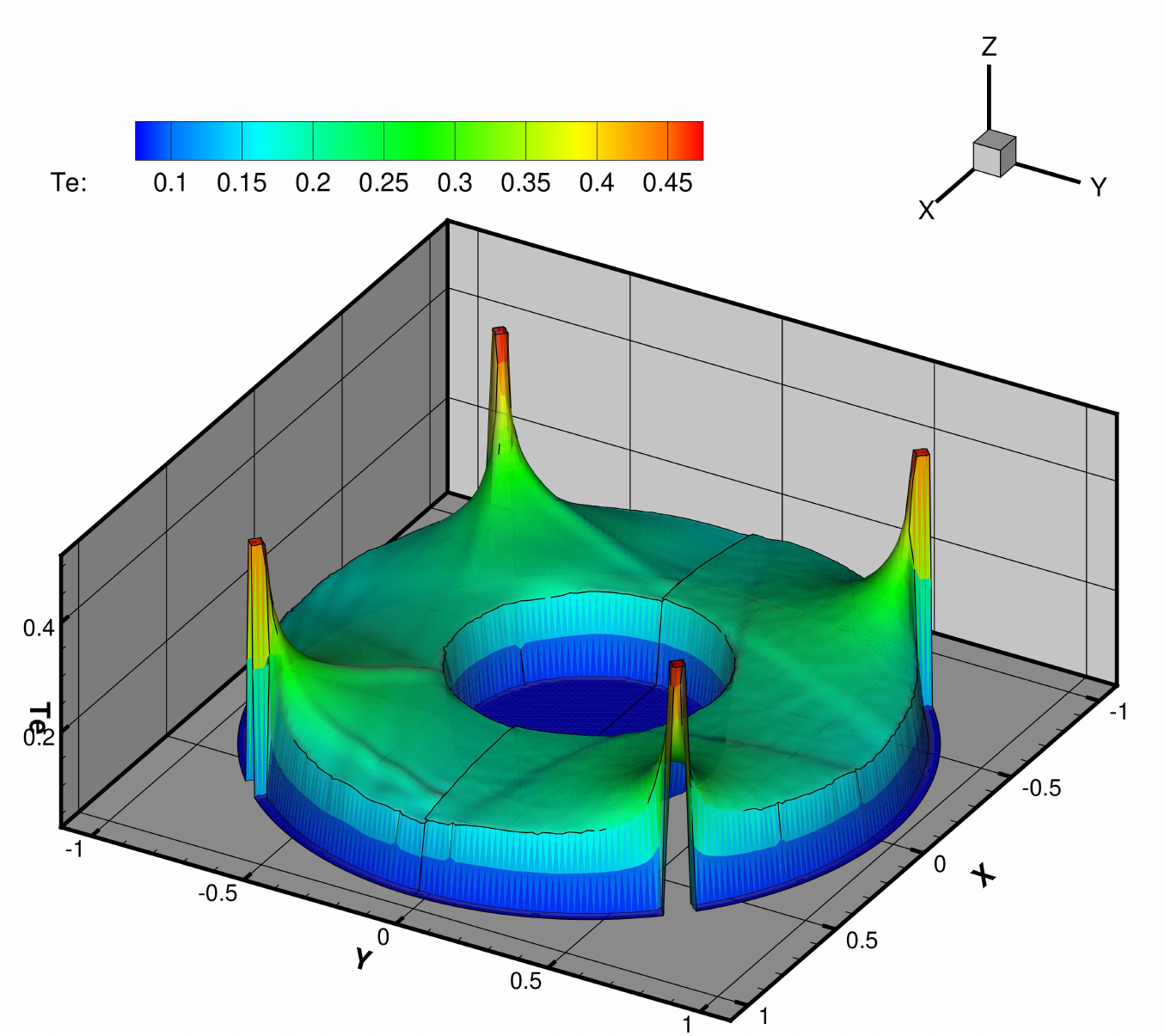}}
  \subfigure[Radiation temperature]{\includegraphics[width=0.49\textwidth]{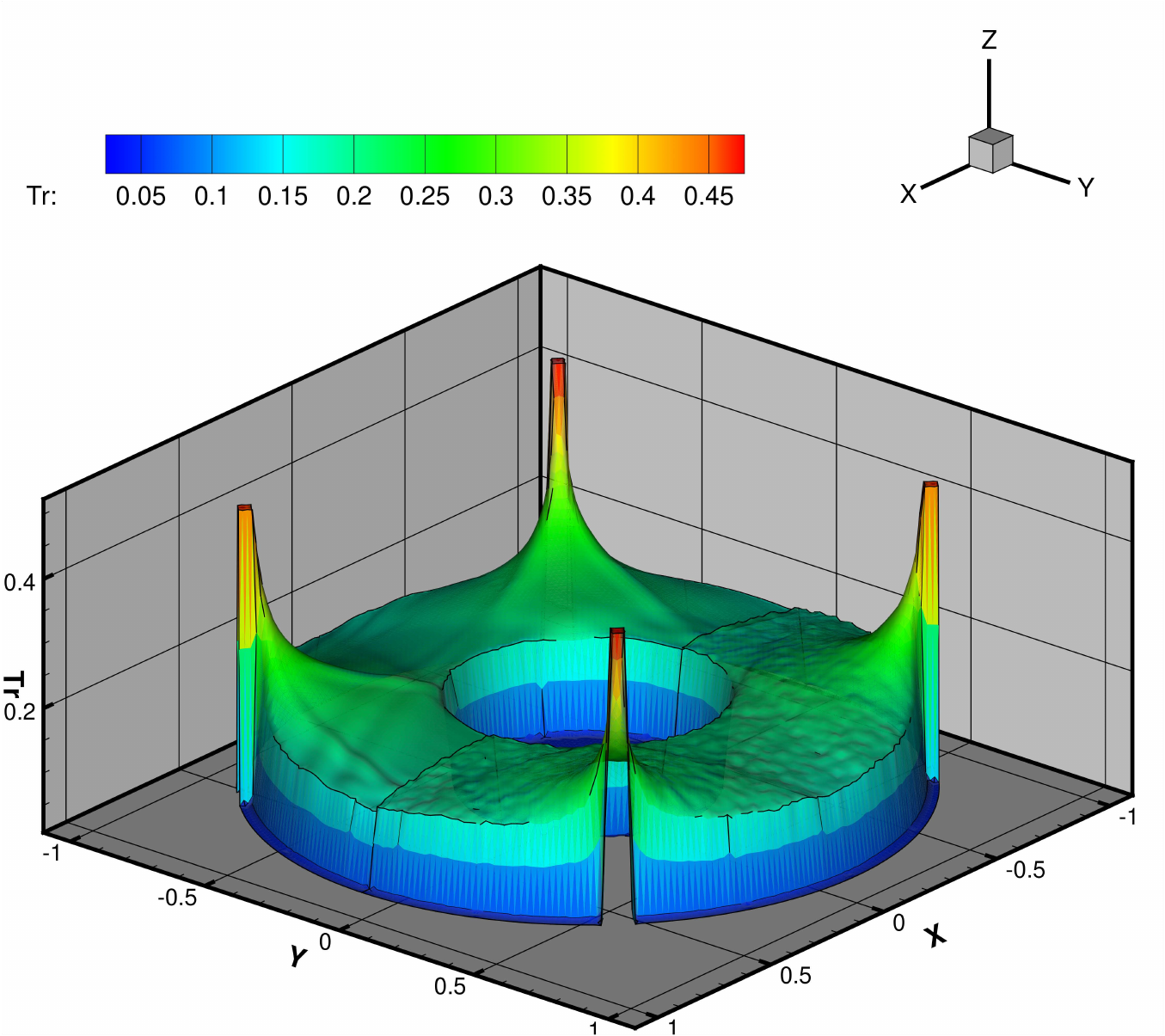}}
  \caption{Comparison of the material and radiation temperature contour between IUGKWP $(y>0)$ and SN $(y<0)$ at $t=1\text{ns}$.}
  \label{fig_hohlraum2db1}
\end{figure}

\begin{figure}
  \centering
    \begin{minipage}{0.49\textwidth}
        \centering 
        \includegraphics[width=1.0\textwidth]{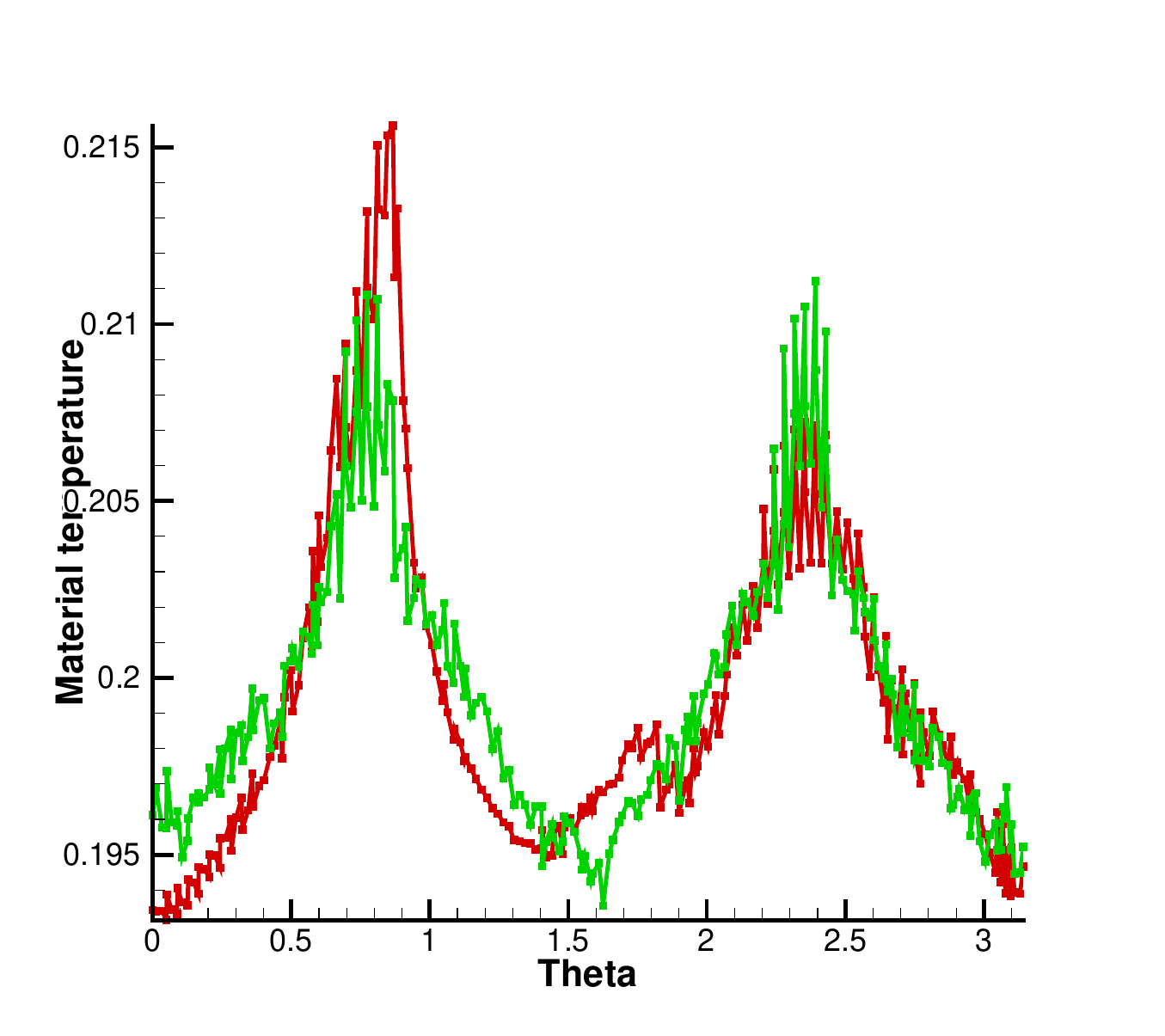}
        \caption{Material temperature on capsule surface.}
        \label{fig_hohlraum2db2}
    \end{minipage}
    \begin{minipage}{0.49\textwidth}
        \centering
        \includegraphics[width=1.0\textwidth]{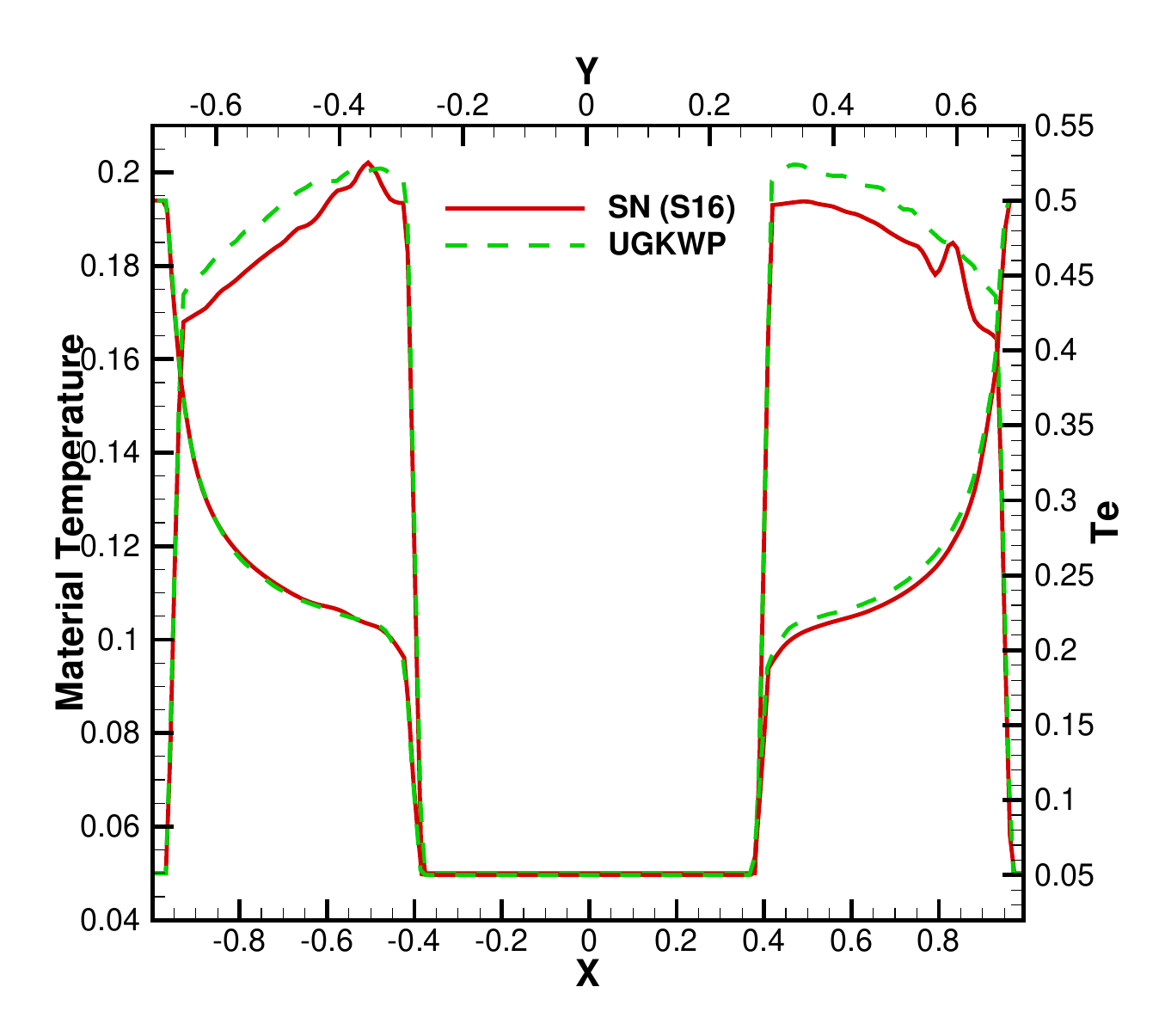}
        \caption{Comparison of $T_e$ IUGKWP and SN along $x=0$ and $y=0$.}
        \label{fig_hohlraum2db3}
    \end{minipage}
\end{figure}

\subsection{3D cylindrical hohlraum problem}
The practical ICF engineering applications require three-dimensional simulations to capture the three-dimensional effects,
such as the 3D radiation-driven asymmetry.
A three-dimensional program has been developed based on the IUGKWP algorithm to study the cylindrical hohlraum energetics in ICF.
A sketch of hohlraum geometry and mesh is shown in Fig. \ref{fig_hohlraum3d1}.
The diameter of hohlraum is $2.3\text{mm}$ in length, and the mesh size is $\Delta x=10\mu m$.
The hohlraum boundary and capsule are filled with optically thick material with $\sigma=2.0\times10^{4}$ and capacity $C_v=1$.
The hohlraum cavity and injection hole are filled with optically thin material with $\sigma=2.0\times10^{-3}$ and capacity $C_v=1\times10^{-2}$.
The system is initially in equilibrium at the temperature of $0.05\text{KeV}$,
and a $0.5\text{KeV}$ isotropic surface source is applied on the hohlraum inner surface.
It takes $mins$ to reach a simulation time of $10\text{ns}$.
The material temperature distribution in hohlraum at $t=1,5,10\text{ns}$ is shown in Fig. \ref{fig_hohlraum3d2},
and material temperature on capsule surface at $t=1,5,10\text{ns}$ is shown in Fig. \ref{fig_hohlraum3d3}.

\begin{figure}
  \centering
  \subfigure[Surface mesh]{\includegraphics[width=0.45\textwidth]{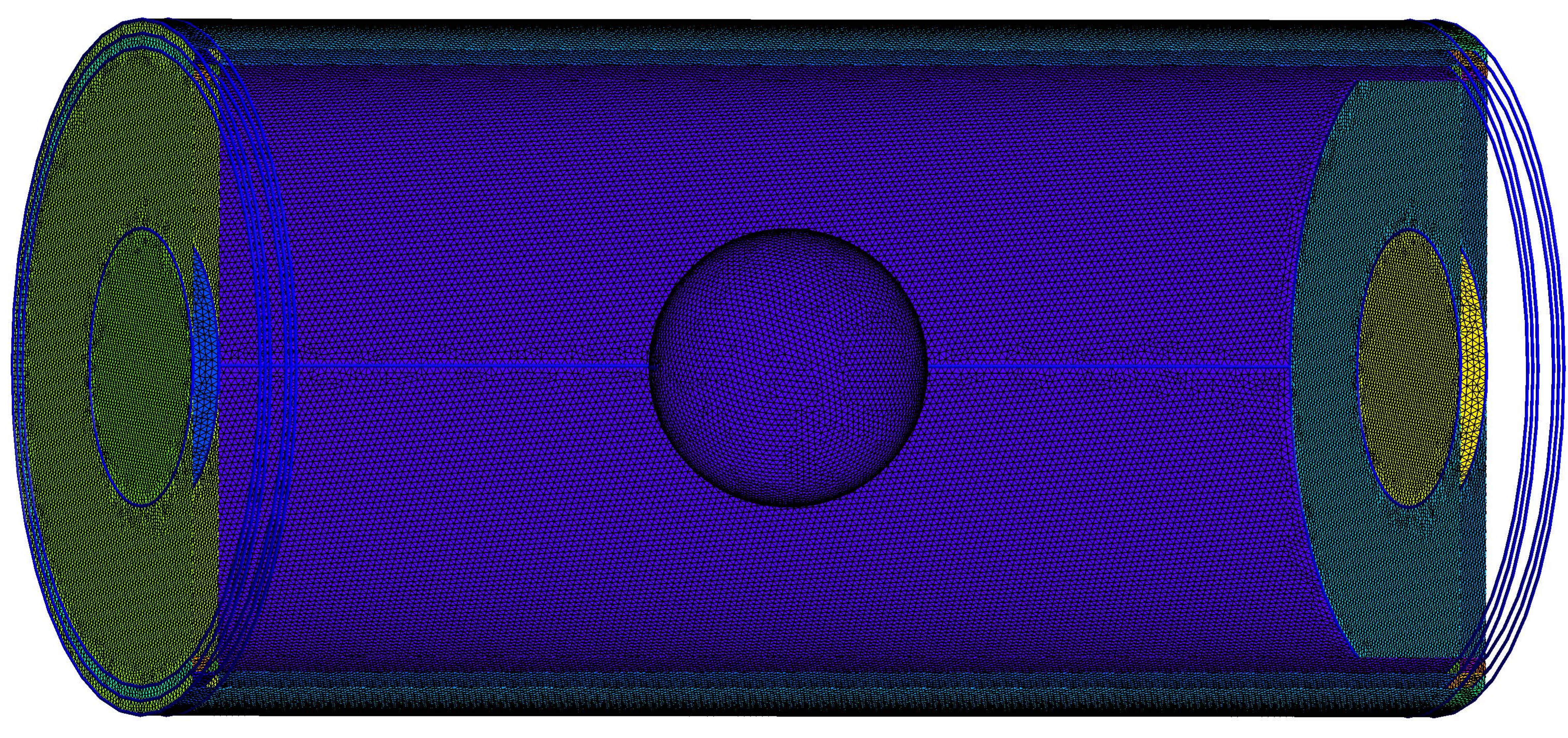}}
  \hspace{5mm}
  \subfigure[Volume mesh]{\includegraphics[width=0.45\textwidth]{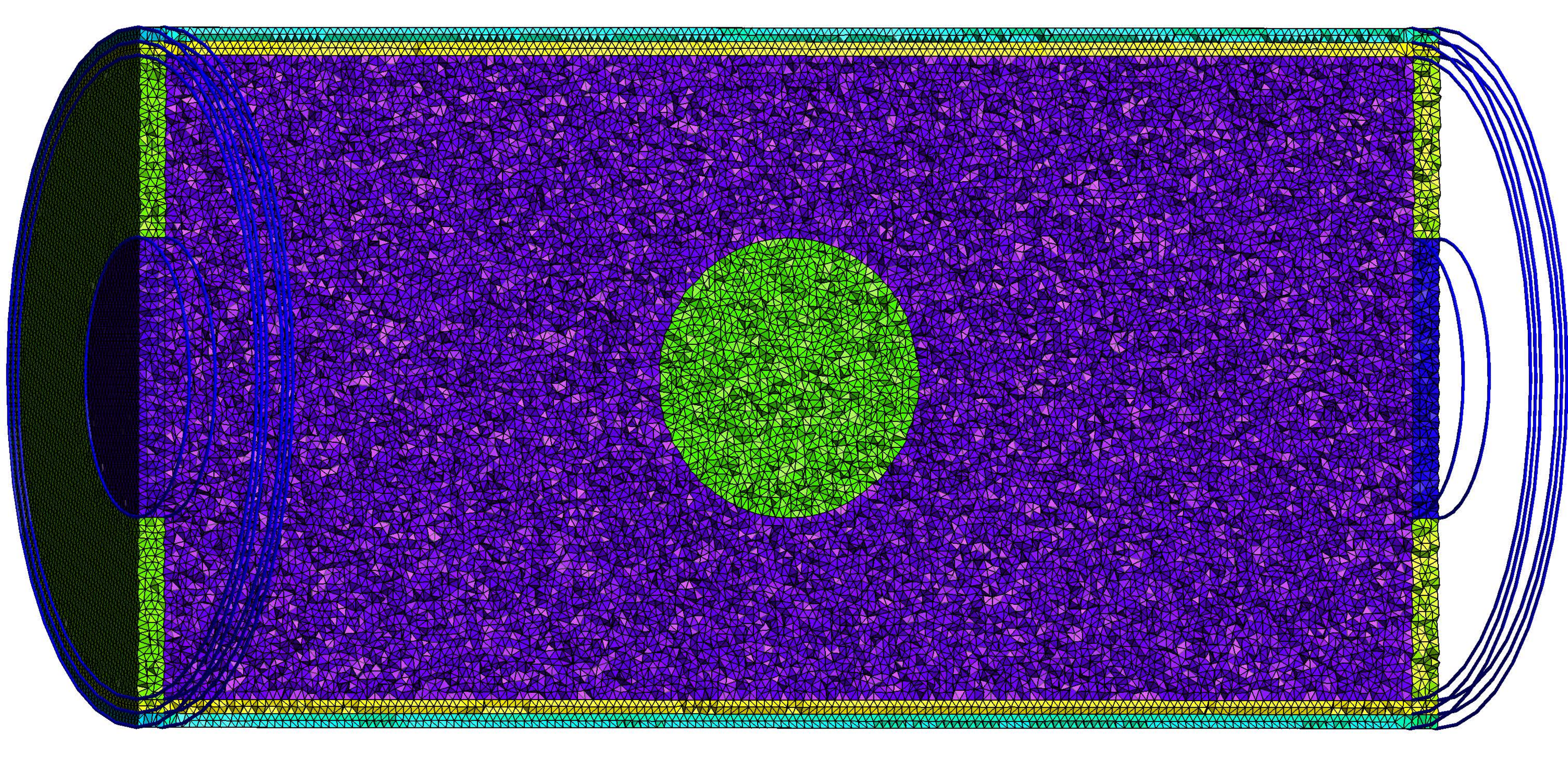}}
  \caption{Mesh distribution of the 3D hohlraum problem.}
  \label{fig_hohlraum3d1}
\end{figure}

\begin{figure}
  \centering
  \subfigure[$t=1\text{ns}$]{\includegraphics[width=0.325\textwidth]{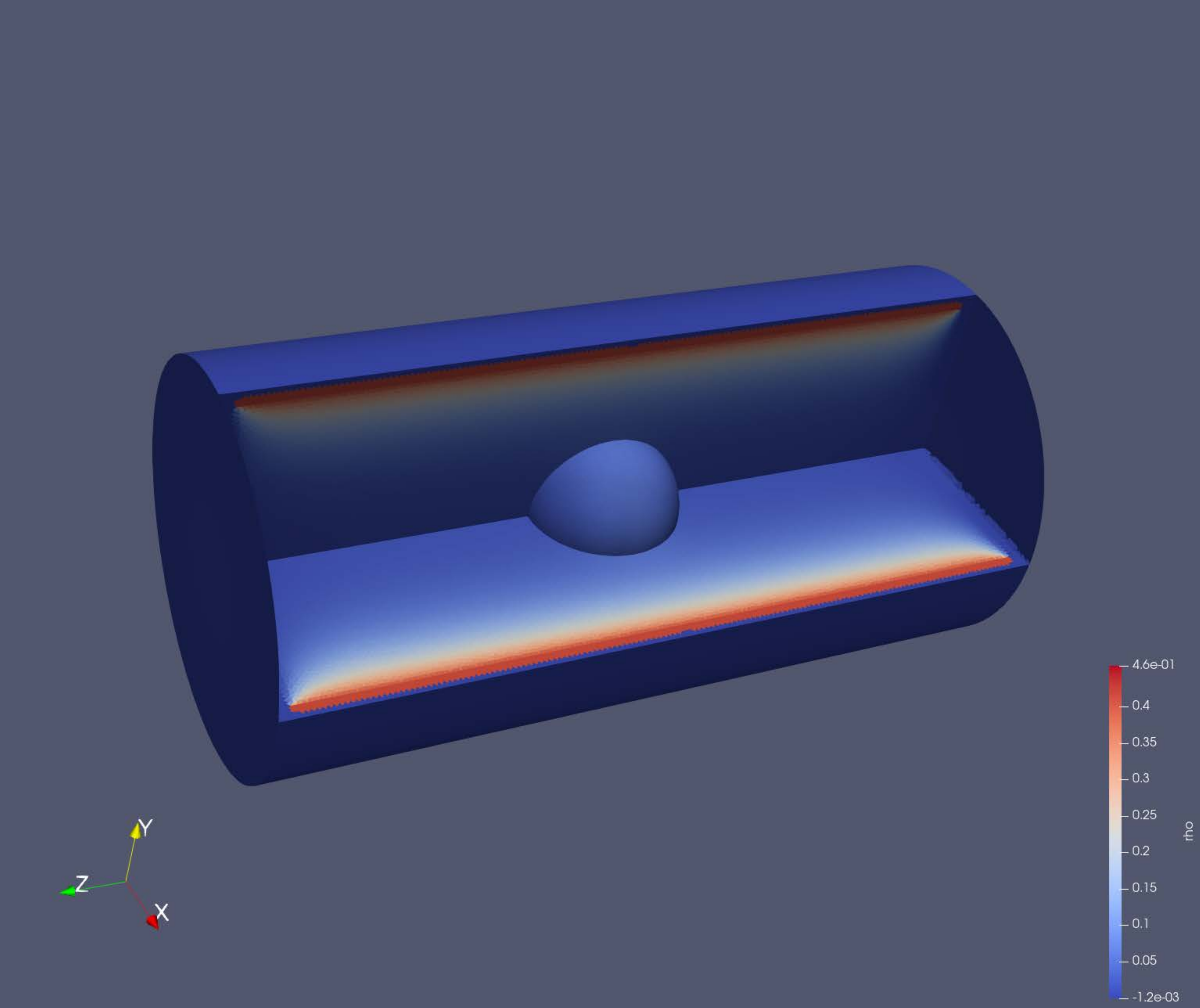}}
  \subfigure[$t=5\text{ns}$]{\includegraphics[width=0.325\textwidth]{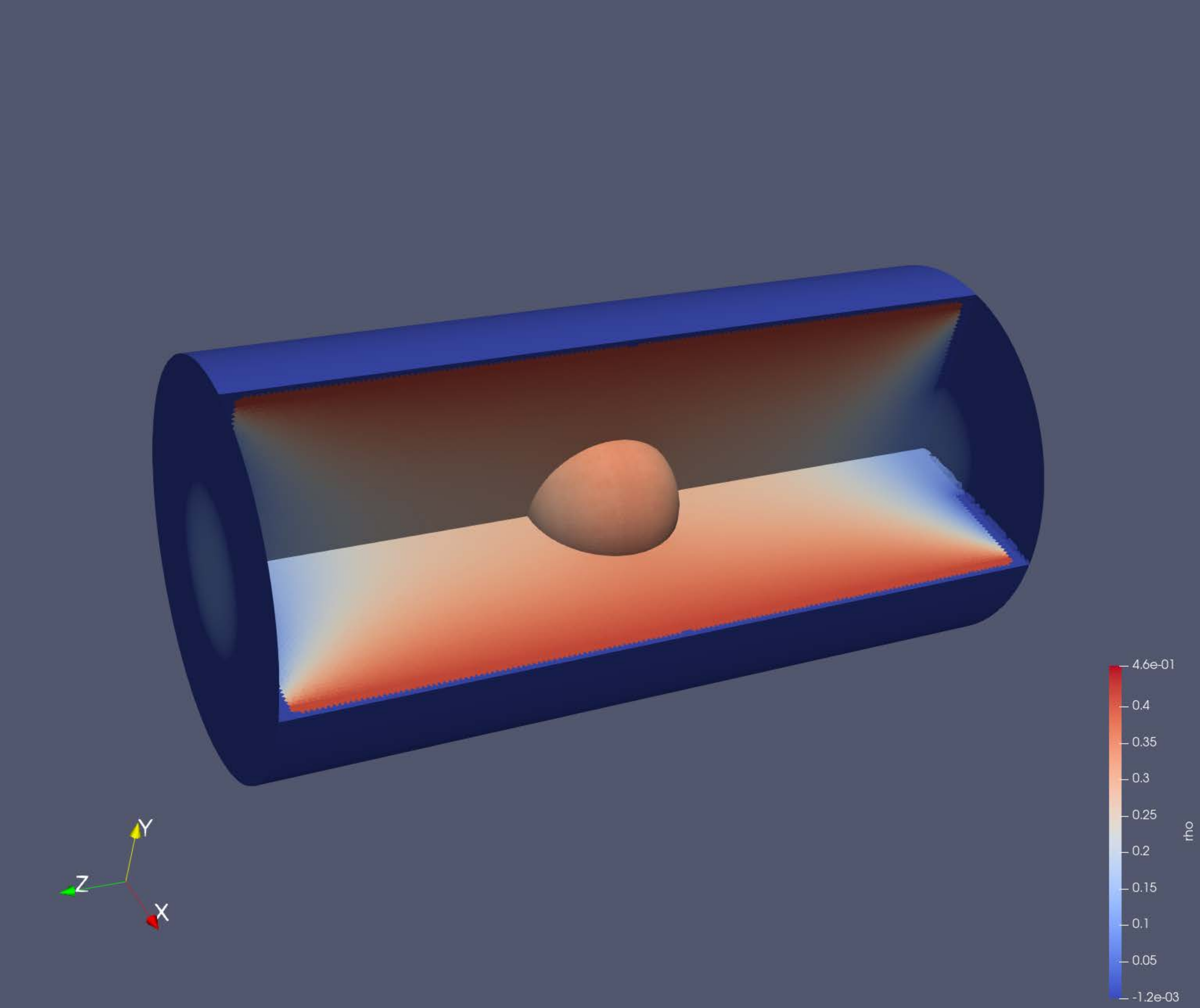}}
  \subfigure[$t=10\text{ns}$]{\includegraphics[width=0.325\textwidth]{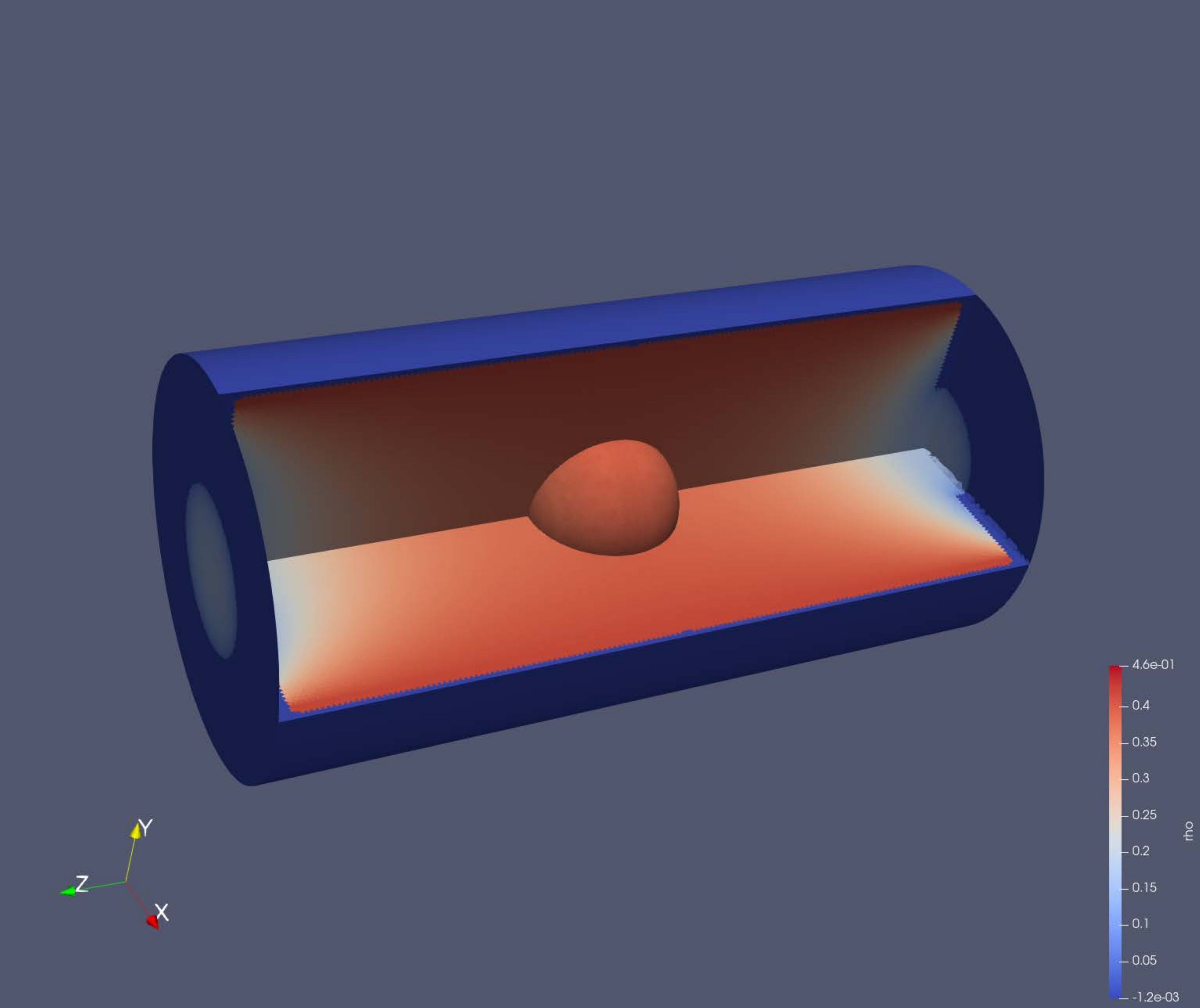}}
  \caption{Material temperature distribution in hohlraum at $t=1,5,10\text{ns}$.}
  \label{fig_hohlraum3d2}
\end{figure}

\begin{figure}
  \centering
  \subfigure[$t=1\text{ns}$]{\includegraphics[width=0.325\textwidth]{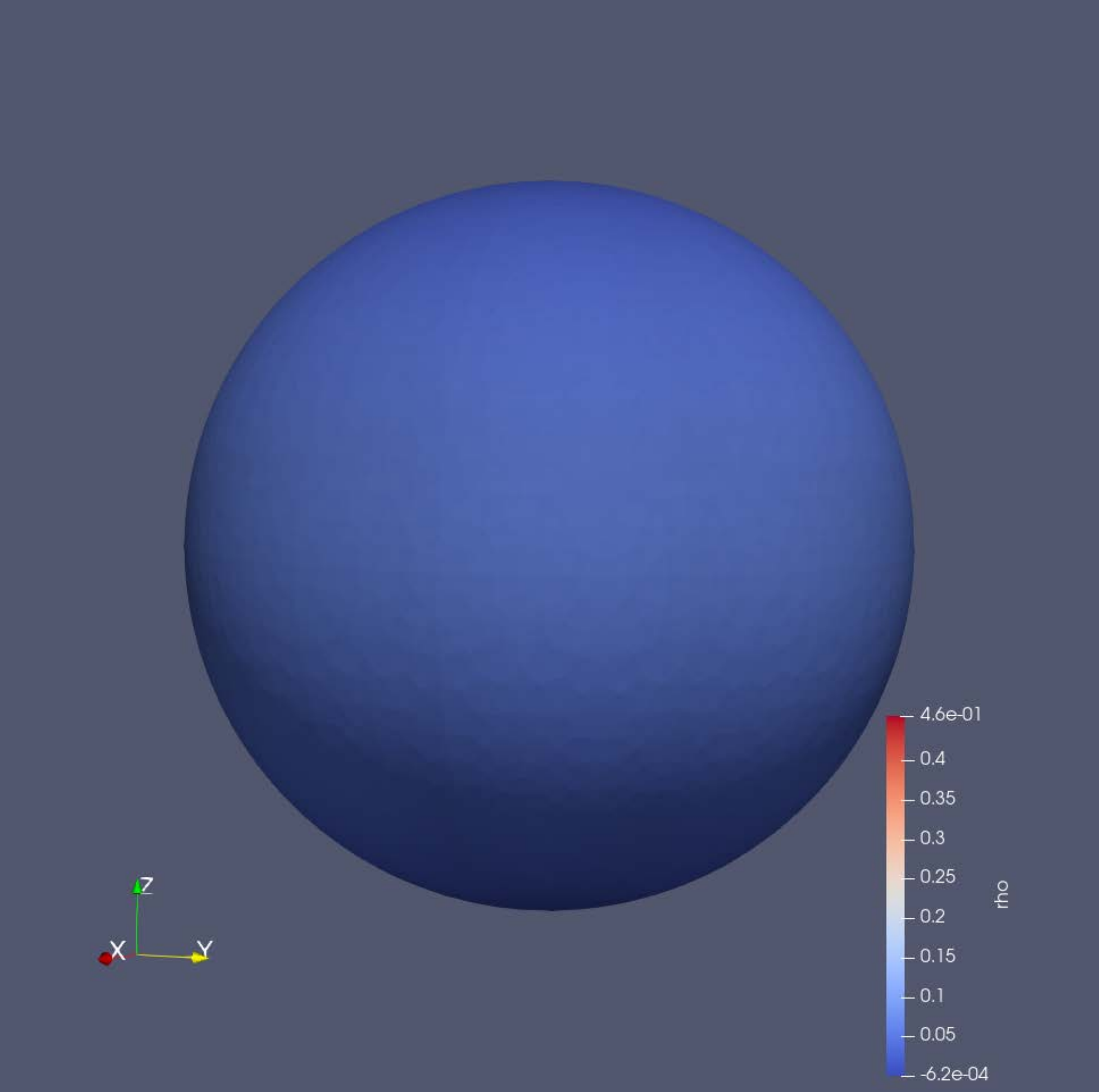}}
  \subfigure[$t=5\text{ns}$]{\includegraphics[width=0.325\textwidth]{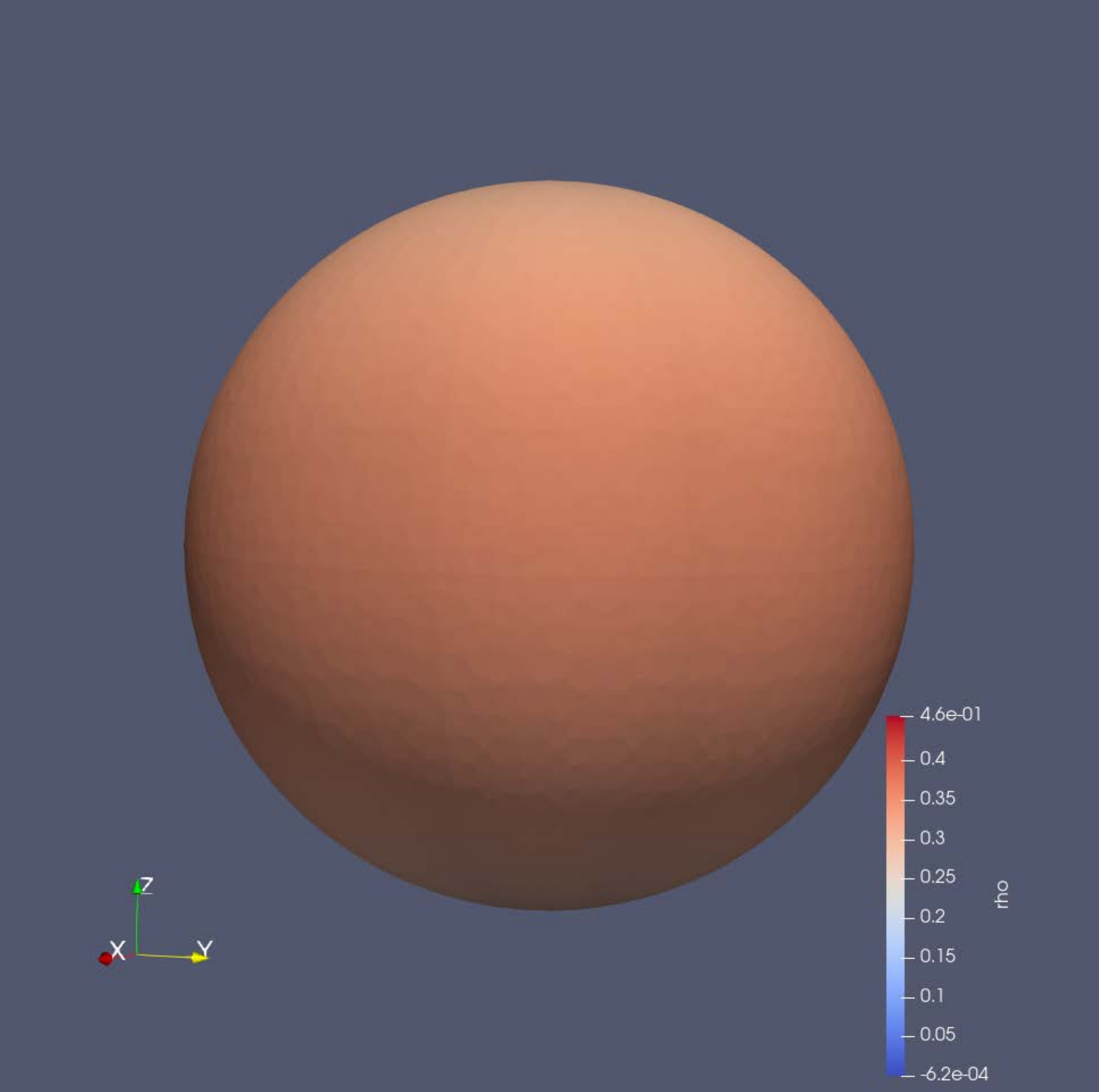}}
  \subfigure[$t=10\text{ns}$]{\includegraphics[width=0.325\textwidth]{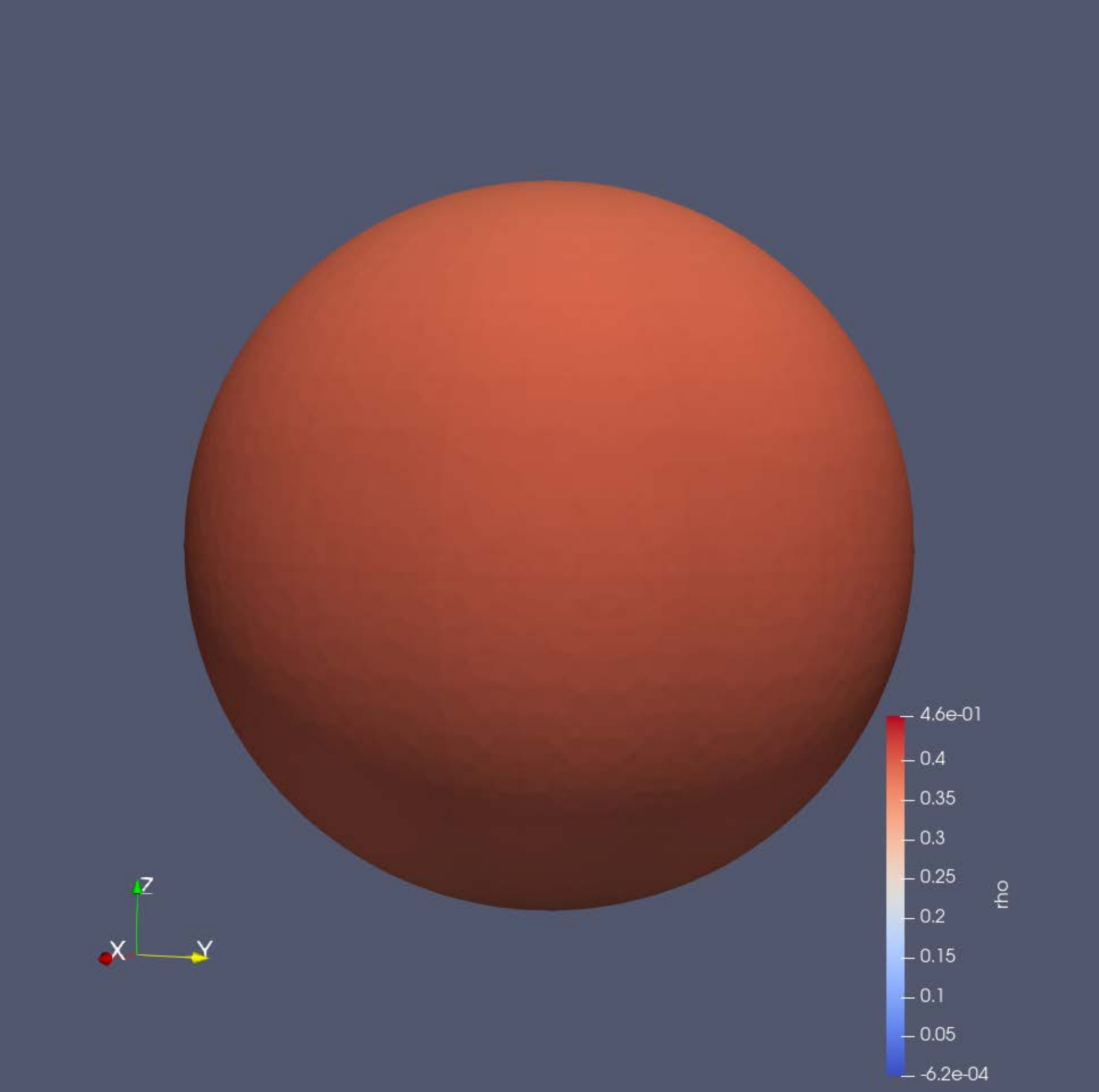}}
  \caption{Material temperature on capsule surface at $t=1,5,10\text{ns}$.}
  \label{fig_hohlraum3d3}
\end{figure}

\section{Conclusion}\label{section_conclusion}
The implicit IUGKWP method is developed in this paper.
In the scheme construction, we propose a physical time step $t_p$, 
which determines the flow physics modeling 
and categorize the particles into long-$\lambda$ and short-$\lambda$ particles.
The algorithm of IUGKWP has three steps. First, the long-$\lambda$ particles are tracked by the IMC method;
Second, the short-$\lambda$ transport process is evolved by an implicit diffusion system;
Third, the photon distribution is closed based on the integral solution.
The scheme has the numerical properties of asymptotic-preserving and regime-adaptive.
Multidimensional codes for the IUGKWP method are developed 
, and the accuracy and efficiency are verified by 2D and 3D tests.
The implicit UGKWP method and codes will be applied and tested 
in the engineering applications of inertial confinement fusion.

\section*{Acknowledgement}
The authors are partially supported by the National Key R\&D Program of China (2022YFA1004500).
Chang Liu is partially supported by the National Natural Science Foundation of China (12102061).
Weiming Li is partially supported by the National Natural Science Foundation of China (12001051).
Yanli Wang is partially supported by the National Natural Science Foundation of China (12171026, U1930402, and 12031013) and 
the Foundation of President of China Academy of Engineering Physics (YZJJZQ2022017).
Peng Song is partially supported by the National Natural Science Foundation of China (12031001).

\bibliographystyle{unsrt}
\bibliography{generalmesh}
\end{document}